\theoremstyle{plain}
\newtheorem{theorem}{Theorem}
\newtheorem{lemma}[theorem]{Lemma}
\newtheorem{proposition}[theorem]{Proposition}
\theoremstyle{definition}
\newtheorem{definition}{Definition}
\newtheorem{example}{Example}
\theoremstyle{remark}
\newtheorem*{remark}{Remark}
\theoremstyle{remark}
\newcommand{\precision}{{\varrho}}
\newcommand{\minmax}{\textsc{Min-Max}}
\newcommand{\intexpr}{\mathcal{E}}
\newcommand{\exit}{\textsf{exit}}
\newcommand{\Range}{\textsf{Range}}
\newcommand{\SVT}{\textsf{SVT}}
\newcommand{\SVTGauss}{\textsf{SVT}-\textsf{Gauss}}
\newcommand{\AboveThreshold}{\textsf{SVT-Gauss-Ge}}
\newcommand{\BelowThreshold}{\textsf{SVT-Gauss}}
\newcommand{\LaplaceAboveThreshold}{\textsf{SVT-Laplace-Ge}}
\newcommand{\LaplaceBelowThreshold}{\textsf{SVT-Laplace}}
\newcommand{\MixOneAboveThreshold}{\textsf{SVT-Mix1-Ge}}
\newcommand{\MixTwoAboveThreshold}{\textsf{SVT-Mix2-Ge}}
\newcommand{\MixOneBelowThreshold}{\textsf{SVT-Mix1}}
\newcommand{\MixTwoBelowThreshold}{\textsf{SVT-Mix2}}
\newcommand{\dipc} {{\sf DiPC}}
\newcommand{\LaplaceNoisyMax}{\textsf{Noisy-Max-Laplace}}
\newcommand{\LaplaceNoisyMin}{\textsf{Noisy-Min-Laplace}}
\newcommand{\NoisyMax}{\textsf{Noisy-Max}}
\newcommand{\NoisyMin}{\textsf{Noisy-Min}}
\newcommand{\NoisyMaxGauss}{\textsf{Noisy-Max-Gauss}}
\newcommand{\NoisyMinGauss}{\textsf{Noisy-Min-Gauss}}
\newcommand{\timeout}{\mathsf{T.O}}
\newcommand{\na}{-}
\newcommand{\experimenterror} {\mathsf{O.M}}
\newcommand{\sumdeltamax}{\Delta_{max}}
\newcommand{\sumdeltamin}{\Delta_{min}}
\newcommand{\mylaplace}{\mathsf{Laplace}}
\newcommand{\gaussian}{\mathsf{Gaussian}}
\newcommand{\expr}{\mathsf{RExp}}
\newcommand{\lap}[1]{\textsf{Lap}(#1)}
\newcommand{\executions}[2]{\llbracket #1 \rrbracket_{#2}}
\newcommand{\mean}{\mu}
\newcommand{\std}{\sigma}
\newcommand{\scale}{b}
\newcommand{\tl}[1]{\mathsf{tl}(#1)}
\newcommand{\distr}{\mathsf{Distr}}
\newcommand{\pstate}{\mathsf{st}}
\newcommand{\randomvar}{X}
\newcommand{\inputset}{\mathcal{I}}
\newcommand{\outputset}{\mathcal{O}}
\newcommand{\randompartialfunc}{\beta}
\newcommand{\dompartialfunc}{\alpha}
\newcommand{\evalconstant}{\mathsf{eval_c}}
\newcommand{\evalrandom}{\mathsf{eval_r}}
\newcommand{\Grandom}{G_{\mathsf{rand}}}
\newcommand{\sourcenodes}{\mathsf{Src}}
\newcommand{\Gconst}{G_{\mathsf{const}}}
\newcommand{\graph}{\mathcal{G}}
\newcommand{\rand}{\mathsf{rand}}
\newcommand{\mathematica}{Wolfram Mathematica\textregistered}
\newcommand{\dependencyGraph}{\mathcal{G}}
\newcommand{\BExp}{\mathsf{BExp}}
\newcommand{\unresolved}{\mathsf{Unknown}}
\newcommand{\ourtool}{\textsf{DiPApprox}}
\newcommand{\grammarskip}{\textsf{skip}}
\newcommand{\dv}{\mathsf{\bf x}}
\newcommand{\Prog}{\mathsf{P}}
\newcommand{\ProgEps}[1][\epsilon]{\Prog_{#1}}
\newcommand{\intprob}{I}
\newcommand{\diffprivate}{\textsf{DP}}
\newcommand{\notdiffprivate}{\textsf{Not\_DP}}
\newcommand{\run} {\rho}
\newcommand{\FLINT} {\textsc{FLINT}}
\newcommand{\PLY}{\textsc{PLY}}
\newcommand{\igraph}{\textsc{igraph}}
\newcommand{\Ifs}{\textsf{if}}
\newcommand{\Thens}{\textsf{then}}
\newcommand{\eds}{\textsf{end}}
\newcommand{\Elses}{\textsf{else}}
\newcommand{\ifstatement}[3]{\Ifs \, #1\, \Thens\, #2 \, \Elses\, #3\, \eds}
\newcommand{\cR}{\mathcal{R}}
\DeclarePairedDelimiter{\abs}{\lvert}{\rvert}
\newcommand{\integers}{\mathbb{Z}}
\newcommand{\Rats}{\mathbb{Q}}
\newcommand{\Reals}{\mathbb{R}}
\newcommand{\Nats}{\mathbb{N}}
 \newcommand{\sdom}{\mathsf{DOM}}
\newcommand{\euler}{e}
\newcommand{\eulerv}[1]{\ensuremath{\euler^{#1}}}
\newcommand{\cU}{\mathcal{U}}
\newcommand{\cV}{\mathcal{V}}
\newcommand{\cX}{\mathcal{X}}
\newcommand{\Prob}{\mathsf{Prob}}
\newcommand{\st}  {\mathbin{|}}
\newcommand{\gauss}[1]{\mathcal{N}{(#1)}}
\newcommand{\set}[1]{\{#1\}}
\newcommand{\true}{\mathsf{true}}
\newcommand{\false}{\mathsf{false}}
\newcommand{\rmv}[1] {}
\newcommand{\upperbound}{\mathsf{UB}}
\newcommand{\lowerbound}{\mathsf{LB}}
\newcommand{\compute}[1][\precision]{\mathsf{ComputeProb}_{#1}}
\newcommand{\computescale}[1][\precision]{\mathsf{ComputeScaleProb}_{#1}}
\newcommand{\store}{\mathsf{store}}
\newcommand{\stores}{\mathsf{store}\_\mathsf{scale}}
\newcommand{\partialfunc}{\rightharpoonup}
\newcommand{\prbfn}[1] {\Prob[#1]}
\newcommand{\bv}{\mathsf{b}}
\newcommand{\s}[1] {\mathsf{#1}}
\newcommand{\rv}{\mathsf{\bf r}}
\newcommand{\deltaadjs}[2]{\delta_{#1,#2}}
\newcommand{\pr}[1]{\Prob[#1]}
\newcommand{\ourlang}{\textsf{DiPGauss}}
\newcommand{\newlanggauss}{\textsf{DiPGauss}}
\newcommand{\VerifyDP}{\ensuremath{\mathsf{VerifyDP}_{\precision}}}
\newcommand{\VerifyDPPair}{\ensuremath{\mathsf{VerifyPair}}}
\newcommand{\VerifyDPFunc}[1]{\VerifyDPPair(#1)}
\newcommand{\GenExpr}{\textsf{GenExpr}}
\newcommand{\GenExprFunc}[1]{\textnormal{\GenExpr(#1)}}
\newcommand{\removelatexerror}{\let\@latex@error\@gobble}
\newcommand{\brown}[1]{\textcolor{brown}{#1}}
\newcommand{\unit}{\ensuremath{[0,1]}}
 \newcommand{\thr}{\mathsf{th}}
 \newcommand{\bpr}[1]{\mathsf{bpr}(#1)}
  \newcommand{\tpr}[1]{\mathsf{tpr}(#1)}
  \newcommand{\adjacent}{\Phi}
\newcommand{\Depsilon}{\ensuremath{\epsilon_{\mathsf{prv}}}}
\newcommand{\mRange}{$m$-\textsf{Range-Gauss}}  
\newcommand{\kminmax}{$k$-\textsf{Min-Max-Gauss}}
\newcommand{\NonPrivAboveThresholdOne}{\textsf{SVT-Gauss-Ge-Leaky-1}}
\newcommand{\NonPrivBelowThresholdOne}{\textsf{SVT-Gauss-Leaky-1}}
\newcommand{\NonPrivAboveThresholdTwo}{\textsf{SVT-Gauss-Ge-Leaky-2}}
\newcommand{\NonPrivBelowThresholdTwo}{\textsf{SVT-Gauss-Leaky-2}}
\newcommand{\NonPrivBelowThresholdLaplace}{\textsf{SVT-Laplace-Leaky}}
\newcommand{\criticaldef}[4]{\ensuremath{\mathsf{D}_{#1,#2,#3,#4}}}
\newcommand{\critical}{\criticaldef{\Prog}{\epsilon}{\Depsilon}{\adjacent}}
\newcommand{\CheckDP}{{\sf CheckDP}}
\newcommand{\LaplaceMrange}{$m$-\textsf{Range-Laplace}}
\newcommand{\LaplaceKminmax}{$k$-\textsf{Min-Max-Laplace}}
\begin{document}

\title{Approximate Algorithms for Verifying Differential Privacy with Gaussian Distributions}

\author{Bishnu Bhusal}
\orcid{0000-0001-7522-5878}
\affiliation{%
  \institution{University of Missouri}
  \city{Columbia}
  \country{USA}
}
\email{bhusalb@missouri.edu}

\author{Rohit Chadha}
\orcid{0000-0002-1674-1650}
\affiliation{%
  \institution{University of Missouri}
  \city{Columbia}
  \country{USA}
}
\email{chadhar@missouri.edu}

\author{A. Prasad Sistla}
\orcid{0009-0005-8331-7912}

\affiliation{%
  \institution{University of Illinois at Chicago}
  \city{Chicago}
  \country{USA}
}
\email{sistla@uic.edu}

\author{Mahesh Viswanathan}
\orcid{0000-0001-7977-0080}
\affiliation{%
  \institution{University of Illinois Urbana-Champaign}
  \city{Urbana}
  \country{USA}
}
\email{vmahesh@illinois.edu}


\begin{abstract}
The verification of differential privacy algorithms that employ Gaussian distributions is little understood. 
This paper tackles the challenge of verifying 
such programs by introducing a novel approach to approximating probability distributions of loop-free programs that sample from both discrete and continuous distributions with computable probability density functions, including Gaussian and Laplace. We establish that verifying $(\epsilon,\delta)$-differential privacy for these programs is \emph{almost decidable}, meaning the problem is decidable for all values of $\delta$ except those in a finite set. 
Our verification algorithm is based on computing probabilities to any desired precision by combining integral approximations, and tail probability bounds. The proposed methods are implemented in the tool, {\ourtool}, using the {\FLINT} library for high-precision integral computations, and incorporate optimizations to enhance scalability. We validate {\ourtool} on fundamental privacy-preserving algorithms, such as Gaussian variants of the Sparse Vector Technique and Noisy Max, demonstrating its effectiveness in both confirming privacy guarantees and detecting violations.
\end{abstract}



\begin{CCSXML}
<ccs2012>
   <concept>
       <concept_id>10002978.10002986.10002990</concept_id>
       <concept_desc>Security and privacy~Logic and verification</concept_desc>
       <concept_significance>500</concept_significance>
       </concept>
   <concept>
       <concept_id>10003752.10010124.10010138.10010143</concept_id>
       <concept_desc>Theory of computation~Program analysis</concept_desc>
       <concept_significance>500</concept_significance>
       </concept>
 </ccs2012>
\end{CCSXML}

\ccsdesc[500]{Security and privacy~Logic and verification}
\ccsdesc[500]{Theory of computation~Program analysis}

\keywords{Differential Privacy, Verification, Gaussian Distribution, Tail Bounds}


\maketitle

\section{Introduction}
\label{sec:intro}
Differential Privacy~\cite{dmns06} 
is increasingly being adopted as a standard method for maintaining the privacy of individual data in sensitive datasets. In this framework~\cite{DR14}, data is managed by a trusted curator and queried by a potentially dishonest analyst. The goal is to ensure that query results remain nearly ``unchanged'' whether or not an individual's data is included, thereby preserving privacy even against adversaries with unlimited computational power and auxiliary data. However, it is challenging to design protocols that meet this high bar because reasoning about differential privacy is subtle, complex, and involves precise, quantitative reasoning. Many proposed algorithms and proofs have been found flawed~\cite{hr10,gru12,cm15,lyu2016understanding}, leading to growing interest in programming languages, type systems, and formal verification tools for privacy analysis~\cite{mc09,rp10,airavat,gupt,bkob12,dfuzz,Ba14,Ba15,Ba16,bgghs16,bfgaghs16,ZK17,LiuWZ18,AH18,WDWKZ19,GaboardiNP19,bcjsv20,bcksv21,csv21,sato-popl,sato-ahl,csv21,csvb23}.

Despite advances, automated verification of differential privacy still faces significant challenges. Differential privacy is often ensured by adding noise, commonly from Laplace and Gaussian (or Normal) distributions. While there exist several automated tools for verifying differential privacy with Laplace distributions~\cite{DingWWZK18,bcjsv20,CheckDP,BichselGDTV18,csv21,csvb23}, we are not aware of any automated tool that can effectively analyze programs that incorporate Gaussian noise. 


This paper studies the problem of automatically verifying if a program using Gaussian distributions meets the requirement of differential privacy. While the problem of checking differential privacy is undecidable even for programs that only toss fair coins~\cite{bcjsv20}, a decidable subclass of programs has been identified~\cite{bcjsv20}. However, programs in this subclass are not allowed to sample from Gaussian distributions; they have inputs and outputs over finite domains, and can sample from 
Laplace distributions. The algorithm in~\cite{bcjsv20} requires handling symbolic expressions with integral functionals, and does not generalize to Gaussian distributions as the Gaussian distribution lacks closed-form integral-free expressions for the cumulative distribution.  This paper presents an approach to reason about programs that can also sample from the Gaussian distributions. We do assume inputs and outputs are over finite domains, as in~\cite{bcjsv20}.  

Before presenting our contributions, let us recall the basic setup of differential privacy. The differential privacy program/mechansim is usually parameterized by $\epsilon,$ henceforth referred to as \emph{privacy parameter} which controls the noise added during the computation. There are two additional quantities that influence the protocol and its correctness: a \emph{privacy budget} $\Depsilon > 0$, and \emph{error parameter} $\delta \in \unit$, which accounts for the probability of privacy loss.\footnote{Often, $\Depsilon$ is expressed as a function of $\delta,\epsilon.$ A popular choice of $\Depsilon$ is $\epsilon$, which often occurs in case of pure differential privacy.} 
Given a binary relation $\adjacent$ on inputs called an \emph{adjacency relation}, and taking $\prbfn{\epsilon,\Prog(u)\in F}$ to be the probability that $\Prog$ outputs $o \in F$ on input $u$, $\Prog$ is said to be \emph{$(\Depsilon,\delta)$-differentially private} if for each pair $(u,u')\in \adjacent$, and measurable subset of outputs $F$, we have that 
\begin{equation} \label{eq:diffprivacy}\prbfn{\epsilon, \Prog(u)\in F}-\eulerv{\Depsilon}\prbfn{\epsilon,\Prog(u')\in F} \leq \delta.\end{equation}
When $\delta=0$, $(\Depsilon,0)$-differential privacy is referred to as pure $\Depsilon$-differential privacy.  
We shall say that $\Prog$ is \emph{$(\Depsilon,\delta)$-strictly differentially private} if the inequality in (\ref{eq:diffprivacy}) holds \emph{strictly} (i.e, we require $<$ instead of $\leq$).

\rmv{
Assume the program $\Prog$ takes inputs and outputs from finite domains. Thus, if there was an {\lq\lq}oracle{\rq\rq} that outputs $\prbfn{\epsilon,\Prog,u,o}$, the probability of the program $\Prog$ outputting $o$ on input $u,$ we can check $(\Depsilon,\delta)$-differential privacy: enumerate all adjacent input pairs $u,u',$ enumerate all possible subsets of outputs $S$, and check that $\sum_{u\in S}\prbfn{\epsilon,\Prog,u,S}\leq \eulerv \epsilon \prbfn{\epsilon,\Prog,u',S} +\delta$. If the continuous distributions used in $\Prog$ are all Laplacian distributions, this {\lq\lq}oracle{\rq\rq} does indeed exist~\cite{bcjsv20}, and are ratios of polynomials in $\eulerv{\epsilon}$. Now, the resulting formulas can be checked by appealing to results in~\cite{mccallum2012deciding}. However, such an oracle may not exist when the values sampled are from the Gaussian (normal) distribution. For example, no known integral-free expression exists for the cumulative distribution of the Gaussian distribution. Thus, even the probability of a normally distributed random variable $\cX$ being bigger than a given constant $c$ cannot be written in an integral-free form. 
}
\sloppy
\paragraph*{Contributions}
Given fixed $\epsilon>0$, let $\prbfn{\epsilon,u,o,\Prog}=\prbfn{\epsilon,\Prog(u)\in \set{o}}$ denote the probability of $\Prog$ returning a single output $o$ on input $u.$
Let $\compute(\epsilon,u,o,\Prog)$ be a function that returns an interval $[L,U]$ such that $\prbfn{\epsilon,u,o,\Prog} \in [L,U]$ and $|U-L| \leq {2^{-\precision}}$; in other words, $\compute(\cdot)$ approximates $\prbfn{\epsilon,u,o,\Prog}$ with desired precision $\precision$. Our first result establishes that if the function $\compute(\cdot)$ is computable for a class of programs, then the problem of determining if program $\Prog$ in this class is non-$(\Depsilon,\delta)$-differential privacy is recursively enumerable (See Theorem~\ref{thm:decidability} on Page \pageref{thm:decidability}). Furthermore, we show that the computability of $\compute(\cdot)$ also implies that checking if a $\Prog$ is $(\Depsilon,\delta)$-strict differentially private, is recursively enumerable (See Theorem~\ref{thm:decidability} on Page \pageref{thm:decidability}). These two observations suggest that verifying $(\Depsilon,\delta)$-differential privacy is \emph{almost decidable} --- the problem of checking if $\Prog$ is $(\Depsilon,\delta)$-differentially private is \emph{decidable} for all values of $\delta$, \emph{except} those in a finite set $\critical$ that depends on the program $\Prog,$ $\epsilon,\Depsilon$ and $\adjacent$. Informally, $\critical$ is the set of $\delta$ for which the inequality (\ref{eq:diffprivacy}) is an equality; the precise definition is given in Definition~\ref{def:critical} on Page~\pageref{def:critical}.

Next, we observe that the function $\compute(\epsilon,u,o,\Prog)$ is computable for a large class of programs that sample from distributions such as the Gaussian distribution. Our class, called {\newlanggauss}, consist of \emph{loop-free} programs, with finite domain inputs and outputs. These programs can sample from 
continuous distributions, provided the distributions have \emph{finite} means and \emph{finite} variances~\footnote{A continuous distribution may not have finite mean or finite variance.} and \emph{computable}~\footnote{By computable, we mean computable as defined in recursive real analysis~\cite{ko-real}.} probability density functions. This includes commonly used distributions like Gaussian and Laplace. This class of programs differs from the class presented in~\cite{bcjsv20} --- while they are loop-free, they allow sampling from Gaussian distributions. Despite their simple structure, {\newlanggauss} programs include many widely used algorithms such as the Sparse Vector Technique (SVT)~\cite{svt}, Noisy Max~\cite{DingWWZK18} and their variants with Gaussian mechanisms~\cite{svt-gauss}. These algorithms are used in applications like ensuring the privacy of Large Language Models (LLMs)~\cite{google-svt-llm}.

To describe our algorithm for $\compute(\epsilon,u,o,\Prog)$, observe that  $\prbfn{\epsilon,u,o,\Prog}$ can be written as a sum of the probabilities of \emph{execution paths} of $\Prog$ on input $u$ leading to output $o$. Since $\Prog$ is loop-free, $\Prog$ has only a finite number of execution paths on each input $u$. The probability of each execution path can be written as sum of iterated integrals. If none of the integrals have $\infty$ or $-\infty$ as upper or lower limits, given the assumption that the probability density functions are computable, these integrals can be evaluated to any desired precision. 

We can avoid having $\infty$ or $-\infty$ as limits of integrals by appealing to tail bounds derived from Chernoff or Chebyshev inequalities as follows. 
Given a threshold $\thr>0$, we can write the probability of an execution path $\tau$ as 
$\bpr{\epsilon,\tau,\thr} +  \tpr{\epsilon,\tau,\thr}$. Here $\bpr{\cdot}$ is the probability of the execution $\tau$ under the constraint that all sampled values $\randomvar_{\rv_1},\ldots,\randomvar_{\rv_n}$ in the execution remain within $\thr\cdot\std_i$ from $\mean_i$ where $\mean_i$ and $\std_i$ are the  mean and standard deviation of the distribution of $\randomvar_{\rv_i}$, respectively. The term $\tpr{\cdot}$ accounts for the tail probability, capturing cases where at least one sampled value deviates by at least $(\thr\cdot\std)$ from $\mean$. Now, $\bpr{\epsilon,\tau,\thr}$ can again be written as a sum of iterated integrals with finite limits. The assumption of computability ensures that $\bpr{\cdot}$ can be computed to any desired precision.  Moreover, by Chebyschev's inequality, we can always select $\thr$ such that $\tpr {\epsilon,u,o,\thr}$ is arbitrarily small. This guarantees that $\prbfn{\epsilon,u,o,\Prog}$ can be computed to any required degree of precision.

Using our algorithm for $\compute(\epsilon,u,o,\Prog)$ and algorithm for almost deciding $(\Depsilon,\delta)$-differential privacy, we implement an algorithm $\VerifyDP$ that returns one of 3 outputs: $\diffprivate$, $\notdiffprivate$, or $\unresolved$. Crucially, as the precision parameter $\precision$ decreases, the likelihood of getting $\unresolved$ diminishes. Our implementation of {\VerifyDP} incorporates several optimizations to improve scalability for practical examples. 

 First, differential privacy requires verifying that inequality (\ref{eq:diffprivacy}) holds for \emph{all} pairs of adjacent inputs $u,u'$ and \emph{all} subsets $F$ of outputs. A key insight is that this verification can be reduced to checking a modified equation (See Lemma~\ref{lem:DiffPrivacyRephrased} on Page~\pageref{lem:DiffPrivacyRephrased}) where $F$ is taken as the set of all outputs, leading to \emph{exponential} 
reduction in complexity. Intuitively, it is enough to check the inequality (\ref{eq:diffprivacy}) for the set $F$ of all outputs $o$ for which  $\prbfn{\epsilon, \Prog(u)=o}-\eulerv{\Depsilon}\prbfn{\epsilon,\Prog(u')=o}> 0.$

Next, we observe that the integral nesting depth significantly impacts performance while computing the value of $\compute(\epsilon,u,o,\Prog).$
Thus, we introduce a heuristic to reduce it. Intuitively,  each variable being integrated in an integral $I$ during  $\compute(\epsilon,u,o,\Prog)$ represents a value sampled from an independent distribution in the program. We call such variables, independent random variables. For each nested integral $I$, we can define a directed acyclic graph, called the \emph{dependency graph} of $I$, on the independent random variables. An edge from $\rv$ to $\rv'$ in the dependency graph indicates that a) the integration over variable $\rv'$ is nested within integration over $\rv$ in $I,$ and that b) $\rv$ appears in either the upper or lower limit in the integration over $\rv'.$ 

We observe that if the sets of variables reachable from two variables $\rv_1$ and $\rv_2$ in the dependency graph of $I$ are disjoint, the nested depth of $I$ can be reduced by separating the integrations over $\rv_1$ and $\rv_2$. Exploiting this observation, we give an algorithm based on the topological sorting of the dependency graph which yields an integral expression $I'$ that has the same value as $I,$ but lower nesting depth.  
Applying this heuristic, our prototype tool was able to rewrite the integrals for the {\SVTGauss} algorithm~\cite{svt-gauss} with at most $3$ nested integrals, independent of the number of input variables. Without this optimization, the nested depth scales with the number of input variables. 

We implemented $\VerifyDP$ in a tool, {\ourtool}. Our implementation leverages the {\FLINT} library~\cite{flint} to evaluate nested integrals with finite limits. {\FLINT} enables rigorous arithmetic with arbitrary precision using ball enclosures, where values are represented by a midpoint and a radius. Using the above outlined approach, our tool successfully verified variants of the Sparse Vector Technique~\cite{svt,svt-gauss}, Noisy Max~\cite{DR14}, $k$-\minmax~\cite{csvb23}, $m$-\Range~\cite{csvb23} with Gaussians (and Laplacians) across varying input lengths. Additionally, it confirmed the non-differential privacy of insecure versions of the Sparse Vector Technique with Gaussians. These experiments highlight the potential of our approach for the automated verification of differential privacy in programs that sample from complex continuous distributions.  {\ourtool} is available for download at ~\cite{artifact}.

\paragraph{Organization} The rest of the paper is organized as follows. Preliminary mathematical notation and definitions are given in Section~\ref{sec:preliminaries}. Section~\ref{sec:mot_example} discusses the variant of Sparse Vector Technique with Gaussians~\cite{svt-gauss}. Section~\ref{sec:our_language} introduces the syntax and semantics of {\ourlang}, the language we use to specify differential privacy algorithms. Section~\ref{sec:checking_dp} presents $\VerifyDP$, assuming that there is an algorithm that computes $\compute(\cdot).$ It also presents the rephrasing of differential privacy. Section~\ref{sec:compute_prob} shows how $\compute(\cdot)$ can be implemented and presents the integral optimization discussed above. We also present our decidability results here in this section. Section~\ref{sec:experiments} discusses {\ourtool} and presents the experimental evaluation. Section~\ref{sec:related} discusses related work, and our conclusions are presented in Section~\ref{sec:conclusion}.  

An extended abstract of the paper~\cite{CCSVersion} is going to appear in the  Proceedings of the 2025 ACM SIGSAC Conference on Computer and Communications Security (CCS' 25).
This paper consists of details of the experiments omitted in~\cite{CCSVersion}.

\rmv{
\prasad{An alternate way of stating it is as follows: There exists a co-finite set $W_{\Prog,\epsilon}\subseteq \unit$ such that for all $\delta$ in this set, checking $(\Depsilon,\delta)$ differential privacy is decidable. Stating like this may be better as this corresponds to decidability for almost all values of $\delta\in \unit.$}
\prasad{We dont have to specifyy the set $U_{\Prog,\epsilon}$. We can point to a place on the paper where it is specified. Or we can say that for a given $\Prog$ and $\epsilon$, $U_{\Prog,\epsilon}$ is the set of all $\delta$ for which the inequality (1) becomes an equality.}
}


\section{Preliminaries}
\label{sec:preliminaries}

We denote the sets of real, rational, natural, and integer numbers by $\Reals$, $\Rats$, $\Nats$, and $\integers$, respectively, and the Euler constant by $\euler$. A \emph{partial function} $f$ from $A$ to $B$ is denoted as $f: A \partialfunc B$. We assume that the reader is familiar with probability. For events $E$ and $F$, $\pr{E}$ represents the probability of $E$, and $\pr{E | F}$ the conditional probability of $E$ given $F$. 

\paragraph*{Gaussian (Normal) Distribution} The one dimensional Gaussian distribution, denoted by $\gauss{\mu, \sigma}$, is parameterized by the mean $\mean \in \Reals$ and the standard deviation $\std>0$.  Its probability density function (PDF), $f_{\mu, \sigma}(x)$, is defined as:
\begin{equation}
f_{\mean, \std}(x) = \frac{1}{\std \sqrt{2\pi}} \eulerv{-\frac{(x - \mean)^2}{2\std^2}}
\end{equation}

\paragraph*{Laplace distribution}
The Laplace distribution denoted $\lap{\mean, \scale}$ is parameterized by mean $\mean$ and the scaling parameter $b \geq 0$. Its probability density function (PDF), $g_{\mean, \scale}(x)$, is defined as:
\begin{equation}
 g_{\mean, b}(x) = \frac{1}{2\scale}\eulerv{-\frac{|x-\mean|}{\scale}}
\end{equation}
The standard deviation of $\lap{\mean,\scale}$ is ${\sqrt 2 b}.$

\paragraph*{Tail Bounds} 
In our framework, execution probabilities are computed using nested definite integrals. Apriori, integrals may involve $\infty$ and $-\infty$ as upper and lower limits, respectively. As we will see (Section~\ref{sec:compute_prob}), we shall use tail bounds on probabilities to approximate the integral as a sum of definite integrals with finite limits. 

Given a random variable $\randomvar$  with finite mean $\mean$ and non-zero standard deviation $\std$ and $\thr>0,$ we say that the \emph{two-sided tail probability} is 
$$\tl{\thr,\randomvar,\mean,\std}=\pr{\abs{\randomvar-\mean}>\thr\cdot\std}.$$ 
For Laplace distribution, we have that 
$ \tl{\thr,\randomvar,\mean,\std}= \eulerv{- {\thr{\sqrt 2}}}.$

If $\randomvar$ is a Gaussian (also known as normal) random variable, we obtain the following bounds on the tail probabilities using Chernoff bounds:
$\tl{\thr,\randomvar,\mean,\std} \leq 2\eulerv{-\frac{\thr^2}{2}}.$

For a general random variable with finite mean $\mean$ and standard deviation $\std$, Chebyschev's inequality yields that  $\tl{\thr,\randomvar,\mean,\std}\leq \frac 1 {\thr^2}.$
Observe that all the bounds on tail probabilities discussed here monotonically decrease to $0$ as  $\thr$ tends to $\infty.$

\paragraph*{Approximate Differential Privacy} 

Differential privacy~\cite{dmns06} is a framework that enables statistical analysis of databases containing sensitive personal information while protecting individual privacy. A randomized algorithm $\Prog$, called a \emph{differential privacy mechanism}, mediates the interaction between a (potentially dishonest) data analyst and a database $D$. The analyst submits queries requesting aggregate information such as means, and for each query, $\Prog$ computes its response using both the actual database values and random sampling to produce ``noisy'' answers. While this approach ensures privacy, it comes at the cost of reduced accuracy. The amount of noise added by $\Prog$ is controlled by a \emph{privacy parameter} $\epsilon > 0$.

The framework provides privacy guarantees for all individuals whose information is stored in database $D$. This is informally captured as follows. Let $D \setminus \set{i}$ denote the database with individual $i$'s information removed. A secure mechanism $M$ ensures that for any individual $i$ in $D$ and any sequence of possible outputs $\overline{o}$, the probability of $\Prog$ producing $\overline{o}$ remains approximately the same whether querying $D$ or $D \setminus \set{i}$.

Let us define this formally. A differential privacy mechanism is a family of programs $\ProgEps$ whose behavior depends on the privacy parameter $\epsilon$; for notational simplicity, we will often drop the subscript and use $\Prog$ to refer to the programs. Given a set $\cU$ of inputs and a set $\cV$ of outputs, a randomized function $\Prog$ from $\cU$ to $\cV$ takes an input in $\cU$ and returns a distribution over $\cV$. For a measurable set $F\subseteq \cV$, the probability that the output of $\Prog$ on $u$ is in $F$ is denoted by $\pr{\Prog(u)\in F}$ . We assume that $\cU$ is equipped with a binary asymmetric relation $\adjacent \subseteq \cU\times \cU$, called the \emph{adjacency relation}. \emph{Adjacent} inputs $(u_1,u_2) \in \adjacent$ represent query outputs for databases $D$ and $D \setminus \set{i}$ where $i$ is some individual.
\begin{definition} \label{def:differential-privacy} Let $\epsilon,\Depsilon > 0$, $0\leq\delta\leq 1$ and $\adjacent\subseteq \cU\times \cU$ be an adjacency relation. Let $\Prog$ be a randomized algorithm depending on a privacy parameter $\epsilon$ with inputs $\cU$ and outputs $\cV$. We say that $\Prog$ is $(\Depsilon,\delta)$-differentially private with respect to $\adjacent$ if for all measurable subsets $S\subseteq \cV$ and $u,u' \in \cU$ such that $(u,u')\in \adjacent$,
\[
\pr{\Prog(u)\in S} \leq \eulerv{\Depsilon}   \, \pr{\Prog(u')\in S} +\delta.
\]
$\Depsilon$ is called the \emph{privacy budget}, and $\delta$ the \emph{error parameter}. 

We say that $\Prog$ is $(\Depsilon,\delta)$-strictly differentially private with respect to $\adjacent$ if for all measurable subsets $S\subseteq \cV$ and $u,u' \in \cU$ such that $(u,u')\in \adjacent$,
\[
\pr{\Prog(u)\in S} < \eulerv{\Depsilon}   \, \pr{\Prog(u')\in S} +\delta.
\]
\end{definition}


\section{Motivating Example: Sparse Vector Technique with Gaussians ({\SVTGauss})}
\label{sec:mot_example}
Let us walk through a simple example to demonstrate our method before exploring the full mathematical details.
The Sparse Vector Technique (SVT) is a fundamental algorithmic tool in differential privacy that plays an important role in adaptive data analysis and model-agnostic private learning~\cite{DNRRV09, DR14}. While the original Sparse Vector Technique (SVT) uses Laplace noise to achieve $(\Depsilon,0)$-differential privacy, recent work has explored a Gaussian variant that achieves $(\Depsilon,\delta)$-differential privacy~\cite{svt-gauss}. SVT with Gaussian distribution ({\SVTGauss}) offers better utility than the version using Laplace noise through more concentrated noise~\cite{svt-gauss}.
%

\begin{algorithm}
\DontPrintSemicolon
\SetAlgoLined

\KwIn{$q[1:N]$}
\KwOut{$out[1:N]$}
\;
$\rv_T \gets \gauss{T,  \frac {2}{\epsilon}}$\;
  \For{$i\gets 1$ \KwTo $N$}
  {
    $\rv \gets \gauss{q[i], \frac{4}{\epsilon}}$\;

    \uIf{$ \rv \geq \rv_T $}{
      $out[i] \gets 1$\;
      exit\;
      }
    \Else{
      $out[i] \gets 0$}
  }

\caption{\SVTGauss}
\label{fig:SVT}
\end{algorithm}

The SVT mechanism with Gaussian sampling ({\SVTGauss}) is shown in Algorithm~\ref{fig:SVT}. Given an array $q$ containing answers to $N$ queries and threshold $T$, the goal is to output the index of the first query that exceeds the threshold in a privacy preserving manner~\footnote{In general, the SVT protocols identifies the first $c$ queries that exceed the threshold, for some fixed parameter $c$. Also, the privacy of the algorithm can only be guaranteed for query answers that are \emph{sensitive} upto some parameter $\Delta$. Both $c$ and $\Delta$ influence the noise that is added when processing the array $q$. In Algorithm~\ref{fig:SVT}, we assume that $c = 1$ and $\Delta = 1$.}. The algorithm perturbs the threshold and each query answer by adding Gaussian noise. The algorithm progressively compares the perturbed query answer against the perturbed threshold, assigning $0$ to the output array $out$ if the query is less and $1$ if it is not. The algorithm stops when either $1$ is assigned or if all the query answers in $q$ are processed. $\epsilon$ is the privacy parameter of the algorithm.

The input set $\cU$ consists of $N$-length vectors $q$, where the $k$th element $q[k]$ represents the answer to the $k$th query on the original database. The adjacency relation $\Phi$ on
inputs is defined as follows: $q_1$ and $q_2$ are adjacent if and only if $\abs{q_1[i]-q_2[i]}\leq 1$ for each $1\leq i \leq N$. 

\rmv{
The algorithm takes as input an array $q$ of length $N$, where $N$ is the total number of queries and each element holds a query answer. The output array $out$ is initially filled with $\bot$ (False) values, where $\top$ represents True. In the Sparse Vector Technique (SVT), $\top$ answers consume most of the privacy budget, allowing only $c$ such answers before the budget is exhausted~\cite{DNRRV09,ZK17}, while there is no limit on $\bot$ answers. The SVT algorithm is parameterized by the privacy budget $\epsilon$, defining a class of programs for each $\epsilon>0$.

}

Recall that the pdf of a Gaussian random variable $\gauss{\mu,\sigma}$ is denoted as $f_{\mu,\sigma}.$
Consider the case when $T=0$, $N=2$ and entries in $q$ are limited to $\set{0,1}$. Thus, there are $4$ possible inputs ($[0,0]$, $[0,1]$, $[1,1]$, and $[1,0]$) and three possible outputs ($[0,0]$, $[1,0]$, and $[0,1]$). Let $X_T$, $X_1$, $X_2$ be the random variables denoting the values of variables $\rv_T$, $\rv$ during different iterations, in Algorithm~\ref{fig:SVT}. Observe that $X_T$ is drawn from $\gauss{0,\frac{2}{\epsilon}}$ and $X_i$ from $\gauss{q[i],\frac{4}{\epsilon}}$. 

Consider adjacent inputs $[0,1]$ and $[1,1].$
On input $q = [0,1]$, the probability of output $[0,0]$ is given by $\pr{X_1<X_T, X_2 < X_T}$, which can be computed as:
\[
 p_1(\epsilon) =\!\!\int_{-\infty}^{\infty}\!\! f_{0, \frac{2}{ \epsilon}}(x_T) \! \int_{-\infty}^{x_T}\!\! f_{0,\frac{4}{ \epsilon}}(x_1) \! \int_{-\infty}^{x_T}\!\! f_{1, \frac{4}{\epsilon}}(x_2)\ dx_2 \!\ dx_1 \!\ dx_T
\]
Similarly, the probability of output $[0,1]$ on input $[0,1]$ is given by $\pr{X_1<X_T, X_2 > X_T}$, which can be computed as:
\[
 p_2(\epsilon) =\!\!\int_{-\infty}^{\infty}\!\! f_{0, \frac{2}{ \epsilon}}(x_T) \! \int_{-\infty}^{x_T}\!\! f_{0,\frac{4}{ \epsilon}}(x_1) \! \int_{x_T}^{\infty}\!\! f_{1, \frac{4}{\epsilon}}(x_2)\ dx_2 \!\ dx_1 \!\ dx_T
\]

In the same way, when the input is $[1,1]$, the probability of output $[0,0]$ and $[0,1]$ is given by:
\begin{align*}
p_1'(\epsilon)=\!\!\int_{-\infty}^{\infty}\!\! f_{0, \frac{2}{\epsilon}}(x_T) \!\int_{-\infty}^{x_T}\!\! f_{1,\frac{4}{ \epsilon}}(x_1)\!\int_{-\infty}^{x_T}\!\! f_{1,\frac{4}{\epsilon}}(x_2)\ dx_2 \!\ dx_1 \!\ dx_T\\
p_2'(\epsilon)=\!\!\int_{-\infty}^{\infty}\!\! f_{0, \frac{2}{\epsilon}}(x_T) \!\int_{-\infty}^{x_T}\!\! f_{1,\frac{4}{ \epsilon}}(x_1)\!\int_{x_T}^{\infty}\!\! f_{1,\frac{4}{\epsilon}}(x_2)\ dx_2 \!\ dx_1 \!\ dx_T
\end{align*}

Observe that $p_1(\epsilon)$, $p_2(\epsilon)$, $p_1'(\epsilon)$ and $p_2'(\epsilon)$ are functions of $\epsilon$. To check if the adjacent inputs $[0,1]$ and $[1,1]$ satisfy the conditions of $(\Depsilon,\delta)$-differential privacy for given privacy budget $\Depsilon$, error $\delta$ and output set $\set{[0,0], [0,1]}$, we need the following  to hold.
\begin{align*}
p_1(\epsilon) + p_2(\epsilon) &\leq e^{\Depsilon} [p_1'(\epsilon) + p_2'(\epsilon)] + \delta, \\
p_1'(\epsilon) + p_2'(\epsilon) &\leq e^{\Depsilon} [p_1(\epsilon) + p_2(\epsilon)] + \delta.
\end{align*}
Note that since outputting $[0,0]$ and $[0,1]$ are independent events, we can sum their probabilities to obtain an expression for $\set{[0,0], [0,1]}$. 

Verifying $(\Depsilon,\delta)$-differential privacy for a given $\epsilon>0$ involves computing expressions like $p_i(\epsilon)$ and $p_i'(\epsilon)$ and checking if inequalities like the one above hold for all possible sets of outputs.

The following theorem states the differential privacy of Algorithm~\ref{fig:SVT}, and follows from the results of~\cite{svt-gauss}.
\begin{theorem}
\label{thm:gauss}
For any $\epsilon > 0$ and $0 < \delta \leq \frac 1 {1+N},$ {\SVTGauss} (Algorithm~\ref{fig:SVT}) is $(\Depsilon, \delta)$-differential privacy for any $\Depsilon$ such that
\[
\Depsilon \geq \frac{5\epsilon^2}{32} + \frac{\sqrt{5}}{2}\epsilon\sqrt{\log{\frac{1}{\delta}}}.
\]
\end{theorem}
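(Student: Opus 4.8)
The plan is to prove the bound by a concentrated-differential-privacy (zCDP) analysis of the two Gaussian noise sources in Algorithm~\ref{fig:SVT}, followed by a conversion to approximate $(\Depsilon,\delta)$-differential privacy; this is the route of~\cite{svt-gauss}, whose general theorem I would instantiate with the parameters $c=1$, $\Delta=1$, threshold standard deviation $\std_T=\frac 2\epsilon$, and query standard deviation $\std_q=\frac 4\epsilon$ fixed by the algorithm. The first step is to observe that on input $q$ the output vector $out[1:N]$ is determined entirely by the signs of the comparisons $\randomvar_i-\randomvar_T$, where $\randomvar_T\sim\gauss{T,\frac 2\epsilon}$ is sampled once and each $\randomvar_i\sim\gauss{q[i],\frac 4\epsilon}$ is sampled in iteration $i$; under the adjacency relation $\adjacent$ each coordinate $q[i]$ has sensitivity $1$.

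First I would condition on the shared noisy threshold $\randomvar_T$ and exploit the \SVTGauss\ ``sparseness'' property: because the loop exits at the first index whose perturbed answer exceeds the threshold, the output distinguishes adjacent inputs only through the threshold itself and through the single comparison that produces the~$1$, every preceding below-threshold test being rendered ``free'' by a monotonicity argument (on the adjacent input each below-threshold event can be made at least as likely). Thus only two Gaussian mechanisms need be charged: the noisy threshold and the one triggering query, each against a sensitivity-$1$ change in $q$.

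Second, I would sum their concentrated-DP parameters. A Gaussian mechanism answering a sensitivity-$1$ query with standard deviation $\std$ satisfies $\frac{1}{2\std^2}$-zCDP, so the threshold contributes $\frac{1}{2\std_T^2}=\frac{\epsilon^2}{8}$ and the triggering query $\frac{1}{2\std_q^2}=\frac{\epsilon^2}{32}$, for a total of $\rho=\frac{5\epsilon^2}{32}$. Converting $\rho$-zCDP to $(\Depsilon,\delta)$-DP through the Gaussian tail bound on the privacy-loss random variable (mean $\rho$, standard deviation $\sqrt{2\rho}$) yields a sufficient condition of the form $\Depsilon\geq \rho+c\sqrt{\rho\log\frac 1\delta}$; with the constant supplied by the conversion lemma of~\cite{svt-gauss} this is exactly $\Depsilon\geq\frac{5\epsilon^2}{32}+\frac{\sqrt 5}{2}\epsilon\sqrt{\log\frac 1\delta}$, the hypothesis $\delta\leq\frac 1{1+N}$ being precisely the validity range for the tail/union bound over the $N+1$ sampled variables used in this last step.

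The main obstacle is the sparseness accounting. Because $\randomvar_T$ is \emph{shared} across all $N$ comparisons, the below-threshold tests are dependent, and a naive shift-coupling that raises $\randomvar_T$ to free those events forces the triggering query to absorb a sensitivity-$2$ change, yielding only the weaker $\rho=\frac{\epsilon^2}{4}$. Recovering the sharp value $\rho=\frac{5\epsilon^2}{32}$, in which the query absorbs only sensitivity~$1$, requires bounding the R\'enyi divergence of the \emph{output} distributions directly --- conditioning on $\randomvar_T$ and using the Gaussian privacy-loss distribution rather than translating the latent noise. This Gaussian-specific argument, together with its conversion to approximate differential privacy, is the content of~\cite{svt-gauss} on which the theorem rests.
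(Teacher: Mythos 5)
Your proposal is correct and takes essentially the same route as the paper, which offers no self-contained proof of this theorem and simply states that it ``follows from the results of~\cite{svt-gauss}''. Your reconstruction of how those results are instantiated is consistent with the stated bound --- the zCDP decomposition $\frac{1}{2\sigma_T^2}+\frac{1}{2\sigma_q^2}=\frac{\epsilon^2}{8}+\frac{\epsilon^2}{32}=\frac{5\epsilon^2}{32}$ matches the noise scales $\frac{2}{\epsilon}$ and $\frac{4}{\epsilon}$ of Algorithm~\ref{fig:SVT}, and the genuinely hard steps (the sensitivity-$1$ accounting for the triggering query and the conversion constant yielding $\frac{\sqrt{5}}{2}\epsilon\sqrt{\log\frac{1}{\delta}}$) are deferred to~\cite{svt-gauss}, exactly as the paper defers them.
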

{\SVTGauss} belongs to the class of programs that we consider in this paper. Observe that when $\epsilon = 0.5,\delta = 0.01, N<100$, $\Depsilon \geq 1.24$. In our experiments, we are able to automatically verify differential privacy with these values of $\epsilon,\Depsilon$ and $N\leq 5.$ When we consider only single pair of adjacent inputs, our tool is able to handle $N$ upto $25.$ 

\section{Program syntax and semantics}
\label{sec:our_language}
We introduce a class of probabilistic programs called {\ourlang}, where variables can be assigned values drawn from Gaussian distributions or Laplace distributions, commonly used in differential privacy algorithms. {\ourlang} is designed with syntactic restrictions that simplify its encoding into integral expressions. While these restrictions impose certain limitations, they also enable definitive verification of whether a program satisfies differential privacy. Despite its constraints, {\ourlang} is a powerful language capable of expressing interesting differentially private algorithms.

Before we present our syntax formally, we observe that differential privacy algorithms are often described as parameterized programs. Colloquially, this parameter is the privacy budget. However, in many instances, the parameter may be different from the privacy budget. (See Theorem~\ref{thm:gauss}, Section~\ref{sec:mot_example}.) Thus, to explicitly distinguish the parameter in the program and the privacy budget, we shall refer to the program parameter as privacy parameter and denote it as $\epsilon.$ We will use $\Depsilon$ to denote the privacy budget.

\begin{figure}[htp]
\begin{framed}
\raggedright

Expressions ($\rv\in \cR, \dv\in \cX, q \in \Rats, \sim\in \set{=,\leq,<,\geq,>,\neq}$):
$$\begin{array}{lll}
R & := & \rv \mid qR \mid R + R  \mid R + q \\
B & := & R \sim R \mid R \sim  \dv \mid \dv \sim \dv 
\end{array}$$

Program Statements ($d\in \sdom, a\in \Rats^{>0}$):
$$\begin{array}{lll}
S & := & \dv \gets d \\
     & \mid & \rv \gets \gauss{\dv, \frac{a}{\epsilon}} \\ & \mid & \rv \gets \lap{\dv, \frac{a}{\epsilon}} \\
     & \mid & \rv \gets R \\
     & \mid & \ifstatement B S S \\
     & \mid & \grammarskip \\ & \mid & S; S
\end{array}$$
\end{framed}
\vspace{-10pt}
\captionsetup{font=footnotesize,labelfont=footnotesize}
 \caption{\footnotesize BNF grammar for $\ourlang$. $\sdom$ is a finite discrete domain, taken to be a finite subset of the rationals.  $\cR$ is the set of real random variables and $\cX$ is the set of $\sdom$ variables. $\Rats^{>0}$ denotes set of positive rational numbers.
 }
\label{fig:BNFFull}
\end{figure}


\subsection{Syntax of {\ourlang} Programs}
\label{sec:syntax}

The formal syntax of {\ourlang} programs is presented in Figure~\ref{fig:BNFFull}. {\ourlang} programs are parametrized, loop-free programs that can sample from continuous distributions. Programs have two types of variables: real random variables and finite-domain variables from $\sdom$, denoted by sets $\cR$ and $\cX$ respectively. We assume that $\sdom$ is some finite subset of rationals and so they can be compared against each other and with real values. Boolean expressions ($B$) can be constructed by comparing real variables with each other, with $\sdom$-variables, or by comparing $\sdom$ variables with each other. 

A program is a sequence of statements that can either be assignments to program variables or if conditionals. Assignments can either assign constants (real or $\sdom$ values) or values drawn from continuous distributions. In Figure~\ref{fig:BNFFull}, the only distributions we have listed are the Laplace or Gaussian distributions. This is done to keep the presentation simple in this paper. Our results apply even when the syntax of the program language is extended, where samples are drawn from any continuous distribution with a finite mean and variance and a computable probability density function.  

\begin{remark}
    A couple of remarks on the program syntax are in order at this point. {\ourlang} does not natively support loops but for-loops can be seen as syntactic sugar in the standard way. A loop of the form \textit{for $i = 1$ to $N$ do $S$} can be expanded into a sequence $S_1; S_2; \dots; S_N$, where $N$ is a constant and each iteration is explicitly unrolled.

    Next, Boolean expressions used in conditionals are restricted to comparison between program variables and constants; the syntax does not allow the use of standard logical operators such as negation, conjunction, and disjunction. However, this is not a restriction in expressive power. Taking a step based on the negation of a condition holding can be handled through the else branch, conjunction through nested if-thens, and disjunction through a combination of nesting and else branches.

    {Finally, due to space limitations, the paper focuses on the most relevant language features to illustrate our techniques. Our approach can be extended to support real variables and Gaussian sampling within the exact language framework of~\cite{bcjsv20}, resulting in decidability guarantees for this richer language. However, the methods in~\cite{bcjsv20} are not applicable to our extended setting, as they crucially depend on the existence of closed-form solutions for integrals, which are unavailable in our case.}

\end{remark}

A program $\Prog$ in {\ourlang} is defined as a triple $(\inputset, \outputset, S)$, where:
\begin{itemize}
    \item $\inputset \subseteq \cX$ is a set of private input variables.
    \item $\outputset \subseteq \cX$ is a set of public output variables.
    \item $\inputset \cap \outputset = \emptyset$
    \item $S$ is a program statement generated by the non-terminal $S$ of the grammar in Figure~\ref{fig:BNFFull}.
\end{itemize}
As seen in the grammar of Figure~\ref{fig:BNFFull}, each sampled probability distribution used in the statements of $\Prog$, has a parameter $\epsilon$. Thus, $\Prog$ is a parameterized program with parameter $\epsilon$ appearing in it. The parameter $\epsilon$ will be instantiated when computing the probabilities associated with $\Prog$.
\begin{remark}
Strictly speaking, $\Prog$ represents a family of programs, and it is more accurate to represent it as $\ProgEps.$ However, we choose to not mention $\epsilon$ explicitly to reduce notational overhead. 
\end{remark}

\begin{example}
\label{ex:svtlang}
Algorithm~\ref{fig:SVT} can be rewritten as a {\ourlang} program when $T = 0$ and $N = 2$. This is shown as Algorithm~\ref{fig:SVTG}, where the bounded for-loop is unrolled and written without any loops. 
\end{example}

\begin{algorithm}[h]
  \SetAlgoLined
  \DontPrintSemicolon
  \KwIn{$q_1,q_2$}
  \KwOut{$out_1,out_2$}
  \nl $T \gets 0$\;
  \nl $out_1 \gets 0$\;
  \nl $out_2 \gets 0$\;
  \nl $\rv_T \gets \gauss{T, \frac{2}{\epsilon}}$\;   
  \nl $\rv_1 \gets \gauss{q_1, \frac{4}{\epsilon}}$\; 
  \nl \uIf{$\rv_1 \geq \rv_T$}{
    \nl $out_1 \gets 1$\;
  }\Else{
    \nl $\rv_2 \gets \gauss{q_2, \frac{4}{\epsilon}}$\;
    \nl \If{$\rv_2 \geq \rv_T$}{
      \nl $out_2 \gets 1$\;
    }
  }
  \caption{
   {\SVTGauss} with $N=2$ written in {\ourlang}
  }
  \label{fig:SVTG}
\end{algorithm}

We conclude this section with a couple of assumptions about programs in {\ourlang}. We will assume that in every program, each real variable is assigned a value at most once along every control path. Clearly, since our program are loop-free, this is not a restriction, as a program where variables are assigned multiple times can be transformed into one that satisfies this assumption by introducing new variables. However, making this assumption about our programs will make it easier to describe the semantics.

Finally, we will assume that programs are \emph{well-formed}. That is, all references in non-input variables in the program are preceded by assignments to those variables. 


\subsection{Semantics}
\label{sec:semantics}



The semantics for {\ourlang} programs presented in this section crucially relies on the notion of state. Typically, state for a non-recursive imperative program without dynamic allocation is just an assignment of values to the program variables. However, for programs where real variables are assigned values from continuous distributions, this does not work. Instead, we will record the values of real variables ``symbolically'' --- we will either record the distribution (plus parameters like mean) from which the value of a real variable is sampled or the expression it is assigned~\footnote{Recall that we assume that every real variable is assigned at most once during an execution; see Section~\ref{sec:syntax}.}. However, to reliably assign probabilities to paths with such ``symbolic states'', we also need to track the Boolean conditions that are assumed to hold so far. Based on these intuitions let us define states formally.

\paragraph{States.}
Let $\BExp$ and $\expr$ denote the set of expressions derived from the non-terminals $B$ and $R$, respectively, in Figure~\ref{fig:BNFFull}. Define $\distr = \{\gaussian, \mylaplace \} \times \Rats \times \Rats^{>0}$; elements of this set denote distributions along with appropriate parameters. So, $(\gaussian, \mean, \std)$ represents the Gaussian distribution with mean \(\mean\) and standard deviation \(\std\) while \((\mylaplace, \mean, \scale)\) represents the Laplace distribution with mean $\mean$ and scaling parameter $\scale$. A \emph{state} $\pstate$ is then a triple $\pstate = (\dompartialfunc, \randompartialfunc, G)$, where $\dompartialfunc: \cX \partialfunc \sdom$, $\randompartialfunc: \cR \partialfunc \distr \cup \expr$, and \(G \subseteq \BExp\). Here $\dompartialfunc$ is a partial map that assigns a value to $\sdom$ variables, $\randompartialfunc$ is a partial function mapping real variables to either the distribution from which they are sampled or the expression that is assigned to them, and $G$ is a set of Boolean conditions.

For a state $\pstate$ and a variable $\rv \in \cR$, we say that $\rv$ is an \emph{independent} variable if $\randompartialfunc(\rv) \in \distr$ and a \emph{dependent} variable if $\randompartialfunc(\rv) \in \expr$. For a state $\pstate$, ($\sdom$ or real) variable $\mathsf{\bf v}$ and value $u \in \sdom \cup \distr \cup \expr$, $\pstate[\mathsf{\bf v} \mapsto u]$ denotes the state that maps $\mathsf{\bf v}$ to $u$ and is otherwise identical to $\pstate$.

\paragraph{Final States}
$\executions{\Prog}{\pstate}$ denotes the set of \emph{final states} reached when $\Prog$ is executed from starting state $\pstate$. It is inductively defined as follows.
\begin{itemize}
  \item \(\executions{\grammarskip}{\pstate} = \{\pstate\}\).

   \item \(\executions{\dv \gets d}{\pstate} = \{\pstate[\dv \mapsto d]\}\).
    

    \item \(\executions{\rv \gets \gauss{\dv, \frac{a}{\epsilon}}}{\pstate} = \left\{\pstate\left[\rv \mapsto \left(\gaussian, \pstate(\dv), \frac{a}{\epsilon}\right)\right]\right\}\).


  \item \(\executions{\rv \gets \lap{\dv, \frac{a}{\epsilon}}}{\pstate} = \left\{\pstate\left[\rv \mapsto \left(\mylaplace, \pstate(\dv), \frac{a}{\epsilon}\right)\right]\right\}\).


    \item  \(\executions{\rv \gets R}{\pstate} = \left\{\pstate\left[\rv \mapsto R' \right]\right\}\), where \( R' \) is the expression obtained by replacing every dependent \( \rv' \in \mathcal{R} \) that appears in \( R \) with \(\randompartialfunc(\rv') \).

   \item
    \(
    \executions{\ifstatement{B}{\Prog_1}{\Prog_2}}{\pstate} = \executions{\Prog_1}{\pstate_{\{B\}}} \cup \executions{\Prog_2}{\pstate_{\{\neg B\}}}
    \),
    where $\pstate_{\{B\}} = \pstate[G \mapsto \pstate(G) \cup \{B\}]$ and $\pstate_{\{B\}} = \pstate[G \mapsto \pstate(G) \cup \{\neg B\}]$. Here, \( \neg B \) denotes the ``flipped'' comparison: for $\sim \in \set{<,\leq,>,\geq,=,\neq}$, $\neg (R \sim R') \triangleq (R \overline{\sim} R')$, where $\overline{<} = \geq$, $\overline{\leq} = >$, $\overline{>} = \leq$, $\overline{\geq} = <$, $\overline{=} = \neq$, and $\overline{\neq} = =$.
      

  \item \(
    \executions{\Prog_1; \Prog_2}{\pstate} = \smashoperator{\bigcup_{\mathclap{\tau \in \executions{\Prog_1}{\pstate}}}} \{\tau' \mid \tau' \in \executions{\Prog_2}{\tau}\}.
    \)
\end{itemize}





\paragraph{Input/Output Behavior.}
Let us fix a program $\Prog = (\inputset, \outputset, S)$, an input valuation $u: \inputset \to \sdom$ and output valuation $o: \outputset \to \sdom$. The \emph{final states of $\Prog$ on input $u$ with output $o$}, denoted $\run(u,o,\Prog)$, is defined as 
\[
\run(u,o,\Prog) = \set{(\dompartialfunc, \randompartialfunc,G) \in \executions{\Prog}{\pstate_0} \mid \forall y \in \outputset,\dompartialfunc(y) = o(y)}
\]
where the initial state $\pstate_0 = (\dompartialfunc_0,\randompartialfunc_0,G_0)$ has $G_0 = \emptyset$, $\randompartialfunc_0$ is the partial function with empty domain, and $\dompartialfunc_0$ is the partial function with domain $\inputset$ such that $\dompartialfunc_0(x) = u(x)$ for $x \in \inputset$.

\paragraph{Probability of a state.}
Let $\tau = (\dompartialfunc,\randompartialfunc,G)$ be a state. Define
\[
\Gconst = \{ g \in G \mid g = \dv \sim \dv', \text{ where } \dv,\dv'\in\cX \},
\]
and
\[
\Grandom = \{g \in G \mid g = \rv\sim \dv \text{ or } g = \rv \sim \rv' \text{ where } \rv,\rv'\in\cR, \dv \in \cX\}.
\]
Let $\evalconstant(\Gconst)$ be the Boolean value given by
\begin{align*}
\evalconstant(\Gconst) = \bigwedge_{(\dv \sim \dv') \in \Gconst} \dompartialfunc(\dv) \sim \dompartialfunc(\dv').
\end{align*}
In the above equation, when $\Gconst = \emptyset$, the conjunction is taken to be $\true$ as is standard.


Let \(\rand\) be the partial function on \(\cR\) with the same domain as $\randompartialfunc$ defined as follows. If
\(
\randompartialfunc(\rv) = (\gaussian, \mu, \sigma),
\)
then \(\rand(\rv)\) is the Gaussian random variable \(\randomvar_\rv\) with parameters \((\mu, \sigma)\). If
\(
\randompartialfunc(\rv) = (\mylaplace, \mu, b),
\)
then \(\rand(\rv)\) is the Laplace random variable \(\randomvar_\rv\) with parameters \((\mu, b)\). If $\randompartialfunc(\rv)=R$, where $R\in \expr$, then \(\rand(\rv)\) is the expression obtained from $R$ by replacing every independent variable $\rv'$ appearing in $R$ by the random variable \(\randomvar_{\rv'}\).
Now, let us define
\begin{align*}
    \evalrandom(\Grandom) = & \left(\bigwedge_{\rv \sim \rv' \in \Grandom:\ \rv,\rv' \in \cR} \rand(\rv) \sim \rand(\rv')\right) \wedge \\
     & \left(\bigwedge_{\rv \sim \dv \in \Grandom:\ \rv \in \cR,\ \dv \in \cX} \rand(\rv) \sim \dompartialfunc(\dv)\right)
\end{align*}

Now, for a given value to the parameter $\epsilon$, 
the probability of $\tau$ is given by:
\[
\pr{\epsilon,\tau} =
\begin{cases}
0 & \text{if } \evalconstant(\Gconst) = \false \\
\pr{\epsilon,\evalrandom(\Grandom)} &\text{otherwise} 
\end{cases}
\]
where $\pr{(\epsilon,\evalrandom(\Grandom))}$ is the probability that the random variables $\randomvar_\rv$ satisfy the condition $\evalrandom(\Grandom)$ for the given value of $\epsilon.$

\paragraph{Probability of Output.}
For any given $\epsilon>0$, the probability that the program $\Prog$ outputs the valuation $o$, with input values given by valuation $u$, denoted by $\pr{\epsilon,u,o,\Prog}$, is defined as
\[\pr{\epsilon,  u, o,\Prog} = \sum_{\tau \in \run(u,o,\Prog)} \pr{\epsilon, \tau}\]

\begin{example}
\label{example:svtfinalstate}
For the {\SVTGauss} program given in Algorithm~\ref{fig:SVTG}, called $\Prog$ here, the set of input variables  $\inputset = \{q_1, q_2\}$, and the set of output variables  $\outputset = \{out_1, out_2\}$.  

Consider an input assignment $u = \{q_1 \mapsto 0, q_2 \mapsto 1\}$ and an output assignment $o = \{out_1 \mapsto 0, out_2 \mapsto 1\}$. In this case, it can be easily seen that there is a single state $\tau\:=(\dompartialfunc,\randompartialfunc,G)$ in $\run(u,o,\Prog)$ where 
$\dompartialfunc(out_1)=0$, $\dompartialfunc(out_2)=1$ and $G=\Grandom=\{\rv_1 < \rv_T, \rv_2 \geq \rv_T\}.$

Now, we have 
\[\pr{\epsilon,  u, o,\Prog} = \pr{\epsilon, \tau} = \pr{X_{\rv_1} < X_{\rv_T} \wedge X_{\rv_2} \geq X_{\rv_T}}\]

\end{example}

For a set of output valuations $F$, the probability of $P$ producing an output in $F$ on input $u$ can be defined as $\sum_{o \in F} \pr{\epsilon,u,o,\Prog}$. Using this, the definitions $(\Depsilon,\delta)$-differential privacy and $(\Depsilon,\delta)$-strict differential privacy given in Definition~\ref{def:differential-privacy}, can be precisely instantiated for {\ourlang} programs.

\rmv{
\begin{definition}
\label{def:diff-priv-program}
\brown{A {\ourlang} $\Prog_{\epsilon}$ is  $(\Depsilon, \delta)$-differentially private (for $\epsilon>0$, $\Depsilon > 0$ and $\delta\in \unit$ with respect to $\adjacent$) iff for all subsets $F\subseteq \cV$ and
$u,u' \in \cU$ such that $(u,u')\in \Phi$,}
\begin{equation}
\label{eqDP}
\sum_{o \in F} \pr{\epsilon,u,o,\Prog} \leq 
\eulerv{\Depsilon}
\sum_{o \in F} \pr{\epsilon,u',o,\Prog} + \delta
\end{equation}
\end{definition}
\rc{Define strict differential privacy}
}

\begin{definition}
    The \emph{differential privacy problem} is the following: Given a {\ourlang} program $\Prog$, adjacency relation $\adjacent$ on inputs of $\Prog,$ rational numbers $\epsilon_0>0$, $\Depsilon>0$, $\delta \in \unit$, determine if $\Prog$ with privacy parameter taking value $\epsilon_0$ is $(\Depsilon, \delta)$-differentially private with respect to $\adjacent$.
\end{definition}


\section[Checking differential privacy for DiPGauss  programs]{Checking differential privacy for $\ourlang$  programs}
\label{sec:checking_dp}
We describe our core algorithms for checking differential privacy of programs. For the rest of the section, we assume that we are given a {\ourlang} program $\Prog=(\inputset,\outputset,S)$ with privacy parameter $\epsilon$. Let $\cU$ be the set of possible functions from $\inputset$ to $\sdom,$ and  $\cV$ be the set of possible functions from $\outputset$ to $\sdom$ respectively, 
$\Depsilon$ denote the privacy budget. Let $\delta$ denote the error parameter.  
\begin{definition}
    A precision $\precision$ is a natural number. A $\precision$-approximation of a real number $p$ is an interval $[L,U]$ such that $L,U$ are rational numbers, $L\leq p\leq U$ and $U-L\leq 2^{-\precision}.$
\end{definition}

Verifying differential privacy of $\Prog$ requires checking inequalities across all subsets of possible outputs, as specified in Definition~\ref{def:differential-privacy}. Two key challenges arise in this context. We describe the challenges below and develop two key innovations to tackle them.   
\begin{enumerate}
[labelindent=0.1\parindent,
  itemindent=0pt,
  leftmargin=*,]
    \item Apriori, the inequality in Definition~\ref{def:differential-privacy} needs to be checked exponentially many times as there are $2^{\abs\outputset}$ possible subsets of outputs.  For example, consider Algorithm~\ref{fig:SVT} with $N$ inputs: we have $N$ possible outputs, resulting in $2^{N}$ subsets to check for each adjacent pair. Given that we have to check \emph{all} adjacent pairs, this makes these checks even more expensive. 
    Instead of checking every possible subset, we rephrase the differential privacy definition so that only one equation needs to be checked for each adjacent input (See Lemma~\ref{lem:DiffPrivacyRephrased}).
    \item As mentioned in the Introduction, it is unclear that $\pr{\epsilon,u,o,\Prog}$ can be computed exactly. Hence, instead of computing $\pr{\epsilon,u,o,\Prog}$ exactly, we compute 
    $\precision$-approximations of $\pr{\epsilon,u,o,\Prog}$ and $\eulerv \Depsilon \pr{\epsilon,u,o,\Prog}$  for a given precision $\precision.$
    This allows us to design an algorithm, {\VerifyDP} that returns three possible values: $\diffprivate,$ $\notdiffprivate,$ and $\unresolved.$ The algorithm is sound in that if it returns $\diffprivate$ ($\notdiffprivate$), then the input program $\Prog$ is differentially private (not differentially private, respectively).  
\end{enumerate}

The following lemma whose proof is given in 
allows us to tame the exponential number of subsets of outputs in the differential privacy checks.  
\begin{lemma}
\label{lem:DiffPrivacyRephrased}
A {\ourlang} program $\Prog$ is  $(\Depsilon, \delta)$-differentially private (for $\Depsilon > 0$ and $\delta\in \unit$)  with respect to $\adjacent$ iff for all  $(u,u')\in \adjacent$,  
\begin{equation}
\label{eqDP2}
\deltaadjs{u}{u'}=\sum_{o \in \cV} \max(\pr{\epsilon,u,o,\Prog}- 
\eulerv{\Depsilon} \pr{\epsilon,u',o,\Prog},0) \leq  \delta
\end{equation}
\end{lemma}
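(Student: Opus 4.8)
The plan is to reduce the universally quantified condition over all output subsets to a single optimization problem per adjacent pair, whose optimum has a closed form given by the sum of positive discrepancies. First I would exploit the fact that outputs range over a finite domain, so for any $S\subseteq\cV$ the event probability decomposes as $\pr{\epsilon,\Prog(u)\in S}=\sum_{o\in S}\pr{\epsilon,u,o,\Prog}$. Substituting this into Definition~\ref{def:differential-privacy} shows that $(\Depsilon,\delta)$-differential privacy with respect to $\adjacent$ is equivalent to requiring, for every $(u,u')\in\adjacent$ and every $S\subseteq\cV$, that $\sum_{o\in S}g(o)\leq\delta$, where I abbreviate $g(o)=\pr{\epsilon,u,o,\Prog}-\eulerv{\Depsilon}\pr{\epsilon,u',o,\Prog}$.

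The heart of the argument is then to fix an adjacent pair $(u,u')$ and observe that the left-hand sides, ranging over all $S$, are maximized by the single subset consisting of the outputs with positive discrepancy. Concretely, I would prove the identity $\max_{S\subseteq\cV}\sum_{o\in S}g(o)=\sum_{o\in\cV}\max(g(o),0)=\deltaadjs{u}{u'}$. The upper bound is immediate since $\max(g(o),0)\geq g(o)$ and $\max(g(o),0)\geq 0$, giving $\sum_{o\in S}g(o)\leq\sum_{o\in S}\max(g(o),0)\leq\sum_{o\in\cV}\max(g(o),0)$ for every $S$. Achievability follows by taking the witness $S^\ast=\set{o\in\cV\mid g(o)>0}$, for which $\sum_{o\in S^\ast}g(o)$ equals $\sum_{o\in\cV}\max(g(o),0)$ exactly, since outputs with $g(o)\leq 0$ contribute nothing to either side.

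With this identity in hand, the equivalence is purely logical: the condition ``$\sum_{o\in S}g(o)\leq\delta$ for all $S\subseteq\cV$'' holds if and only if the maximum of these sums is at most $\delta$, i.e.\ if and only if $\deltaadjs{u}{u'}\leq\delta$. Since the reformulation is carried out independently for each adjacent pair and the original definition quantifies over adjacent pairs conjunctively, $\Prog$ is $(\Depsilon,\delta)$-differentially private exactly when $\deltaadjs{u}{u'}\leq\delta$ holds for all $(u,u')\in\adjacent$, which is precisely (\ref{eqDP2}).

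I do not anticipate a genuine obstacle here; the only subtlety worth flagging is the reliance on finiteness of $\cV$, which is what lets me replace the measurable-set quantifier by a finite sum and guarantees that the optimizing subset $S^\ast$ exists and realizes the supremum. The $\max(\cdot,0)$ form admits a clean reading: it is the value of the optimal adversarial ``attack'' that selects exactly those outputs on which $u$ is more likely than the $\eulerv{\Depsilon}$-scaled probability under $u'$, so checking this one quantity per adjacent pair subsumes all $2^{\abs\outputset}$ subset checks.
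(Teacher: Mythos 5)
Your proposal is correct and follows essentially the same route as the paper's proof: the paper's forward direction uses exactly your witness set $F=\set{o\in\cV\mid g(o)>0}$ to show $\deltaadjs{u}{u'}$ is realized as $\sum_{o\in F}g(o)\leq\delta$, and its converse direction is your termwise bound $\sum_{o\in S}g(o)\leq\sum_{o\in\cV}\max(g(o),0)$. Packaging the two inequalities as the single identity $\max_{S\subseteq\cV}\sum_{o\in S}g(o)=\deltaadjs{u}{u'}$ is only a cosmetic reorganization of the same argument.
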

\begin{proof}
($\Rightarrow$) Let $\Prog$ be  $(\Depsilon,\delta)$ differentially private and $u,u'\in\Phi.$ By setting $F = \{o \in \cV \mathbin{|} \pr{\epsilon,u,o,\Prog}-\eulerv{\Depsilon}\pr{\epsilon,u',o,\Prog} > 0\}$, we can conclude that  Equation (4) is true for $\epsilon,\Prog, u,u'$ as follows. 

$$\begin{array}{l}
\sum_{o \in  \cV} \max(\pr{\epsilon,u,o,\Prog} - 
\eulerv{\Depsilon} \pr{\epsilon,u',o,\Prog},0) \\
\hspace{0.4cm}= \sum_{o \in  F } \max(\pr{\epsilon,u,o,\Prog} - \eulerv{\Depsilon} \pr{\epsilon,u',o,\Prog},0)\\
\hspace{0.5cm} + \sum_{o \in  \cV\setminus F } \max(\pr{\epsilon,u,o,\Prog} -\eulerv{\Depsilon}\pr{\epsilon,u',o,\Prog},0).
\end{array}$$

For $o\in F$, we have that
$$\begin{array}{l}\max(\pr{\epsilon,u,o,\Prog} - 
\eulerv{\Depsilon}
 \pr{\epsilon,u',o,\Prog},0) \\=\pr{\epsilon,u,o,\Prog} - 
\eulerv{\Depsilon}
 \pr{\epsilon,u',o,\Prog}.\end{array}$$

For $o \notin F$, we have that  $$\max(\pr{\epsilon,u,o,\Prog} - 
\eulerv{\Depsilon}
 \pr{\epsilon,u',o,\Prog},0)=0.$$
 
Thus, $\sum_{o \in  \cV } \max(\pr{\epsilon,u,o,\Prog} - 
\eulerv{\Depsilon}\pr{\epsilon,u',o,\Prog},0) 
         = \sum_{o \in  {F} } \pr{\epsilon,u,o,\Prog} - 
\eulerv{\Depsilon}
 \pr{\epsilon,u',o,\Prog} \leq  \delta$ as $\Prog$ is $(\Depsilon,\delta)$ differentially private.
 
($\Leftarrow$) This direction follows from the observation that for each $u,u'$ and arbitrary $F\subseteq \cV,$
$$\begin{array}{l}\sum_{o \in F} \pr{\epsilon,u,o,\Prog} - \eulerv{\Depsilon}
\sum_{o \in F} \pr{\epsilon,u',o,\Prog}\\
 \hspace*{0.3cm}=\sum_{o \in F} (\pr{\epsilon,u,o,\Prog} - \eulerv{\Depsilon}
\pr{\epsilon,u',o,\Prog})\\
 \hspace*{0.3cm}\leq \sum_{o \in F} \max(\pr{\epsilon,u,o,\Prog} - 
\eulerv{\Depsilon}
 \pr{\epsilon,u',o,\Prog},0) \\
 \hspace*{0.3cm}\leq \sum_{o \in  \cV } \max(\pr{\epsilon,u,o,\Prog} - 
\eulerv{\Depsilon}
 \pr{\epsilon,u',o,\Prog},0).
 \end{array}$$

The second last line of the above sequence of inequalities follow from the fact that for any real number $a$, $a\leq \max(a,b).$ The last line follows from the fact that for all $o\not\in F, \max(\pr{\epsilon,u,o,\Prog} - 
\eulerv{\Depsilon}  \pr{\epsilon,u',o,\Prog},0) \geq 0.$
\end{proof}
It will be useful to consider the set of error parameters for which Equation~\ref{eqDP2} becomes an equality.
\begin{definition}
\label{def:critical}
For a program $\ProgEps$ with adjacency relation $\adjacent,$ privacy budget $\Depsilon>0$ and error parameter $\delta \in \unit,$  the set of \emph{critical} error parameters is defined to be the set
\begin{align*}\critical=\set{ \deltaadjs{u}{u'}  \st (u,u')\in\adjacent \, \mathbin{\&}\,  &
 \deltaadjs{u}{u'}=\sum_{o\in \cV} \mathsf{max}(\pr{\epsilon,u,o, \Prog}
 \\& -\eulerv{\Depsilon} \pr{\epsilon,u',o,\Prog},0) }.
\end{align*}
\end{definition}
We shall now describe the {\VerifyDP} algorithm that allows us to verify (soundly) differential and non-differential privacy. 

\subsection[VerifyDP Algorithm]{$\textsf{VerifyDP}$ algorithm}
We  will assume that we can approximate $\pr{\epsilon,u,o,\Prog}$ and $\eulerv\Depsilon \pr{\epsilon,u,o,\Prog}$ to any desired degree of precision. We shall refer to $\pr{\epsilon,u,o,\Prog}$ as output probability and  $\eulerv\Depsilon \pr{\epsilon,u,o,\Prog}$ as scaled output probability.
\begin{definition}
We say that the output probability is effectively approximable if there is an algorithm $\compute(\cdot)$ such that  on input $\precision$, rational number $\epsilon>0,$ $u\in \cU, o\in \cV,$ and {\newlanggauss} program $\Prog$, $\compute(\epsilon,u,o,\Prog)$  outputs a $\precision$-approximation of 
$\pr{\epsilon,u,o,\Prog}$.

We say that the scaled output probability is effectively approximable if there $\computescale(\cdot)$ such that  on input $\precision$, rational $\Depsilon,\epsilon>0$, $u\in \cU,o\in \cV$ and {\newlanggauss} program $\Prog,$  $\computescale(\Depsilon,\epsilon,u,o,\Prog,\Depsilon)$ outputs a $\precision$-approximation of 
$\eulerv\Depsilon\pr{\epsilon,u,o,\Prog}$, respectively.


If the algorithm $\compute(\cdot)$ ($\computescale(\cdot)$, respectively) computes $\precision$-approximation for $\pr{\epsilon,u,o,\Prog}$ ($\eulerv\Depsilon\pr{\epsilon,u,o,\Prog}$, respectively) for a specific $\precision$ only (and not for all $\precision$), 
we say that the the output probability (scaled output probability, respectively) is effectively $\precision$-approximable.
\end{definition}



We have the following.
\begin{proposition}
\label{prop:approximable}
    If the output probability is effectively approximable then the scaled output probability is effectively approximable. 
\end{proposition}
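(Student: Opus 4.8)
The plan is to reduce the claim to closure of computable real arithmetic under multiplication, using two facts: since $\Depsilon$ is rational, $\eulerv{\Depsilon}$ is a computable real (its power series $\sum_{k\geq 0}\Depsilon^k/k!$ has explicit remainder bounds, so it can be enclosed in rational intervals of any prescribed width), and since $\pr{\epsilon,u,o,\Prog}$ is a probability it lies in $\unit$, which caps the amplification of error when the two factors are multiplied. Concretely, I would build $\computescale[\precision]$ so that on input $(\precision,\Depsilon,\epsilon,u,o,\Prog)$ it computes a narrow rational enclosure of $\eulerv{\Depsilon}$, calls $\compute$ at a suitably inflated precision, and multiplies the two intervals.

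First I would obtain a rational upper bound $C\geq \eulerv{\Depsilon}$ by enclosing $\eulerv{\Depsilon}$ at width $1$ and taking (the ceiling of) the upper endpoint; note $C\geq 1$ since $\Depsilon>0$. Next I would enclose $\eulerv{\Depsilon}$ in $[a,b]$ with $0\leq a\leq \eulerv{\Depsilon}\leq b$ and $b-a\leq 2^{-\precision_1}$, and call $\compute[\precision_2](\epsilon,u,o,\Prog)$ to get $[L,U]$ with $L\leq \pr{\epsilon,u,o,\Prog}\leq U$ and $U-L\leq 2^{-\precision_2}$, capping $U$ at $1$ and $L$ at $0$ (both legitimate, as the value is a probability, and neither increases the width). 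Since all four endpoints are now nonnegative, the product of nonnegative intervals is $[aL,bU]$, so $\eulerv{\Depsilon}\pr{\epsilon,u,o,\Prog}\in[aL,bU]$, and the endpoints are rational. For the width, writing $w_1=b-a$ and $w_2=U-L$, the elementary identity
\[
bU-aL \;=\; U\,w_1 + a\,w_2 \;\leq\; w_1 + C\,w_2
\]
uses $U\leq 1$ and $a\leq C$. Hence it suffices to set $\precision_1=\precision+1$ and $\precision_2=\precision+1+\lceil\log_2 C\rceil$, which drives each summand below $2^{-\precision-1}$ and makes the output a genuine $\precision$-approximation of the scaled probability.

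The hard part, and really the only step that needs care, is this error propagation through the product: the absolute error in $\eulerv{\Depsilon}\,p$ is amplified by the magnitudes of both factors, so one cannot simply reuse precision $\precision$ for the call to $\compute$. The bound $p\leq 1$ controls the amplification of $p$'s error, while the precomputed constant $C$ controls the amplification of the exponential's error; since $C$ is obtained effectively from $\Depsilon$, the inflated precision $\precision_2$ is itself computable from the inputs, so $\computescale[\precision]$ is a genuine algorithm. Everything else — the computability of $\eulerv{\Depsilon}$, the formula $[a,b]\cdot[L,U]=[aL,bU]$ for nonnegative intervals, and the width identity above — is routine, so this establishes Proposition~\ref{prop:approximable}.
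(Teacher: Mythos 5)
Your proposal is correct: the interval-product construction, the capping of $[L,U]$ to $[0,1]$, and the width identity $bU-aL=Uw_1+aw_2\leq w_1+Cw_2$ all check out, and the choice $\precision_1=\precision+1$, $\precision_2=\precision+1+\lceil\log_2 C\rceil$ does yield a $\precision$-approximation of $\eulerv{\Depsilon}\pr{\epsilon,u,o,\Prog}$. The paper states Proposition~\ref{prop:approximable} without proof, treating it as an immediate consequence of the computability of $\eulerv{\Depsilon}$ for rational $\Depsilon$ and closure of computable reals under multiplication; your argument is precisely that implicit reasoning carried out in full, with the error-amplification bookkeeping made explicit.
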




The $\VerifyDP$ algorithm (See Algorithm \ref{alg:overall-verification}) checks differential privacy for $\Prog=(\inputset,\outputset,S)$ for all input pairs given by the adjacency relation \( \adjacent \), privacy parameter \( \epsilon \), error parameter \( \delta \) and privacy budget $\Depsilon$. The algorithm assumes the existence of  $\compute(\cdot),$ $\computescale(\cdot),$ and proceeds as follows.

\begin{algorithm}[ht]
\DontPrintSemicolon

\KwIn{Program $\Prog$,  Adjacency $\adjacent$,   Privacy parameter $\epsilon$,  Error parameter $\delta$, Privacy Budget $\Depsilon$, precision $\precision$}
\;
\KwOut{One of
(a) $\diffprivate$ (satisfies DP)
(b) $\notdiffprivate$ (violates DP)
(c) $\unresolved$
}
\;
$\store \gets \emptyset$\;
$b \gets \true$\;
\ForEach{$(u, u')$ $\in$ $\adjacent$}{
    \ForEach{$o$ $\in$ $\cV$}{
        \If{$(o, u) \notin \store$}{
            $\store[(o, u)] \gets \compute( \epsilon, u, o, \Prog)$\;
        }

        \If{$(o, u') \notin \stores$}{
            $\stores[(o, u')] \gets \computescale( \Depsilon,\epsilon, u, o, \Prog)$\;
        }
    }

    $\s{res} = \VerifyDPPair(u, u', \delta, \store,\stores)$\;

    \If{\textnormal{$\s{res}$ $=$ $\notdiffprivate$}}{
       \Return $\notdiffprivate$\;
    }

     \If{\textnormal{$\s{res}$ $=$ $\unresolved$}}{
       $b \gets \false$\;
    }
}

\If{b}{
    \Return $\diffprivate$\;
}
\Return $\unresolved$\;
\caption{\VerifyDP}
\label{alg:overall-verification}
\end{algorithm}

\begin{algorithm}[ht]
\SetAlgoLined
\DontPrintSemicolon
\SetKwProg{Fn}{Function}{}{}
\Fn{\VerifyDPFunc{$u, u', \delta, \store,\stores$}}{

\nl $\sumdeltamin \gets 0$\;
\nl $\sumdeltamax \gets 0$\;

\nl \ForEach{$o$ $\in$ $\cV$}{
\nl   $\intprob_{u} \gets \store[(o, u)]$\;
\nl   $\intprob_{u'} \gets \stores[(o, u')]$\;
\nl    $L_1, U_1 \gets \lowerbound(\intprob_{u}), \upperbound(\intprob_{u})$\;
\nl     $L_2, U_2 \gets \lowerbound(\intprob_{u'}), \upperbound(\intprob_{u'})$\;
\nl    $\sumdeltamax \gets \sumdeltamax + \max(U_1 - L_2,0)$\;
 \nl   $\sumdeltamin \gets \sumdeltamin + \max(L_1 - U_2,0)$\;
 }
\nl \If{$\sumdeltamax \leq \delta$}{
 \nl   \Return $\diffprivate$\;
}

\nl \If{$\sumdeltamin > \delta$}{
 \nl  \Return $\notdiffprivate$\;
}
\nl \Return $\unresolved$\;

}
\caption{\VerifyDPPair}
\label{alg:abstract-dp}
\end{algorithm}

The algorithm processes one adjacent pair at a time.  The algorithm also maintains a flag $b$. Intuitively, the flag $b$ is $\true$ if $\Prog$ is differentially private for all input pairs $u$ and $u'$ checked thus far. 
For each adjacent pair $(u,u') \in \adjacent$, and each output $o\in\cV,$ the algorithm calls $\compute(\epsilon,u,o,\Prog),$   and $\computescale(\Depsilon,\epsilon,u',o,\Prog).$  The resulting values are
 stored in dictionaries $\store$ and $\stores.$ The dictionary $\store$ stores the output probabilities, and the dictionary $\stores$ stores the scaled output probabilities.
Once the probabilities for each output $o$ have been stored, the algorithm calls the function $\VerifyDPPair$ to either prove or disprove differential privacy for the input pair $u$ and $u'$, or to indicate that the current precision $\precision$ is insufficient (i.e., the result is $\unresolved$).

If $\VerifyDPPair$ returns $\notdiffprivate$ for any input pair, the flag $b$  is set to $\false$ and the algorithm  terminates immediately, concluding that the program is not differentially private. In cases where $\VerifyDPPair$ returns $\unresolved$ for a particular pair, the flag $b$  is set to $\false$. However, we continue to process the remaining pairs in $\adjacent.$ The rationale is that we may be able to conclude that $\Prog$ is not differentially private for some other pair that has not been checked as yet.   

After processing all adjacent pairs, if the flag $b$ remains $\true$, the algorithm returns that the program is differentially private; otherwise, it returns $\unresolved$.

\subsection*{The \(\VerifyDPPair\) function}
We now discuss the \(\VerifyDPPair\) function (Algorithm~\ref{alg:abstract-dp}). Given an adjacent input pair $u,u'$,  the error parameter \(\delta\), the dictionaries \(\store\) and \(\stores\), and precision $\precision$, \(\VerifyDPPair\) is tasked with proving or disproving differential privacy.

The function iterates over the set of outputs. For each output $o,$ it computes an upper bound and lower bound on $\max(\pr{\epsilon,u,o,\Prog}- 
\eulerv{\Depsilon} \pr{\epsilon,u',o,\Prog},0)$ as follows.  
For each output $o$, it retrieves that intervals $\intprob_u=[L_1,U_1]$ and $\intprob_{u'}=[L_2,U_2]$, the intervals containing the  $\pr{\epsilon,u,o,\Prog}$ and   $\eulerv\Depsilon\pr{\epsilon,u',o,\Prog}$ respectively.  Now, $\max(U_1-L_2,0)$ ($\max(L_1-U_2,0)$, respectively) can be seen to be an upper bound (lower bound, respectively) on $\max(\pr{\epsilon,u,o,\Prog}- 
\eulerv{\Depsilon} \pr{\epsilon,u',o,\Prog},0).$

The upper bounds and lower bounds on $\max(\pr{\epsilon,u,o,\Prog}- 
\eulerv{\Depsilon} \pr{\epsilon,u',o,\Prog},0)$ are accumulated in $\sumdeltamax$ and
$\sumdeltamin$ respectively. Once the iteration over the set of outputs is over, $\VerifyDPPair$ declares that $\Prog$ is differentially private for $u,u'$ if $\sumdeltamax\leq \delta$ and not differentially private if $\sumdeltamin > \delta.$
If neither $\sumdeltamax\leq \delta$ not $\sumdeltamin > \delta$, then $\VerifyDPPair$ returns  $\unresolved.$

\subsection[On the soundness and completeness of VerifyDP]{On the soundness and completeness of \(\VerifyDP\)} 

The following lemma states that \(\VerifyDP\) always gives a sound answer. 
 We will postpone the proof of the Lemma, and prove it along with Lemma~\ref{lem:DpVerify2}.
\begin{lemma}[Soundness of \(\VerifyDP\)]
\label{lem:DpVerify1}
Given precision $\precision,$
let the output and scaled output probabilities be effectively $\precision$-approximable.
Let $\Prog=(\inputset,\outputset,S)$ be a program with privacy parameter $\epsilon.$ Let $\adjacent$ be an adjacency relation, $\Depsilon>0$ be a privacy budget and $\delta\in \unit$ be an error parameter.
\begin{enumerate}
\item If $\VerifyDP(\Prog,\adjacent,\epsilon,\Depsilon,\delta,\epsilon)$   returns $\notdiffprivate$ for precision $\precision$   then $\Prog$ does not satisfy  $(\Depsilon,\delta)$-differential privacy with respect to $\adjacent$. 
\item If $\VerifyDP(\Prog,\adjacent,\epsilon,\Depsilon,\delta,\epsilon)$   returns $\diffprivate$ for precision $\precision$  then $\Prog$ satisfies  $(\Depsilon,\delta)$-differential privacy with respect to $\adjacent$. 
\end{enumerate}
\end{lemma}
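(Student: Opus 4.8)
The plan is to reduce the soundness of \(\VerifyDP\) to the per-pair analysis performed by \(\VerifyDPPair\), and then to invoke the rephrasing of differential privacy from Lemma~\ref{lem:DiffPrivacyRephrased}. The single quantitative fact driving everything is the approximation hypothesis: for a fixed adjacent pair \((u,u')\) and a fixed output \(o\), the interval \(\intprob_u = [L_1,U_1] = \store[(o,u)]\) contains \(\pr{\epsilon,u,o,\Prog}\), and the interval \(\intprob_{u'} = [L_2,U_2] = \stores[(o,u')]\) contains the scaled probability \(\eulerv{\Depsilon}\pr{\epsilon,u',o,\Prog}\), since both quantities are effectively \(\precision\)-approximable. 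I read the stored entries according to these intended meanings, which is all the correctness argument needs.

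First I would establish the two-sided enclosure \(\sumdeltamin \leq \deltaadjs{u}{u'} \leq \sumdeltamax\) that \(\VerifyDPPair\) accumulates. Writing \(a_o = \pr{\epsilon,u,o,\Prog}\) and \(b_o = \eulerv{\Depsilon}\pr{\epsilon,u',o,\Prog}\), we have \(L_1 \leq a_o \leq U_1\) and \(L_2 \leq b_o \leq U_2\), hence \(L_1 - U_2 \leq a_o - b_o \leq U_1 - L_2\). Since \(t \mapsto \max(t,0)\) is monotone, this yields \(\max(L_1 - U_2, 0) \leq \max(a_o - b_o, 0) \leq \max(U_1 - L_2, 0)\) for every \(o\). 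Summing over \(o \in \cV\) and recalling the definition of \(\deltaadjs{u}{u'}\) from Lemma~\ref{lem:DiffPrivacyRephrased} gives exactly \(\sumdeltamin \leq \deltaadjs{u}{u'} \leq \sumdeltamax\). This is the crux and essentially the only computation; the rest is bookkeeping on the control flow.

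Next I would discharge the two cases. For part (1), \(\VerifyDP\) returns \(\notdiffprivate\) only because some pair \((u,u')\) made \(\VerifyDPPair\) return \(\notdiffprivate\), i.e. \(\sumdeltamin > \delta\); combined with the left enclosure this gives \(\deltaadjs{u}{u'} > \delta\), so the criterion of Lemma~\ref{lem:DiffPrivacyRephrased} fails for this pair and \(\Prog\) is not \((\Depsilon,\delta)\)-differentially private. For part (2), \(\VerifyDP\) returns \(\diffprivate\) only when the flag \(b\) is never cleared, meaning no pair produced \(\notdiffprivate\) or \(\unresolved\); hence \(\VerifyDPPair\) returned \(\diffprivate\), i.e. \(\sumdeltamax \leq \delta\), for every \((u,u') \in \adjacent\). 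The right enclosure then yields \(\deltaadjs{u}{u'} \leq \sumdeltamax \leq \delta\) for all adjacent pairs, and Lemma~\ref{lem:DiffPrivacyRephrased} certifies \((\Depsilon,\delta)\)-differential privacy.

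The argument is not deep; the points to get right are the \emph{direction} of each interval bound — pairing the upper endpoint \(U_1\) of \(a_o\) with the lower endpoint \(L_2\) of \(b_o\) for the upper estimate, and symmetrically for the lower — and the observation that the global conclusion legitimately follows from the per-pair enclosures only because Lemma~\ref{lem:DiffPrivacyRephrased} has already collapsed the quantification over all output subsets \(F\) into the single quantity \(\deltaadjs{u}{u'}\). I expect the only subtlety to be the careful handling of the early \(\notdiffprivate\) return versus the deferred \(\unresolved\) handling of the flag \(b\): I must check that \(b\) remaining \(\true\) genuinely forces \emph{every} adjacent pair into the \(\diffprivate\) branch, so that the premise of Lemma~\ref{lem:DiffPrivacyRephrased} holds universally over \(\adjacent\).
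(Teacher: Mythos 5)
Your proposal is correct and follows essentially the same route as the paper: the paper also establishes the enclosure $\sumdeltamin \leq \deltaadjs{u}{u'} \leq \sumdeltamax$ from the stored intervals and then appeals (implicitly) to Lemma~\ref{lem:DiffPrivacyRephrased} to convert the per-pair bounds into (non-)differential privacy. Your version is in fact slightly leaner, since for soundness you only use interval containment and monotonicity of $\max(\cdot,0)$, whereas the paper derives the stronger two-sided bounds with the $2|\cV|/2^{\precision}$ slack that are really only needed for the companion completeness lemma.
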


 

The following lemma states that, if $\Prog$ is not differentially private, then  \(\VerifyDP\) will return $\notdiffprivate$ for large enough precision, and if 
$\Prog$ is differentially private then  \(\VerifyDP\) will return $\diffprivate$ for all non-critical error parameters, for large enough precision. 
\begin{lemma} [Completeness of \(\VerifyDP\)]
\label{lem:DpVerify2}
Let the output probability be effectively approximable for all precision $\precision.$
Let $\Prog=(\inputset,\outputset,S)$ be a program with privacy parameter $\epsilon.$ Let $\adjacent$ be an adjacency relation, $\Depsilon>0$ be a privacy budget and $\delta\in \unit$ be an error parameter. 
\begin{enumerate}

\item If $\Prog$ does not satisfy $(\Depsilon,\delta)$-differential privacy with respect to $\adjacent$, then there is a precision $\precision_0$ such that $\VerifyDP(\Prog,\adjacent,\epsilon,\Depsilon,\delta,\epsilon)$ returns $\notdiffprivate$ for each $\precision>\precision_0.$ 
\item If $\Prog$ satisfies $(\Depsilon,\delta)$-differential privacy with respect to $\adjacent$, and $ \delta \not \in
\critical$ (see Definition~\ref{def:critical})
then there is a precision $\precision_0$ such that $\VerifyDP(\Prog,\adjacent,\epsilon,\Depsilon,\delta,\epsilon)$ returns $\diffprivate$ for each $\precision>\precision_0.$  
 

\end{enumerate}
\end{lemma}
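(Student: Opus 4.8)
The plan is to reduce both statements to the rephrased characterization of Lemma~\ref{lem:DiffPrivacyRephrased} and then to control how the quantities $\sumdeltamax$ and $\sumdeltamin$ computed inside $\VerifyDPPair$ track the true value $\deltaadjs{u}{u'}$ as the precision $\precision$ grows. (By Proposition~\ref{prop:approximable}, the hypothesis that output probabilities are effectively approximable for all $\precision$ also gives us $\computescale$ for all $\precision$, so both dictionaries can be filled to any precision.) The first step I would establish is a sandwich inequality valid at every precision and every adjacent pair: since the stored intervals satisfy $L_1 \le \pr{\epsilon,u,o,\Prog} \le U_1$ and $L_2 \le \eulerv{\Depsilon}\pr{\epsilon,u',o,\Prog} \le U_2$, monotonicity of $t \mapsto \max(t,0)$ gives, termwise in $o$, that $\max(L_1-U_2,0) \le \max(\pr{\epsilon,u,o,\Prog}-\eulerv{\Depsilon}\pr{\epsilon,u',o,\Prog},0) \le \max(U_1-L_2,0)$, and summing over $o\in\cV$ yields $\sumdeltamin \le \deltaadjs{u}{u'} \le \sumdeltamax$.

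The second step is a uniform convergence bound. Each interval has width at most $2^{-\precision}$, and $t\mapsto\max(t,0)$ is $1$-Lipschitz, so for each output $\max(U_1-L_2,0)-\max(L_1-U_2,0) \le (U_1-L_1)+(U_2-L_2) \le 2^{-\precision+1}$; summing over the finitely many outputs gives
\[
0 \;\le\; \sumdeltamax-\sumdeltamin \;\le\; |\cV|\cdot 2^{-\precision+1}.
\]
Combined with the sandwich, this shows both $\sumdeltamax$ and $\sumdeltamin$ converge to $\deltaadjs{u}{u'}$ at a rate that does not depend on the pair $(u,u')$, which is what lets me eventually choose a single precision threshold uniformly over the finitely many pairs in $\adjacent$.

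For Part~1, non-DP and Lemma~\ref{lem:DiffPrivacyRephrased} give a pair with $\deltaadjs{u}{u'} > \delta$; setting $\gamma=\deltaadjs{u}{u'}-\delta>0$ and using $\sumdeltamin \ge \deltaadjs{u}{u'}-|\cV|\cdot 2^{-\precision+1}$, any precision with $|\cV|\cdot 2^{-\precision+1}<\gamma$ forces $\sumdeltamin>\delta$, so $\VerifyDPPair$ returns $\notdiffprivate$ for that pair and hence $\VerifyDP$ returns $\notdiffprivate$. Because the error bound is monotone decreasing in $\precision$, this holds for all $\precision>\precision_0$.

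For Part~2, I would first note that $\VerifyDP$ returns $\diffprivate$ exactly when every pair yields $\diffprivate$, i.e.\ $\sumdeltamax\le\delta$ for all pairs. The main obstacle — and the reason the hypothesis $\delta\notin\critical$ is indispensable — is that DP only supplies the non-strict bound $\deltaadjs{u}{u'}\le\delta$, whereas driving the over-approximation $\sumdeltamax$ below $\delta$ requires a strict gap $\deltaadjs{u}{u'}<\delta$; at equality $\sumdeltamax$ can never dip below $\delta$ and the pair stays $\unresolved$ forever. This is precisely where $\delta\notin\critical$ enters: since $\critical = \{\deltaadjs{u}{u'} : (u,u')\in\adjacent\}$ is finite (the input domain is finite), $\delta\notin\critical$ excludes equality, and together with DP it gives $\deltaadjs{u}{u'}<\delta$ for \emph{every} pair. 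Taking $\gamma_0=\min_{(u,u')}(\delta-\deltaadjs{u}{u'})>0$ and any precision with $|\cV|\cdot 2^{-\precision+1}<\gamma_0$, the bound $\sumdeltamax \le \deltaadjs{u}{u'}+|\cV|\cdot 2^{-\precision+1}$ yields $\sumdeltamax<\delta$ simultaneously for all pairs, so $\VerifyDP$ returns $\diffprivate$ for all $\precision>\precision_0$.
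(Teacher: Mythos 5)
Your proposal is correct and follows essentially the same route as the paper's proof: both establish that $\sumdeltamin$ and $\sumdeltamax$ sandwich $\deltaadjs{u}{u'}$ to within $2\abs{\cV}/2^{\precision}$, then use the witness pair with $\deltaadjs{u}{u'}>\delta$ for Part~1 and the positive minimum gap $\min_{(u,u')}(\delta-\deltaadjs{u}{u'})$ over the finitely many adjacent pairs for Part~2. Your derivation of the error bound via the $1$-Lipschitz property of $t\mapsto\max(t,0)$ is a slightly cleaner packaging of the paper's direct interval arithmetic, but the content is identical.
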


\begin{proof} (\textbf{Proofs of Lemma~\ref{lem:DpVerify1} and Lemma~\ref{lem:DpVerify2}})

We prove both Lemma~\ref{lem:DpVerify1} and Lemma~\ref{lem:DpVerify2} together.
Let $(u,u')\in \adjacent$. Let 
$$\deltaadjs{u}{u'}= \sum_{o\in \cV} \mathsf{max}(\pr{\epsilon,u,o, \Prog}-\eulerv{\Depsilon} \pr{\epsilon,u',o,\Prog},0).$$

Let $o\in \cV$ be an output.
After $\VerifyDP$ processes $(u,u')$ and the output $o$, let $L_1(u,o), U_1(u,o), L_2(u',o)$, and $U_2(u',o)$ be such that 
 \begin{align*}
     &\store[(u,o)]=[L_1(u,o),U_2(u,o)] \mbox{ and}\\
     &\stores[(u',o)]=[L_2(u',o),U_2(u',o)].\
 \end{align*}

We have the following equations:
\begin{align*}
&L_1(u,o)\leq \pr{\epsilon,u,o,\Prog} \leq  L_1(u,o) + \frac 1 {2^\precision}\\
&U_1(u,o)\geq \pr{\epsilon,u,o,\Prog} \geq  U_1(u,o) - \frac 1 {2^\precision}\\
&L_2(u',o)\leq \eulerv \Depsilon \pr{\epsilon,u',o,\Prog} \leq  L_2(u,o) + \frac 1 {2^\precision}\\
&U_2(u',o)\geq \eulerv \Depsilon \pr{\epsilon,u',o,\Prog} \geq  U_2(u,o) - \frac 1 {2^\precision}.\\
\end{align*}
This implies that 
\begin{align*}
L_1(u,o)- U_2(u',o) \leq \pr{\epsilon,u,o,\Prog} - & \eulerv \Depsilon \pr{\epsilon,u',o,\Prog} \\ &\leq L_1(u,o)- U_2(u',o) + \frac 2 {2^\precision} 
\end{align*}
and 
\begin{align*}
U_1(u,o)- L_2(u',o) \geq \pr{\epsilon,u,o,\Prog} - &  \eulerv\Depsilon \pr{\epsilon,u',o,\Prog} \\ &\geq U_1(u,o)- L_2(u',o) - \frac 2 {2^\precision}.
\end{align*}
Therefore, when $\VerifyDPPair(u,u',\delta,\store,\stores)$ is executed then at the end of the for loop in $\VerifyDPPair$, the following equations hold.\: 
\begin{align}
    &\sumdeltamin \leq \deltaadjs{u}{u'}  \leq \sumdeltamin + \frac {2 \abs{\cV}}
    {2^\precision}    \label{LemmaDP:eq1}\\
    &\sumdeltamax \geq \deltaadjs{u}{u'}  \geq \sumdeltamax - \frac {2 \abs{\cV}}
    {2^\precision}. \label{LemmaDP:eq2}
\end{align}

\paragraph*{Finishing Lemma~\ref{lem:DpVerify1} Proof.}
It is easy to see that Equation~\ref{LemmaDP:eq1} and Equation~\ref{LemmaDP:eq2} imply the  two parts of the of Lemma~\ref{lem:DpVerify1}.

\paragraph*{Finishing Lemma~\ref{lem:DpVerify2} Proof.}
Recall that if output probabilities are effectively approximable, then so are scaled output probabilities. (Proposition~\ref{prop:approximable}).
For part one of the Lemma~\ref{lem:DpVerify2}, observe that Equation~\ref{LemmaDP:eq1} implies that 
$$\sumdeltamin \geq \deltaadjs{u}{u'}  -\frac {2\abs \cV} {2^\precision}$$ and thus 
$$\sumdeltamin - \delta\geq (\deltaadjs{u}{u'}  - \delta) -\frac{2\abs \cV} {2^\precision}.$$

Now, part one of the Lemma~\ref{lem:DpVerify2} follows from the observation that if  $\Prog$ is not differentially private then there must be $(u,u')\in \adjacent$ such that $\deltaadjs{u}{u'}  - \delta>0$ and hence there is a precision $\precision_0$ such $(\deltaadjs{u}{u'}  - \delta) -\frac {2\abs \cV} {2^{\precision_0}} >0.$

For part two of the Lemma~\ref{lem:DpVerify2}, observe that we have from Equation~\ref{LemmaDP:eq2} that for each $(u,u')\in \adjacent,$
$$ -\sumdeltamax + \frac{2 \abs \cV}{2^\precision}\geq -\deltaadjs{u}{u'}  .$$
Hence, 
$$ \delta -\sumdeltamax  \geq (\delta -\deltaadjs{u}{u'} ) - \frac{2 \abs \cV}{2^\precision}.$$

Now if $\Prog$ is differentially private and $\delta\notin \critical,$ then  $\delta -\deltaadjs{u}{u'} >0$ for each $(u,u')\in \adjacent.$ As there are only a \emph{finite} number of pairs $(u,u')\in \adjacent,$ it implies that $$\min_{(u,u')\in \adjacent}(\delta-\deltaadjs{u}{u'} )>0.$$ 
From this, it is easy to see that there is a precision $\precision_0$ such that $(\delta -\deltaadjs{u}{u'} ) - \frac{2 \abs \cV}{2^{\precision_0}}>0$ for each 
$(u,u')\in \adjacent.$ Thus, $\VerifyDPPair$ will return $\diffprivate$ for each $(u,u')\in \adjacent$ when run with precision $\precision_0.$
\end{proof}

Suppose, we run $\VerifyDP$ repeatedly for $\Prog$ by incrementing the precision $\precision$ until the algorithm returns {\diffprivate} or {\notdiffprivate}. If $\Prog$ is not differentially private then we are guaranteed to see {\notdiffprivate} by Lemma~\ref{lem:DpVerify2}. If   $\Prog$ is differentially private then we will eventually see {\diffprivate} for all non-critical error parameters $\delta$.  Thus, if we can show that if output probabilities are effectively approximable, 
we can conclude that checking non-differential privacy of {\newlanggauss} programs is recursively enumerable, and decidable for all but a finite set of error parameters. This is subject of next section.
  

\section{Approximating Output Probabilities}
\label{sec:compute_prob}

Assume that we are given a {\ourlang} program $\Prog_\epsilon=(\inputset,\outputset,S)$ with privacy parameter $\epsilon$. We describe the algorithm $\compute$ that computes $\precision$-approximants of $\pr{\epsilon,u,o,P}.$ Fix $\epsilon>0,u$ and $o$. 
Recall that $\pr{\epsilon,u,o,\Prog}$ (See Section \ref{sec:our_language}), is given by:
\[\pr{\epsilon,  u, o,\Prog} = \sum_{\tau \in \run(u,o,\Prog)} \pr{\epsilon, \tau}.\]
For computing the above value, we need to compute $\pr{\epsilon, \tau}$, for each final state $\tau\in \run(u,o,\Prog).$ Furthermore, observe that since the number of final states of a program is independent of the precision $\precision,$ it suffices to show that there is an algorithm that produces a $\precision_0$-approximant of $\pr{\epsilon, \tau}$ for given $\precision_0.$ 

Fix \( \tau = (\dompartialfunc, \randompartialfunc, G) \). As given in Section \ref{sec:our_language}, let \( \Gconst \) and \( \Grandom \) correspond to the set of guards in $G$ with comparison of domain variables and comparison of random variables, respectively. Suppose \( \evalconstant(\Gconst) = \true \). Then, $\pr{\epsilon, \tau}$ is given by \( \pr{\epsilon, \evalrandom(\Grandom)} \). If \( \evalconstant(\Gconst) = \false, \) $\pr{\epsilon, \tau}=0.$ Since, \( \evalconstant(\Gconst)\) can be easily computed, it suffices to show that \( \pr{\epsilon, \evalrandom(\Grandom)}\)  can be computed up-to any precision.

Recall that the variables appearing in the guards of $\Grandom$ are all independent variables in $\cR$. Let $\rv_0, \rv_1, \dots, \rv_{n-1}$ denote the variables that appear in the guards of $\Grandom$. 
Also, recall that $\randomvar_{\rv_0}, \randomvar_{\rv_1}, \dots, \randomvar_{\rv_{n-1}}$ are independent random variables of Laplacian or Gaussian distributions. Let $\mean_0, \mean_1, \dots, \mean_{n-1}$ denote the means of these random variables, and let $\std_0, \std_1, \dots, \std_{n-1}$ denote their standard deviations, respectively. 
Let $h_0, h_1, \dots, h_{n-1}$ be the corresponding probability density functions of these random variables. Observe that $\evalrandom(\Grandom)$ is a conjunction of linear constraints over $\randomvar_{\rv_0}, \randomvar_{\rv_1}, \dots, \randomvar_{\rv_{n-1}}$.

We shall exploit the observation that the probability $\pr{\epsilon, \tau}$ can be expressed as sum of nested definite integrals. The primary challenge in exploiting this observation is that such integrals may have  $\infty$ or $-\infty$ as bounds. 
 We will handle this challenge as follows.  Given a threshold $\thr>0$, we can write $\pr{\epsilon,\tau}$ as $\bpr{\epsilon,\tau,\thr} +  \tpr{\epsilon,\tau,\thr}$. Here $\bpr{\epsilon,\tau,\thr}=\displaystyle \pr{\epsilon, \evalrandom(\Grandom)\wedge \bigwedge_{0\leq i\leq n-1} (\mean_i-\thr \cdot  \std_i \leq  \randomvar_{\rv_i} \wedge  \set{\randomvar_{\rv_i} \leq \mean_i+\thr \cdot  \std_i )}}$ is the probability of obtaining output $o$ on input $u$ under the constraint that each sampled value $\randomvar_{\rv_i}$ in $\tau$ remains within $\thr \cdot  \std_i$ from $\mean_i$. The term $\tpr{\cdot}$ accounts for the tail probability, capturing cases where at least one sampled value $\randomvar_{\rv_i}$ deviates by at least $\thr \cdot  \std_i$ from $\mean_i$. 
 As we shall argue shortly, $\bpr{\cdot}$ can be computed to any desired precision.  Moreover, by known tail bounds,  we can always select $\thr$ such that $\tpr {\epsilon,\tau,\thr}$ arbitrarily small. This will guarantee that $\pr{\epsilon,\tau}$ can be computed to any required degree of precision. 
We have the following observation:
 \begin{lemma}[Choosing $\thr$]
 \label{lem:threshold}
 There is an algorithm that given program $\Prog,$  final state $\tau$ of $\Prog,$ rational number $\epsilon>0,$ and precision $\precision,$ outputs $\thr$ such that $0\leq \tpr {\epsilon,\tau,\thr}\leq \frac {1}{2 \cdot 2^{\precision}}.$
 \end{lemma}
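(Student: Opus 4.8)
The plan is to bound the tail term $\tpr{\epsilon,\tau,\thr}$ by a sum of single-variable tail probabilities using a union bound, and then to invoke the tail bounds recalled in Section~\ref{sec:preliminaries} to pick $\thr$ by an explicit formula. First I would unfold the definitions, assuming $\evalconstant(\Gconst)=\true$ (the case $\false$ is trivial, as then $\pr{\epsilon,\tau}=\tpr{\epsilon,\tau,\thr}=0$ and any $\thr$ works). Since $\pr{\epsilon,\tau}=\bpr{\epsilon,\tau,\thr}+\tpr{\epsilon,\tau,\thr}$ with $\pr{\epsilon,\tau}=\pr{\epsilon,\evalrandom(\Grandom)}$ and $\bpr{\epsilon,\tau,\thr}$ the probability of $\evalrandom(\Grandom)\wedge\bigwedge_{i=0}^{n-1}(\mean_i-\thr\cdot\std_i\leq \randomvar_{\rv_i}\leq \mean_i+\thr\cdot\std_i)$, subtracting shows that $\tpr{\epsilon,\tau,\thr}$ equals the probability of $\evalrandom(\Grandom)\wedge\bigvee_{i=0}^{n-1}(\abs{\randomvar_{\rv_i}-\mean_i}>\thr\cdot\std_i)$. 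Non-negativity $0\leq\tpr{\epsilon,\tau,\thr}$ is then immediate, as it is a probability.

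Next I would discard the conjunct $\evalrandom(\Grandom)$, which can only increase the probability, and apply the union bound:
\[
\tpr{\epsilon,\tau,\thr}\ \leq\ \pr{\bigvee_{i=0}^{n-1}(\abs{\randomvar_{\rv_i}-\mean_i}>\thr\cdot\std_i)}\ \leq\ \sum_{i=0}^{n-1}\tl{\thr,\randomvar_{\rv_i},\mean_i,\std_i}.
\]
Each summand is the two-sided tail probability of a random variable with finite mean and non-zero standard deviation, so Chebyshev's inequality (recalled in Section~\ref{sec:preliminaries}) gives $\tl{\thr,\randomvar_{\rv_i},\mean_i,\std_i}\leq 1/\thr^2$ uniformly, independent of whether $\randomvar_{\rv_i}$ is Gaussian or Laplace and of the values of $\mean_i,\std_i$ (which themselves depend on $\epsilon$). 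Hence $\tpr{\epsilon,\tau,\thr}\leq n/\thr^2$.

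Finally, the count $n$ of independent random variables appearing in the guards of $\Grandom$ is read off syntactically from $\Prog$ and $\tau$, and is therefore computable. It then suffices to output any rational $\thr$ with $\thr^2\geq 2n\cdot 2^{\precision}$, for example $\thr=\lceil\sqrt{2n\cdot 2^{\precision}}\,\rceil$, since then $n/\thr^2\leq 1/(2\cdot 2^{\precision})$. This $\thr$ is plainly computable from $\Prog,\tau,\epsilon,\precision$, which establishes the claimed algorithm.

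I do not expect a serious obstacle; the only step requiring care is the first, namely correctly reading off the tail event as $\evalrandom(\Grandom)\wedge\bigvee_i(\abs{\randomvar_{\rv_i}-\mean_i}>\thr\cdot\std_i)$ from the decomposition and then justifying that dropping $\evalrandom(\Grandom)$ before the union bound is sound. The use of the generic Chebyshev bound is what makes the argument uniform and spares us any case split on the distribution type; if a smaller, more practical $\thr$ were desired, one would instead substitute the sharper bounds $\eulerv{-\thr\sqrt{2}}$ for Laplace and $2\eulerv{-\thr^2/2}$ for Gaussian variables, but these are unnecessary for the existence statement.
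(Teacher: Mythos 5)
Your proposal is correct and follows essentially the same route as the paper's proof: both reduce $\tpr{\epsilon,\tau,\thr}$ to the sum $\sum_{i=0}^{n-1}\tl{\thr,\randomvar_{\rv_i},\mean_i,\std_i}$ via a union bound and then drive that sum below $\frac{1}{2\cdot 2^{\precision}}$ using the tail bounds recalled in Section~2. The only difference is presentational --- the paper picks a $\thr_i$ per variable from a generic monotonically decreasing bound $k_i(\cdot)$ tending to $0$ and takes the maximum, whereas you commit to the uniform Chebyshev bound and give the closed form $\thr=\lceil\sqrt{2n\cdot 2^{\precision}}\rceil$; your explicit unfolding of the tail event and justification of dropping $\evalrandom(\Grandom)$ before the union bound is in fact more detailed than the paper's ``it is easy to see.''
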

\begin{proof}
 Given a threshold $\thr,$ it is easy to see that $0\leq \tpr {\epsilon,\tau,\thr}\leq \sum^{n-1}_{i=0}\tl{\thr,X_{\rv_i}, \mean_i, \sigma_i}.$ From the known tail bounds (See Section~\ref{sec:preliminaries}), we know that
 that, for each $i$, there is a monotonically decreasing function \(k_i(\cdot)\) such that \(\tl{\thr,X_{\rv_i}, \mean_i, \sigma_i}\leq k_i(\thr).\)
 Furthermore, $\lim_{\thr \to \infty} k_i(\thr)=0.$  From these observations and the known tail bounds,  
we can choose $\thr_i$ such that $0\leq \tl{\thr,X_{\rv_i}, \mean_i, \sigma_i}\leq \frac {1}{2n 2^{\precision}}$ for each $\thr\geq \thr_i.$
 The result follows by choosing $\thr=\max_{0\leq i\leq n-1} \thr_i. $
\end{proof}

\subsection{Computing probabilities via integral expressions}

The computation of  $\bpr{\epsilon,\tau,\thr}$ for a given threshold $\thr$ is accomplished by constructing a proper nested definite integral expression.\footnote{By a definite proper integral, we mean an integral where the function being integrated is continuous on a bounded finite interval, and both the limits of integration are finite.}  
To derive the integral expression, we analyze the set $\Grandom$ of guards along with the set of the equations $\set{\mean_i-\thr_i\cdot \std_i \leq \randomvar_{\rv_i} \st 0\leq i\leq n-1} \cup \set{\randomvar_{\rv_i} \leq \mean_i+\thr_i\cdot \std_i \st 0\leq i\leq n-1}$. 



An integral expression $\intexpr$ over $\randomvar_{\rv_0},\ldots,\randomvar_{\rv_n}$ is said to be in {\emph{normalized}} form if there exists a permutation $\pi(0),\pi(1),...,\pi(n-1)$ of the indices $0,...,n-1$ and rational constants  $\theta_0^{-},\theta_0{^+},$ and \emph{linear} functions $\theta_i^{-}(y_0,y_1,...,y_{i-1})$, $\theta_i^{+}(y_0,y_1,...,y_{i-1})$ in the variables $y_0,...,y_{i-1},$ for $1\leq i\leq n-1$, 
such that 
\begin{equation}
\intexpr =\int_{\theta_0^{-}}^{\theta_0^{+}}h_{\pi(0)}(y_0)\int_{\theta_1^{-}}^{\theta_1^{+}}h_{\pi(1)}(y_1)\dots
\int_{\theta_{n-1}^{-}}^{\theta_{n-1}^{+}}h_{\pi(n-1)}(y_{n-1}) dy_{n-1}...dy_{0}
\label{eq:normalized-form}
\end{equation}

\begin{lemma}
    \( \bpr{\epsilon,\tau,\thr}=\displaystyle \pr{\epsilon, \evalrandom(\Grandom)\wedge \bigwedge_{0\leq i\leq n-1} (\mean_i-\thr \cdot  \std_i \leq  \randomvar_{\rv_i} \wedge  \set{\randomvar_{\rv_i} \leq \mean_i+\thr \cdot  \std_i )}} \)
    is a finite sum of integral expressions in \emph{normalized} form over the random variables $\randomvar_{\rv_{0}},...,\randomvar_{\rv_{n-1}}.$ 
    \end{lemma}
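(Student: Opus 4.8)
The plan is to view $\bpr{\epsilon,\tau,\thr}$ as the integral of the product density $\prod_{i=0}^{n-1} h_i(y_i)$ over the bounded region $\mathcal{K}\subseteq\Reals^n$ cut out by the guards in $\evalrandom(\Grandom)$ together with the box constraints $\mean_i-\thr\cdot\std_i\le \randomvar_{\rv_i}\le \mean_i+\thr\cdot\std_i$, and then to decompose $\mathcal{K}$ into finitely many cells on each of which Fubini's theorem produces an integral already in normalized form. First I would record that every comparison generated by the non-terminal $R$ is affine in the independent variables $\randomvar_{\rv_0},\dots,\randomvar_{\rv_{n-1}}$ with rational coefficients (domain variables and the numbers $\mean_i\pm\thr\cdot\std_i$ contribute rational constants), so $\mathcal{K}$ is an intersection of finitely many affine half-spaces, i.e.\ a bounded polyhedron, with boundedness guaranteed by the box constraints. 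Since the sampled distributions are absolutely continuous, the hyperplane on which any single constraint is tight has probability zero; hence I may replace each strict inequality by its non-strict version, discard every $\neq$-guard (true almost surely), and observe that any $=$-guard between two non-identical affine forms forces $\bpr{\epsilon,\tau,\thr}=0$, which is an empty (hence finite) sum. It therefore suffices to treat a closed polyhedron defined by non-strict affine inequalities. As the integrand is non-negative and $\mathcal{K}$ is bounded, Fubini's theorem lets me integrate the variables one at a time, and by relabelling I may take the permutation $\pi$ to be the identity, integrating $y_{n-1}$ innermost and $y_0$ outermost.

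The core of the argument is an induction on $n$ that eliminates the innermost variable via Fourier--Motzkin elimination. Fixing $y_0,\dots,y_{n-2}$, I would gather the constraints mentioning $y_{n-1}$; solving each for $y_{n-1}$ gives finitely many affine lower bounds $\ell_k(y_0,\dots,y_{n-2})$ and upper bounds $u_j(y_0,\dots,y_{n-2})$ (the box supplies one constant bound on each side, so the feasible interval is always bounded), while the constraints not mentioning $y_{n-1}$ are set aside. The feasible interval for $y_{n-1}$ is $[\max_k \ell_k,\ \min_j u_j]$, whose endpoints are not affine; to linearize them I partition the space of $(y_0,\dots,y_{n-2})$ into cells according to which $\ell_k$ realizes the maximum and which $u_j$ realizes the minimum. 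Each such cell is itself defined by affine inequalities (the comparisons $\ell_k\ge\ell_{k'}$ and $u_j\le u_{j'}$), and on it the limits $\theta_{n-1}^-=\ell_{k^\ast}$ and $\theta_{n-1}^+=u_{j^\ast}$ are single affine functions of $y_0,\dots,y_{n-2}$, exactly as the normalized form demands. Adjoining to each cell the set-aside constraints and the non-emptiness constraint $\ell_{k^\ast}\le u_{j^\ast}$ produces, for every cell, a residual system of affine inequalities over $y_0,\dots,y_{n-2}$ in which the original box constraints on those variables remain intact; this is an instance of the same problem with one fewer variable.

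To close the induction I would apply the hypothesis to each residual $(n-1)$-variable system, obtaining a finite sum of normalized integrals over $y_0,\dots,y_{n-2}$, and prepend to every summand the innermost factor $\int_{\theta_{n-1}^-}^{\theta_{n-1}^+} h_{n-1}(y_{n-1})\,dy_{n-1}$, which has affine limits. Since finitely many cells are combined with finitely many normalized integrals, the result is again a finite sum of normalized integrals over all $n$ variables. The base case $n=1$ is immediate: the only affine constraints have the form $y_0\le c$ or $y_0\ge c$ with rational $c$, so the feasible set is an interval with constant endpoints and the integral is a single normalized integral $\int_{\theta_0^-}^{\theta_0^+} h_0(y_0)\,dy_0$ (or $0$ if empty), matching the requirement that $\theta_0^\pm$ be rational constants. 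I expect the main obstacle to be the linearization step: justifying that the non-affine bounds $\max_k\ell_k$ and $\min_j u_j$ can be replaced by single affine functions at the cost of only finitely many additional affine cells, and verifying that both the cell-defining inequalities and the residual systems stay conjunctions of affine constraints with the box constraints preserved, so that the induction genuinely applies. The remaining ingredients --- the reduction to non-strict inequalities, boundedness of $\mathcal{K}$, and the applicability of Fubini --- are routine once absolute continuity of the distributions is invoked.
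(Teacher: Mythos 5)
Your proposal is correct, but it is worth noting that the paper does not actually spell out a proof of this lemma at all: it simply asserts that the result ``follows from the results of~\cite{Polytope} and uses the same approach as that used in the proof of Lemma~18 in~\cite{bcjsv20},'' i.e.\ it delegates the decomposition of a polytope integral into iterated integrals with affine limits to cited prior work. What you have written is a self-contained reconstruction of exactly that kind of argument: you observe that all guards are affine in the independent variables with rational coefficients, that the box constraints make the feasible region a bounded polyhedron, that strict/$\neq$/$=$ guards are handled by absolute continuity, and you then run an induction on $n$ that eliminates the innermost variable by Fourier--Motzkin elimination, partitioning the outer variables into affine cells according to which lower bound realizes the maximum and which upper bound realizes the minimum so that the integration limits become single affine functions. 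This is the standard proof of the polytope-decomposition fact the paper cites, so your route buys self-containedness at the cost of length, while the paper's citation buys brevity. Two small points you should tighten: (i) the cells in your partition must be made pairwise disjoint up to Lebesgue-null overlaps (e.g.\ by breaking ties with strict inequalities on one side), otherwise the sum of the cell integrals need not equal the integral over the union; and (ii) an $=$-guard (resp.\ $\neq$-guard) between two \emph{syntactically identical} affine forms is trivially true (resp.\ false) and should be discarded (resp.\ should yield the empty sum) rather than falling under your measure-zero argument. Both are routine and do not affect the correctness of the overall argument.
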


The proof of the above lemma follows from the results of~\cite{Polytope} and uses the same approach as that used in the proof of Lemma 18 in ~\cite{bcjsv20}. We show that the output probabilities are effectively approximable:

\begin{theorem}
\label{thm:computeprob}
  The output probabilities are $\precision$-approximable. That is, there is an algorithm 
  $\compute(\epsilon,u,o,\Prog)$ that takes inputs $\precision,\epsilon,u,o,\Prog$ and returns a $\precision$-approximation of  $\pr{\epsilon,u,o,\Prog}.$
 
\end{theorem}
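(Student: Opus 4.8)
The plan is to assemble the ingredients established above --- the finite decomposition of $\pr{\epsilon,u,o,\Prog}$ into state probabilities, the tail/body split, Lemma~\ref{lem:threshold}, and the preceding normalization lemma --- and to reduce everything to a single analytic fact: a \emph{proper} definite integral of a computable integrand over a compact interval is computable. The computability of the densities $h_i$ (Gaussian, Laplace, and more generally the computable distributions permitted in {\newlanggauss}) in the sense of recursive real analysis is what ultimately drives the argument.

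First I would reduce to approximating a single state probability. Since $\Prog$ is loop-free, $\run(u,o,\Prog)$ is finite with cardinality $m$ independent of $\precision$, and $\pr{\epsilon,u,o,\Prog} = \sum_{\tau \in \run(u,o,\Prog)} \pr{\epsilon,\tau}$. Hence it suffices to compute each $\pr{\epsilon,\tau}$ to precision $\precision' = \precision + \lceil \log_2 m \rceil$ and add the resulting intervals: the total width is then at most $m \cdot 2^{-\precision'} \leq 2^{-\precision}$. For a fixed $\tau = (\dompartialfunc,\randompartialfunc,G)$ I first evaluate $\evalconstant(\Gconst)$, which only compares rational values and is trivially decidable; if it is $\false$ I return $[0,0]$, and otherwise I am left to approximate $\pr{\epsilon,\evalrandom(\Grandom)}$.

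For that, I would use the decomposition $\pr{\epsilon,\tau} = \bpr{\epsilon,\tau,\thr} + \tpr{\epsilon,\tau,\thr}$. I apply Lemma~\ref{lem:threshold} with precision $\precision'$ to obtain a $\thr$ with $0 \leq \tpr{\epsilon,\tau,\thr} \leq \tfrac{1}{2\cdot 2^{\precision'}}$. By the preceding lemma, $\bpr{\epsilon,\tau,\thr}$ equals a finite sum of integral expressions in normalized form. Each such $\intexpr$ is a nested proper integral (Equation~\ref{eq:normalized-form}): the limits $\theta_i^{\pm}$ are rational constants or linear functions of the outer variables, and the integrands are the densities $h_i$. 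I would then argue, by induction from the innermost integral outward, that $\intexpr$ is a computable real: integration of a jointly computable function over a compact interval is a computable functional, and the parameter-dependent linear limits can be folded into fixed limits by the affine change of variables $y_i \mapsto \theta_i^- + t(\theta_i^+ - \theta_i^-)$, $t\in[0,1]$, whose Jacobian factor $(\theta_i^+-\theta_i^-)$ is itself a computable function of the outer variables. Thus each inner integral is a computable function of the outer variables and the induction goes through, so $\bpr{\epsilon,\tau,\thr}$, being a finite sum of computable reals, can be approximated to width $\tfrac{1}{2\cdot 2^{\precision'}}$.

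Finally I would combine the two halves: if $\bpr{\epsilon,\tau,\thr} \in [L_b, U_b]$ with $U_b - L_b \leq \tfrac{1}{2\cdot 2^{\precision'}}$, then $\pr{\epsilon,\tau} = \bpr{\epsilon,\tau,\thr} + \tpr{\epsilon,\tau,\thr}$ lies in $[L_b,\, U_b + \tfrac{1}{2\cdot 2^{\precision'}}]$, an interval of width at most $2^{-\precision'}$, which is the required per-state approximation. The main obstacle is the analytic computability step for the nested proper integrals --- establishing that the iterated definite integral of the computable densities with rational/linear limits is itself computable to arbitrary precision. Everything else is error-budget bookkeeping; the substantive content lives in recursive real analysis together with the computability of the Gaussian and Laplace densities, and this is precisely why the normalization lemma's guarantee of \emph{finite} limits is essential: it is what lets us avoid the uncomputable improper integrals over unbounded domains, the detour through $\thr$ and $\tpr$ having been arranged exactly to secure that finiteness.
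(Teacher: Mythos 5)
Your proposal is correct and follows essentially the same route as the paper's proof: reduce to a single final state, split off the tail via Lemma~\ref{lem:threshold}, express the body $\bpr{\epsilon,\tau,\thr}$ as a finite sum of normalized integral expressions with finite limits, and invoke the computability of the densities together with closure of computable functions under summation, composition, and definite proper integration. The paper cites that closure result directly where you sketch the inductive affine-change-of-variables argument, and it likewise absorbs the tail into the upper endpoint of the interval; your error-budget bookkeeping is just a more explicit version of the same accounting.
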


\begin{proof}
Thanks to Lemma~\ref{lem:threshold}, it suffices to show that we can compute a $(\precision+1)$-approximation $[L,U]$ of $\bpr{\epsilon,\tau,\thr}.$ From our previous arguments, it follows that
$\bpr{\epsilon,\tau,\thr}$ can be obtained as a finite sum of normalized integral expressions of the form $\intexpr$ shown above. All the constants and coefficients of the linear functions used as limits of integrals in the different summands  can be obtained algorithmically from the values of $\thr,\epsilon,u,o,\Prog.$ The result now follows from the fact that all the probability density functions in {\newlanggauss} are computable, and that the set of computable functions are closed under summation, definite proper integration, and composition~\cite{ko-real,int-dreal}.
\end{proof}

We get immediately from Lemma~\ref{lem:DpVerify1}, Lemma~\ref{lem:DpVerify2} and Theorem~\ref{thm:computeprob} that we can automatically check $(\Depsilon,\delta)$-differential privacy problem of a {\ourlang} program  $\ProgEps$ for all non-critical error parameters (See Definition~\ref{def:critical}).  
\begin{theorem}
\label{thm:decidability}
The problem of determining if a {\ourlang} program $\ProgEps$ is not-$(\Depsilon,\delta)$-differentially private with respect to adjacency relation $\adjacent$ for a given $\epsilon>0,\Depsilon>0,\delta\in \unit$ is recursively enumerable. 

The problem of checking if $\ProgEps$ is $(\Depsilon,\delta)$-differentially private with respect to adjacency relation $\adjacent$ is \emph{decidable} for all $\delta\in \unit$, \emph{except} those in the finite set $\critical.$
\end{theorem}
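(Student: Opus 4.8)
The plan is to derive both statements as essentially immediate corollaries of the soundness and completeness of $\VerifyDP$ (Lemmas~\ref{lem:DpVerify1} and~\ref{lem:DpVerify2}) together with the effective approximability of the output probabilities (Theorem~\ref{thm:computeprob}). First I would invoke Theorem~\ref{thm:computeprob} to guarantee that $\compute(\cdot)$ is computable at \emph{every} precision $\precision$, and then Proposition~\ref{prop:approximable} to conclude that $\computescale(\cdot)$ is computable as well. This discharges the standing hypotheses of both lemmas, so $\VerifyDP$ may be run at any precision and its output interpreted soundly.

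For the first statement (recursive enumerability of non-privacy), I would exhibit the semi-decision procedure that dovetails over precision: run $\VerifyDP(\Prog,\adjacent,\epsilon,\Depsilon,\delta,\precision)$ for $\precision = 1,2,3,\dots$, halting and accepting the instant it returns $\notdiffprivate$. Soundness (Lemma~\ref{lem:DpVerify1}, part~1) ensures acceptance is correct, i.e.\ the procedure halts only on genuinely non-private programs. Completeness (Lemma~\ref{lem:DpVerify2}, part~1) ensures that whenever $\Prog$ fails $(\Depsilon,\delta)$-differential privacy there is a threshold precision $\precision_0$ beyond which $\VerifyDP$ returns $\notdiffprivate$, so the procedure is guaranteed to halt on every non-private instance. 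This is exactly recursive enumerability of the set of non-private instances.

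For the second statement I would restrict to $\delta \notin \critical$ and run the same dovetailing loop, now halting and reporting the verdict as soon as $\VerifyDP$ returns \emph{either} $\diffprivate$ or $\notdiffprivate$. Lemma~\ref{lem:DpVerify2} part~2 supplies the crucial fact that when $\Prog$ is private and $\delta$ is non-critical there is a precision beyond which $\VerifyDP$ returns $\diffprivate$; combined with part~1 for the non-private case, this shows the loop halts with the correct answer on every non-critical instance, giving decidability. To close the argument I would verify that $\critical$ is finite: since $\inputset$ and $\outputset$ range over the finite domain $\sdom$, both $\cU$ and $\cV$ are finite, so $\adjacent$ contains only finitely many pairs $(u,u')$, and $\critical$ contributes at most one value $\deltaadjs{u}{u'}$ per such pair.

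The hard part will not be any calculation but the bookkeeping that makes the dovetailing rigorous: I must be explicit that $\unresolved$ is treated as \emph{``increase $\precision$ and continue''} rather than as a verdict, and that the threshold precision guaranteed by completeness can be taken uniform over the finitely many adjacent pairs—which is precisely where finiteness of $\adjacent$ (hence of $\critical$) is used. The only genuinely new observation beyond the cited results is this finiteness of $\critical$, which follows directly from the finite-domain assumption on inputs and outputs.
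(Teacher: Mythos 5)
Your proposal is correct and matches the paper's own argument: the paper likewise obtains the theorem by iterating $\VerifyDP$ with increasing precision, treating $\unresolved$ as a signal to refine, and invoking Lemmas~\ref{lem:DpVerify1} and~\ref{lem:DpVerify2} together with Theorem~\ref{thm:computeprob} for soundness and eventual termination. Your explicit remark that $\critical$ is finite because $\cU$, $\cV$, and hence $\adjacent$ are finite is the same (implicit) observation the paper relies on, so there is nothing to add.
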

\subsection{Optimization of Integral Expressions}
We will now present a method for transforming an integral expression $\intexpr$ in normalized form into another equivalent integral expression $\intexpr'$ , so that the depth of nesting of $\intexpr'$ is minimal to enable efficient evaluation. Without loss of generality, consider the integral expression $\intexpr$ as given by the Equation \ref{eq:normalized-form}. To do the transformation of $\intexpr$, we define a directed graph $\dependencyGraph_{\intexpr}$, called the dependency graph of $\intexpr$, given by \( \dependencyGraph_{\intexpr} = (V, E) \), where $V\:=\{\rv_0,\dots,\rv_{n-1}\}$ and $E$ is the set of edges $(\rv_{\pi(\ell)},\rv_{\pi(j)})$ such that $y_\ell$ appears in either of the linear expressions  $\theta_j^{-},\theta_j^{+}$, for $0\leq j,\ell<n.$

We propose a method that takes the dependency graph $\graph_{\intexpr}$ of an integral expression as given in the Equation \ref{eq:normalized-form} and generates an equivalent nested integral that has a minimal depth of nesting. 
This method is given by the Algorithm~\ref{alg:optimization}.  This algorithm consists of a recursive function $\GenExpr$ that takes as input a sub-graph of the dependency graph $\graph_{\intexpr}$ of an integral expression as given in Equation \ref{eq:normalized-form}. Initially the algorithm is invoked with $\graph_{\intexpr}$ as the argument. The permutation $\pi$ used in the algorithm, in all recursive invocations, is the permutation $\pi$ that is used in Equation \ref{eq:normalized-form}. 

The algorithm works as follows. Initially, it checks whether the graph \( \graph \) has multiple Weakly Connected Components, in short WCCs~\footnote{A weakly connected component in a directed graph is a maximal sub-graph such that the undirected version of the sub-graph, obtained by replacing each directed edge by an undirected edge, is connected.}. If $\dependencyGraph$ has more than one WCC, the algorithm is invoked recursively for each WCC, and the product of the resulting expressions is returned. Observe that the depth of nesting of the resulting integral expression is the maximum of the depths of integral expressions for the individual WCCs. 
Next, if the graph \( \graph \) consists of a single node \( \rv_{\pi(j)} \), the algorithm returns the integral expression of \( \rv_{\pi(j)} \). In the algorithm, for any node $\rv_{\pi(j)}$, $h_{\pi(j)}$ is the density function of the random variable $\randomvar_{\rv_{\pi(j)}}$ and the bounds of the integral are $\theta^{-}_j$ and $\theta^{+}_j$ which are given in Equation \ref{eq:normalized-form}.
Otherwise, it identifies the source nodes \( \sourcenodes \). A node is a source node if it has  no incoming edge. The algorithm then constructs nested integrals corresponding to source nodes sequentially, and invokes the algorithm on the reduced graph obtained by deleting these source nodes new graph \( \graph \setminus \sourcenodes \). We discuss the impact of our optimization in the experiments presented in Table~\ref{table:optimization}. Theorem \ref{thm:optimized}, given below, states the correctness of Algorithm~\ref{alg:optimization}.

\begin{theorem}
\label{thm:optimized}
For any normalized integral expression $\intexpr$ over $\randomvar_{\rv_1},\ldots,\randomvar_{\rv_n}$, as given by equation \ref{eq:normalized-form}, and with $\graph_{\intexpr}$ being the dependency graph of $\intexpr$, if $\intexpr'$ is the integral expression returned by $\GenExpr(\graph_{\intexpr})$, then $\intexpr'$ is equivalent to $\intexpr$; that is, for all probability density functions $h_0,...,h_{n-1}$, $\intexpr$ and $\intexpr'$ have the same values.
\end{theorem}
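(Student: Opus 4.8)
The plan is to prove $\intexpr' = \intexpr$ by strong induction on the number $n$ of variables (equivalently, nodes of $\graph_{\intexpr}$), following the recursive structure of $\GenExpr$. The only analytic tool I expect to need is Fubini's theorem, which applies without subtlety here: each integrand is a finite product of density functions $h_{\pi(j)}$, which are continuous and nonnegative, and after normalization every integral is proper (finite limits over a bounded region). Before the induction I would record two structural facts. First, since the bound $\theta_j^{\pm}$ of $y_j$ mentions only the variables $y_0,\dots,y_{j-1}$ integrated strictly outside it, every edge $(\rv_{\pi(\ell)},\rv_{\pi(j)})$ satisfies $\ell<j$; hence $\graph_{\intexpr}$ is a DAG, the native order $0,1,\dots,n-1$ is one of its topological sortings, and the restriction of this order to any subset of vertices is again a topological sort. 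Second, a node $\rv_{\pi(j)}$ is a source (no incoming edge) exactly when no $y_\ell$ appears in $\theta_j^{\pm}$, i.e. precisely when its integration limits are the rational constants $\theta_j^{-},\theta_j^{+}$.

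For the base case $n=1$, $\graph_{\intexpr}$ is a single node and $\GenExpr$ returns $\int_{\theta_0^-}^{\theta_0^+} h_{\pi(0)}(y_0)\,dy_0$, which is $\intexpr$ verbatim. In the inductive step I split along the two recursive branches. If $\graph_{\intexpr}$ has weakly connected components $G_1,\dots,G_k$ with $k\geq 2$, then no edge joins distinct components, so no bound of a variable in $G_i$ mentions a variable outside $G_i$. Consequently the integrand factors according to the components over a product region, and Fubini rewrites $\intexpr$ as the product $\prod_i \intexpr_i$, where $\intexpr_i$ is the normalized expression over the variables of $G_i$ taken in the induced order; this is well-defined because the induced order is a topological sort of $G_i$ and every dependency stays within $G_i$. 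Each $G_i$ has fewer than $n$ nodes, so the induction hypothesis gives $\GenExprFunc{G_i}=\intexpr_i$, and the algorithm returns $\prod_i \GenExprFunc{G_i}=\intexpr$.

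If instead $\graph_{\intexpr}$ is weakly connected with $n>1$, let $\sourcenodes\neq\emptyset$ be its source nodes, each carrying constant limits. The crux is to pull all source integrals to the outermost positions. I would do this by a sequence of adjacent swaps: whenever a source variable $y_s$ sits immediately inside a variable $y_k$, the limits of $y_s$ are constant (independent of $y_k$) and the limits of $y_k$ cannot mention $y_s$ (since $y_k$ is outer), so the two integrations range over a fixed rectangle in $(y_k,y_s)$ with all other coordinates frozen, and Fubini swaps them. Repeating this moves every source variable outward, and since sources have no mutual dependencies they may be ordered arbitrarily among themselves. What remains inside is exactly the normalized expression attached to $\graph_{\intexpr}\setminus\sourcenodes$, whose bounds may still reference the now-outer, fixed source variables (which is permitted in the recursive call). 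Applying the induction hypothesis to this strictly smaller graph, the inner block equals $\GenExprFunc{\graph_{\intexpr}\setminus\sourcenodes}$, and assembling the pulled-out source integrals with it reproduces precisely the expression $\GenExpr$ returns. I expect the main obstacle to be the bookkeeping in this last case --- checking that each adjacent swap is genuinely a Fubini rectangle swap and that the residual inner integral is literally the normalized form over the reduced graph --- rather than any deep analytic difficulty.
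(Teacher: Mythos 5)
Your proof is correct and takes essentially the same route as the paper's own (much sketchier) argument: induction on the recursive structure of $\GenExpr$, with Fubini justifying both the factorization across weakly connected components and the reordering that pulls the constant-limit source integrals outward. You supply considerably more detail than the paper does for the source-node case; the only point worth making fully explicit is that the induction hypothesis must be stated for normalized expressions whose limits may contain free (already-integrated) variables, which you acknowledge in passing but do not formalize.
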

\begin{proof}
  The function $\GenExpr (\graph)$, given by Algorithm~\ref{alg:optimization} is a recursive function. It takes the dependency graph $\graph$ of an integral expression $E$ in normalized form as an argument and outputs another integral expression, which we show to be  equivalent to $E$, i.e., has the same value as $E$ for any given values to the free variables in $E$. The proof of the equivalence of $E$ and the expression returned by $\GenExpr(E)$ is sketched below. Observe that $E$ will not have free variables during the first invocation of $\GenExpr(\cdot)$ (i.e., when $E\:=\intexpr$), but in the subsequent recursive invocations the expression $E$ (corresponding to the argument $\graph$ of $\GenExpr(\cdot)$),  may have free variables which appear in the limits of the integrals appearing in $E$.

Each node in $\graph$ represents a random variable and an edge $(u,v)$ in $\graph$ indicates that the variable $u$ appears in the lower or upper limit of the integral corresponding to the variable $v$. 

Consider the case when $\graph$ has more than one weakly connected components. Specifically, assume  $\graph$ has two weakly connected components $\graph_1,\graph_2$. The integrals corresponding to the variables in $\graph_1$ can be moved leftwards to the front retaining their order of occurrence in $E$, resulting in an expression which is a product of two expressions $E_1$ (corresponding to $\graph_1$) and $E_2$ ( corresponding to $\graph_2$). The resulting product expression is equivalent to $E$ since none of the variables in $\graph_1$ depend on those in $\graph_2$. This argument generalizes when $\graph$ has more than two weakly connected components. This argument also holds for the recursive invocation $\GenExpr(\graph')$ in the return statement of the function.
The proof of the theorem follows from the above observations.
\end{proof}

\begin{algorithm}[htbp]

\SetAlgoLined
\DontPrintSemicolon
\KwIn{$\graph$ a Directed Acyclic Graph (DAG)}
\KwOut{Integral Expressions}
\SetKwProg{Fn}{Function}{:}{}
\Fn{\GenExprFunc{$\graph$}}{
\If{$\graph \textnormal{ has WCCs } \graph_1,{\scriptstyle\cdots}, \graph_n$}{
\Return $(\GenExpr(\graph_1))(\GenExpr(\graph_2)){\scriptstyle{\cdots}}(\GenExpr(\graph_n))$\;
}

\If{$\graph \textnormal{ is singleton node } \rv_{\pi(j)} \textnormal{ for some } j,0\leq j<k$}{
     \Return$\displaystyle\int_{\theta_{j}^{-}}^{\theta_{j}^{+}} h_{\pi(j)}(y_j)\,dy_j$\;
}
Let $\sourcenodes = \{\rv_{\pi(j_{0})},{\scriptstyle{\cdots}},\rv_{\pi(j_{\ell-1})}\},0<\ell\leq n$ be source nodes\;

\Return 
${\displaystyle\int_{\theta^{-}_{j_{0}}}^{\theta^{+}_{j_{0}}}\!\!\!\!\!\!{\scriptstyle{\cdots}}\int_{\theta^{-}_{j_{\ell-1}}}^{\theta^{+}_{j_{\ell -1}}} \prod_{i=0}^{\ell-1} h_{\pi(j_{i})}\,\GenExpr(\graph')\,dy_{j_{\ell-1}}{\scriptstyle{\cdots}} dy_{j_{0}}}$\;

where $\graph' = \graph \setminus \sourcenodes$\;
}
\caption{Optimized Integral Expressions Generation}
\label{alg:optimization}
\end{algorithm}


\begin{example}
\label{example:optimization} 
Consider  Example~\ref{example:svtfinalstate} on Page \pageref{example:svtfinalstate}.
For the {\SVTGauss} program given in Algorithm~\ref{fig:SVTG}, there is only one final state $\tau$ corresponding to the output $[0,1]$ on input $[0,1],$ where
 $\tau\:=(\dompartialfunc,\randompartialfunc,G)$ in $\run(u,o,\Prog)$ where 
$\dompartialfunc(out_1)=0$, $\dompartialfunc(out_2)=1$ and $G=\Grandom=\{\rv_1 < \rv_T, \rv_2 \geq \rv_T\}.$


We can write $\bpr{\epsilon,\tau,\thr}$ as the expression \[
 \intexpr  =\!\!\int_{-\thr\cdot \frac 2 \epsilon}^{ \thr \cdot\frac 2 \epsilon}\!\! f_{0, \frac{2}{ \epsilon}}(\rv_T) \! \int_{-\thr \cdot  \frac 4 \epsilon}^{\rv_T}\!\! f_{0,\frac{4}{ \epsilon}}(\rv_1) \! \int_{\rv_T}^{1+\thr \cdot \frac 4 \epsilon}\!\! f_{1, \frac{4}{\epsilon}}(\rv_2)\ d\rv_2 \!\ d\rv_1 \!\ d\rv_T.
\]
Figure~\ref{fig:modeling} shows the  dependency graph for $\intexpr.$
\begin{figure}[ht]
    \centering
        
\begin{tikzpicture}[->, >=Stealth, node distance=2cm,
    every node/.style={circle, draw, minimum size=0.6cm, inner sep=0pt}, scale=1.0]
    \node (r1) at (0,0) {$\rv_1$};
    \node (rT) at (0,0) [right of=r1] {$\rv_T$};
    \node (r2) [right of=rT] {$\rv_2$};
    \draw (rT) -- (r1);
    \draw (rT) -- (r2);
\end{tikzpicture}
\captionsetup{font=footnotesize,labelfont=footnotesize}  
\caption{\footnotesize Dependency graph of $\intexpr$ in Example~\ref{example:optimization}.}
\label{fig:modeling}
\end{figure}
Finally, the optimization algorithm (Algorithm~\ref{alg:optimization}) rewrites $\intexpr$ as
\[
 \!\!\int_{-\thr\cdot \frac 2 \epsilon}^{ \thr \cdot\frac 2 \epsilon}\!\! f_{0, \frac{2}{ \epsilon}}(\rv_T) \! \left(\int_{-\thr \cdot  \frac 4 \epsilon}^{\rv_T}\!\! f_{0,\frac{4}{ \epsilon}}(\rv_1) \!\ d\rv_1 \right) \left(\! \int_{\rv_T}^{1+\thr \cdot \frac 4 \epsilon}\!\! f_{1, \frac{4}{\epsilon}}(\rv_2)\ d\rv_2 \right)  \!\ d\rv_T.
\]
\end{example}

\section{Implementation and Evaluation}
\label{sec:experiments}
We implemented a simplified version of the algorithm, presented earlier, called the tool {\ourtool}. The tool is built using Python and C++ and is designed to handle {\ourlang} programs\footnote{Currently, {\ourtool} only supports comparison amongst sampled values.}, determining whether they are differentially private, not differentially private, or $\unresolved$. Given an input program $\Prog$ and an adjacency relation $\adjacent$, the tool checks differential privacy for fixed values of $\epsilon$, $\Depsilon$ and $\delta$.

{\ourtool} uses three libraries:  {\PLY}~\cite{PLY} for program parsing,  {\igraph}~\cite{igraph} for graph operations, and the {\FLINT} library~\cite{flint} for computing integral expressions. After parsing the program, we evaluate all final states of a program $\Prog$ (as given in Section~\ref{sec:our_language}). Afterwards, we  represent each final state as a graph, perform the ordering of integrals and compute their limits as described in Section~\ref{sec:compute_prob}. Once such integral expressions are generated, we encode them in C++, and use {\FLINT} to compute the interval probabilities of each final state for each input from the input pairs in the adjacency relation. Additionally, the tool refines the precision level to a higher level if the interval is too large to prove or disprove differential privacy. {\ourtool} is available for download at ~\cite{artifact}.

\subsection{Examples}
In this section, we present a limited set of examples from our benchmark suite due to space constraints. 
Details of the remaining examples and additional experimental results are provided in 
\ifdefined\short
\cite{ArxivPaper}.
\else
Appendix~\ref{app:psuedocode} and Appendix~\ref{app:fullexperimentalresults}.
\fi

\paragraph*{{\SVT} Variants.}
We categorized {\SVT} variants into three groups:  {\SVT} with Gaussian noise, {\SVT} with Laplace noise, and {\SVT} with mixed noise (where the threshold is sampled from a Laplace distribution and the queries from a Gaussian distribution, or vice versa). An example from the first category is {\SVTGauss} (Algorithm~\ref{fig:SVT}). 

\paragraph*{{\NoisyMax} and {\NoisyMin}.} In addition to the {\SVT} variants, we examine {\NoisyMax} and {\NoisyMin} with Gaussian or Laplacian noise. The Noisy Max with Gaussian algorithm 
selects the query with the highest value after independently adding Gaussian noise to each query result. This approach obscures the true maximum, ensuring differential privacy~\cite{DR14,DingWWZK18}.

\paragraph*{$k$-{\minmax} and $m$-{\Range}.} The $k$-{\minmax} algorithm (for $k \geq 2$) \cite{csvb23} perturbs the first $k$ queries, computes the noisy maximum and minimum, and then checks whether each subsequent noisy query falls within this range; if not, the algorithm exits. The $m$-{\Range} algorithm \cite{csvb23} perturbs $2m$ thresholds that define a rectangle of $m$ dimensions and checks whether noisy queries lie within these noisy limits. While the original algorithms use Laplace noise, we examine them also when the noise added is Gaussian. 


\subsection{Experiments}
We evaluated {\ourtool} on a macOS computer equipped with a 1.4 GHz Quad-Core Intel Core i5 processor and 8GB of RAM. Each example was executed three times, and the average execution time was recorded across these runs. Recall that when converting improper integrals with infinite limits into proper ones, we replace \( -\infty \) and \( \infty \) with tail bounds. We choose these bounds so that the remaining area in the tails is very small. We use $\thr=4$ for Gaussian distributions and  $\thr=8$ for Laplace distributions while computing these tail bounds and the error. We used an initial precision of $\precision = 16$ bits, which was refined up to 32 bits if differential privacy could not be verified. \emph{All pairs} for input size \( N \) refers to all pairs of input vectors in \(\{0,1\}^N\). Any pair of inputs in the \(\{0,1\}^N\) is adjacent. A \emph{single pair} of input size \( N \) refers to the pair of vector \( (0^N, 0^{N-1}1) \), where the first vector consists of \( N \) zeros and the second vector consists of \( N-1 \) zeros followed by a one.

\paragraph*{Performance and Scalability}
We experimented on variants of {\SVT} with input sizes from 1 to 26, on {\NoisyMax} and {\NoisyMin} with input sizes from 1 to 5, on $m$-{\Range} with $m = 2$ and input sizes from 1 to 3, and on $k$-{\minmax} with $k = 2$ and input sizes 3 and 4. {\ourtool} can verify whether a {\SVTGauss} variant is differentially private for a single input pair up to size $N = 25$. For larger inputs, it times out. Note that our timeout is set to 10 minutes; increasing this limit would allow {\ourtool} to handle more examples. As $N$ increases in the {\SVT} variants, we encounter out-of-memory error ($\experimenterror$) while storing probabilities for the case of all input pairs in the {\SVT} variants as the number of input combinations grows exponentially with $N$.
Performance results are presented in Table~\ref{table:examples} and Figure~\ref{fig:scalability} shows the scaling behavior with respect to input size $N$ and runtime for a single input pair.

\begin{figure}[tp]
    \centering
    \includegraphics[width=0.45\textwidth]{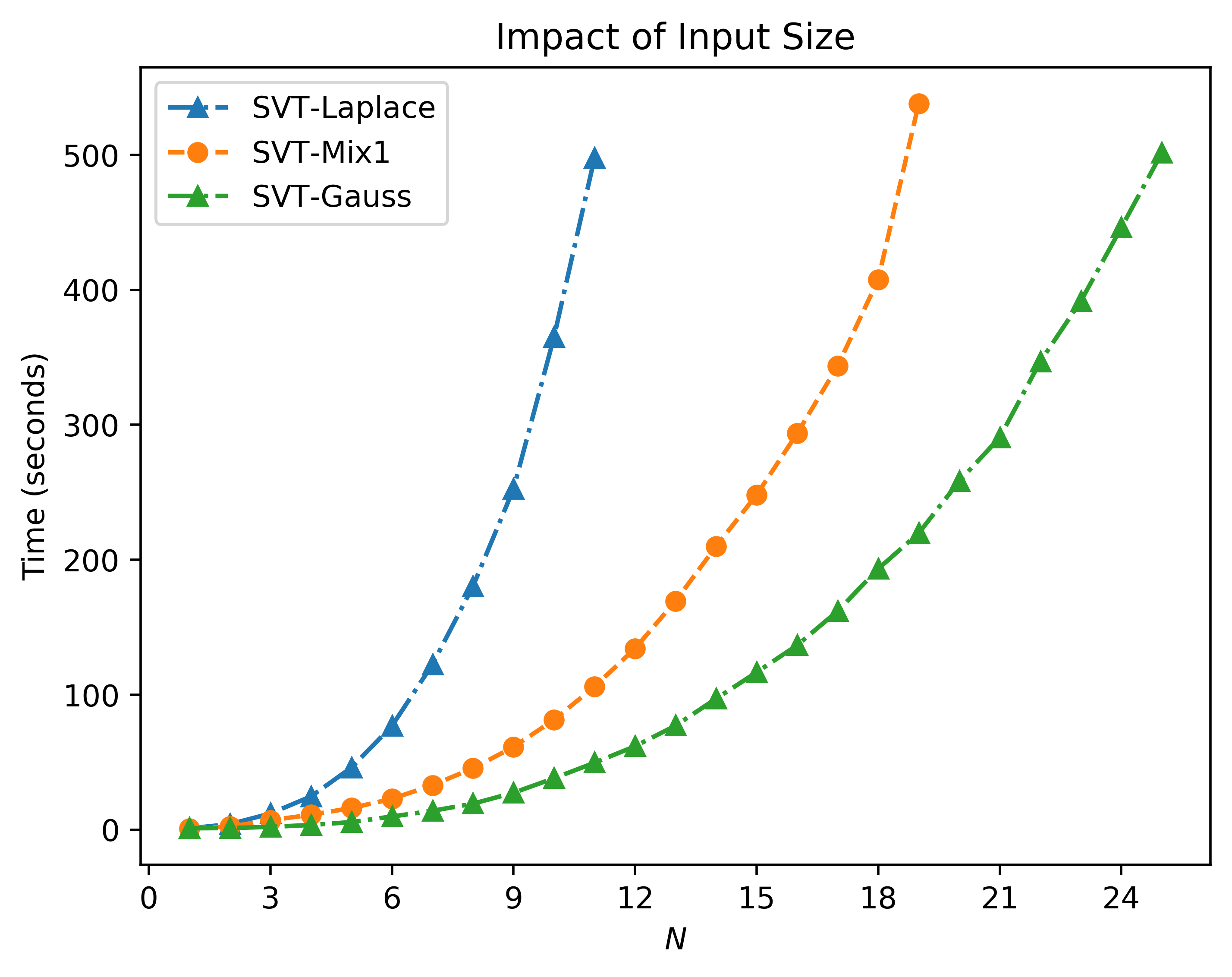}
    \captionsetup{font=footnotesize,labelfont=footnotesize}
    \caption{\footnotesize Scaling behavior of {\BelowThreshold}, {\LaplaceBelowThreshold}, and {\MixOneBelowThreshold} with varying input size $N$ for a single input pair.}
    \label{fig:scalability}
    
     \Description{Impact of Input Size on AboveThreshold, LaplaceAboveThreshold, MixOneAboveThreshold on the single pair.}
\end{figure}

\begin{table}[ht]
    \scriptsize
    \centering
    \renewcommand{\arraystretch}{1.5}
    \setlength\tabcolsep{4pt}
    \begin{tabular}{|lc|ccc|cc|cc|}
    \hline
        \multirow{2}{*}{Example} & \multirow{2}{*}{$N$} & Final & \multirow{2}{*}{$\overline{|G|}$} & Avg. &
        \multicolumn{2}{c|}{Single Pair} & \multicolumn{2}{c|}{All Pairs} \\\cline{6-9}
         
        & & States &  & Depth & DP? & Time (s) & DP? & Time (s) \\
        \hline
         \multirow{3}{*}{\BelowThreshold}  & 2 & 3 & 1.7 & 2.3 & $\checkmark$ & 1.6 & $\checkmark$ & 2.4 \\ 
         & 5 & 6 & 3.3 & 2.7 & $\checkmark$ & 7.9 & $\checkmark$ & 76.7 \\ 
         & 25 & 26 & 13.5 & 2.9 & $\checkmark$ & 441.3 & $\na$ & $\experimenterror$ \\ \hline

         \multirow{2}{*}{\NonPrivBelowThresholdOne}
         & 5 & 6 & 3.3 & 1.7 & $\times$ & 1.0 & $\times$ & 1.4 \\ 
         & 6 & 7 & 3.9 & 1.7 & $\times$ & 1.0 & $\times$ & 2.1 \\ \hline
         \multirow{2}{*}{\NonPrivBelowThresholdTwo}
         & 3 & 4 & 2.2 & 1.5 & $\times$ & 1.0 & $\times$ & 1.0 \\ 
         & 6 & 7 & 3.9 & 1.7 & $\times$ & 1.0 & $\times$ & 1.1 \\ \hline
         \multirow{3}{*}{\MixOneBelowThreshold}
         & 2 & 3 & 1.7 & 2.3 & $\checkmark$ & 2.4 & $\checkmark$ & 4.3 \\ 
         & 5 & 6 & 3.3 & 2.7 & $\checkmark$ & 19.5 & $\checkmark$ & 285.4 \\ 
         & 17 & 18 & 9.4 & 2.9 & $\checkmark$ & 365.7 & $\na$ & $\experimenterror$ \\ \hline
         \multirow{3}{*}{\NoisyMaxGauss}
         & 2 & 2 & 1.0 & 2.0 & $\checkmark$ & 1.0 & $\checkmark$ & 1.3 \\ 
        & 3 & 4 & 2.0 & 2.5 & $\checkmark$ & 1.6 & $\checkmark$ & 3.1 \\ 
        & 4 & 8 & 3.0 & 3.0 & $\checkmark$ & 37.8 & $\checkmark$ & 303.0 \\ \hline 
        \multirow{3}{*}{\NoisyMinGauss} & 2 & 2 & 1.0 & 2.0 & $\checkmark$ & 1.0 & $\checkmark$ & 1.3 \\ 
        & 3 & 4 & 2.0 & 2.5 & $\checkmark$ & 1.6 & $\checkmark$ & 3.1 \\ 
        & 4 & 8 & 3.0 & 3.0 & $\checkmark$ & 36.8 & $\checkmark$ & 303.6 \\ \hline
         \multirow{3}{*}{\mRange}
         & 1 & 7 & 3.0 & 2.5 & $\checkmark$ & 1.5 & $\checkmark$ & 1.8 \\ 
        & 2 & 13 & 4.2 & 3.2 & $\checkmark$ & 171.2 & $\checkmark$ & 344.4 \\ 
        & 3 & 19 & 5.2 & 3.8 & $\na$ & $\timeout$ & $\na$ & $\timeout$ \\ \hline
        \multirow{2}{*}{\kminmax}
         & 3 & 16 & 4.0 & 3.0 & $\checkmark$ & 2.1 & $\checkmark$ & 5.7 \\ 
         & 4 & 28 & 5.1 & 3.4 & $\checkmark$ & 41.2 & $\checkmark$ & 335.7 \\ 
        \hline
    \end{tabular}
    \captionsetup{font=footnotesize,labelfont=footnotesize}    \caption{\footnotesize Summary of Experimental Results for \ourtool. The columns in the table are defined as follows: $N$ is the input size of the program. DP? indicates whether the program is differentially private. \textbf{Final States} denotes the number of final states. $\overline{|G|}$ and Avg. Depth, respectively, denote the average number of conditions and the average nesting depth of integral expressions, per final state. Time is the average time (in seconds) to verify differential privacy, measured over three runs. $\timeout$ indicates a timeout (exceeding 10 minutes), and $\experimenterror$ denotes a run out of memory. Differential privacy checks were performed with $\epsilon = 0.5$ and $\delta = 0.01$, except for {\NonPrivBelowThresholdOne}, which uses $\epsilon = 8$. We used $\Depsilon = 0.5$, except for {\BelowThreshold} and {\MixOneBelowThreshold}, where $\Depsilon = 1.24$.}
    \label{table:examples}
\end{table}

\paragraph*{Impact of Optimization}
We conducted experiments to evaluate the impact of the optimized integral ordering presented in Algorithm~\ref{alg:optimization} on {\BelowThreshold}, {\LaplaceBelowThreshold}, {\MixOneBelowThreshold}, and {\NoisyMaxGauss} with input sizes ranging from $N=1$ to $N=5$. We compared the performance of the unoptimized version (which uses topological sorting for integral ordering) with the optimized version.
The results indicate that the optimized approach leads to a significant reduction in the maximum integration depth for all examples. These improvements also translate into substantial reductions in the overall running time. Table~\ref{table:optimization} summarizes the optimization results.

\begin{table}[htp]
    \centering
    \scriptsize
\setlength\tabcolsep{4pt}
\renewcommand{\arraystretch}{1.5}
    \begin{tabular}{|lc|ccc|ccc|}
    \hline
        \multirow{3}{*}{Example} & \multirow{3}{*}{$N$}  & \multicolumn{3}{c|}{Unoptimized} & \multicolumn{3}{c|}{Optimized} \\
        \cline{3-8}
        &  & Max & Avg. & \multirow{2}{*}{Time (s)} & Max & Avg. & \multirow{2}{*}{Time (s)}\\

        & & Depth & Depth & & Depth & Depth & \\
        \hline
        \multirow{5}{*}{\BelowThreshold}
        & 1 & 2 & 2.0 & 1.03 & 2 & 2.0 & 1.0 \\ 
        & 2 & 3 & 2.67 & 2.27 & 3 & 2.3 & 1.6 \\ 
        & 3 & 4 & 3.25 & $\timeout$ & 3 & 2.5 & 2.77 \\ 
        & 4 & 5 & 3.8 & $\timeout$ & 3 & 2.6 & 3.48 \\ 
        & 5 & 6 & 4.33 & $\timeout$ & 3 & 2.7 & 7.9 \\ \hline
        \multirow{5}{*}{\MixOneBelowThreshold}
         & 1 & 2 & 2.0 & 1.01 & 2 & 2.0 & 0.94 \\ 
        & 2 & 3 & 2.67 & 7.62 & 3 & 2.3 & 2.4 \\ 
         & 3 & 4 & 3.25 & $\timeout$ & 3 & 2.5 & 7.88 \\ 
        & 4 & 5 & 3.8 & $\timeout$ & 3 & 2.6 & 11.8 \\ 
        & 5 & 6 & 4.33 & $\timeout$ & 3 & 2.7 & 19.5 \\ \hline
        \multirow{4}{*}{\NoisyMaxGauss}
         & 2 & 2 & 2.0 & 0.98 & 2 & 2.0 & 1.0 \\ 
         & 3 & 3 & 3.0 & 3.53 & 3 & 2.5 & 1.6 \\ 
        & 4 & 4 & 4.0 & $\timeout$ & 4 & 3.0 & 37.8 \\ 
        & 5 & 5 & 5.0 & $\timeout$ & 5 & 3.5 & $\timeout$ \\ 
        \hline
    \end{tabular}
    \captionsetup{font=footnotesize,labelfont=footnotesize}\caption{\footnotesize Summary of optimization results for {\ourtool}. The columns in the table are as follows: $N$ represents the input size for the program. Time refers to the time taken to check differential privacy for a single pair, measured in seconds and averaged over three executions. $\timeout$ indicates a timeout (exceeding 10 minutes). Avg. Depth refers to the average nested depth of integrals across all executions. Max Depth refers to the maximum nested depth among all executions. Differential privacy checks were performed with $\epsilon = 0.5$ and $\delta = 0.01$. We used $\Depsilon = 1.24$ for {\BelowThreshold} and {\MixOneBelowThreshold}, and $\Depsilon = 0.5$ for {\NoisyMaxGauss}.}
    \label{table:optimization}
\end{table}

\subsection{Comparison with \dipc}
We compare the performance of our tool, {\ourtool}, with {\dipc}~\cite{bcjsv20}. We chose {\dipc} for comparison as it allows for verification of approximate differential privacy (and not just pure differential privacy), i.e., it allows for verifying $(\Depsilon,\delta)$ differential privacy for fixed values of $\epsilon,\Depsilon$ and $\delta.$ Like {\ourtool},
{\dipc} checks differential privacy for programs where both inputs and outputs are drawn from a finite domain and have bounded length. 
The key differences between {\dipc} and {\ourtool} are as follows: 
\begin{enumerate*}
    \item Unlike {\ourtool}, {\dipc}  does not support Gaussian distributions;
    \item {\dipc} can check pure differential privacy for all values of $\epsilon > 0$ as well as for fixed values of $\epsilon$. It also checks $(\Depsilon,\delta)$ differential privacy for fixed values of $\epsilon,\Depsilon$ and $\delta.$  {\ourtool}, on the other hand can only check for fixed values of $\Depsilon$ and $\delta$.
    \item {\dipc} relies on the proprietary software {\mathematica} for checking the encoded formulas, whereas {\ourtool} uses the open-source library {\FLINT}.
\end{enumerate*}

Our comparison is summarized in Table~\ref{table:dipc-comparison}. 
{\dipc} takes more time to determine differential privacy for most examples. 
In some cases, {\ourtool} significantly outperforms {\dipc} in verification time.

\begin{table}[t]
        \scriptsize
    \centering
    \renewcommand{\arraystretch}{1.5}
    \setlength\tabcolsep{4pt}
    \begin{tabular}{|lcc|cc|lc|}
    \hline
        \multirow{2}{*}{Example} &  \multirow{2}{*}{$N$} & \multirow{2}{*}{$\epsilon$} & \multicolumn{2}{c|}{Time (s)} & \multirow{2}{*}{Speedup} & \multirow{2}{*}{DP?} \\ \cline{4-5}    
        & & & {\dipc~\cite{bcjsv20}} & {\ourtool} & &\\
        \hline
        \multirow{4}{*}{\LaplaceBelowThreshold}
         & 1 & 0.5 & 25 & 1 & 25.0 & $\checkmark$ \\ 
        
        & 2 & 0.5 & 106 & 32 & 3.31 & $\checkmark$ \\ 
       
        & 3 & 0.5 & 578 & 279 & 2.07 & $\checkmark$ \\ 

        & 4 & 0.5 & 2850 & 1638 & 1.74 & $\checkmark$ \\ \hline
        \multirow{2}{*}{\LaplaceNoisyMax }
        & 3 & 1 & 278 & 166 & 1.67 & $\checkmark$ \\ 
        & 3 & 0.5 & 311 & 152 & 2.05 & $\checkmark$ \\ \hline
        \multirow{2}{*}{\LaplaceNoisyMin} & 3 & 1 & 180 & 165 & 1.09 & $\checkmark$ \\ 
         & 3 & 0.5 & 286 & 154 & 1.86 & $\checkmark$ \\ \hline
         \NonPrivBelowThresholdLaplace-$4$ & 2 & 1 & 80 & 167 & 0.48 & $\times$ \\ 
        \NonPrivBelowThresholdLaplace-$5$ & 2 & 0.5 & 7 & 1 & 7.0 & $\times$ \\ 
        \NonPrivBelowThresholdLaplace-$6$ & 3 & 1 & 526 & 1075 & 0.49 & $\times$ \\  \hline 
    \end{tabular}
    \captionsetup{font=footnotesize,labelfont=footnotesize}\caption{\footnotesize Summary of comparison with {\dipc}. The table reports performance for both tools. The columns are as follows: $N$ denotes the input size of the program. Time indicates the average time (in seconds) to verify differential privacy over three runs. DP? indicates whether the program is differentially private. Speedup represents the ratio of the time taken by {\dipc} to that of {\ourtool}, indicating the relative performance gain. Differential privacy checks were performed with $\delta = 0$. For all examples in the table, $\Depsilon=\epsilon$.}
     \label{table:dipc-comparison}
\end{table}

\subsection{Discussion}
Some salient insights from our experiments are as follows.
\begin{enumerate}[left= 0pt, topsep=1pt]
    \item {\ourtool} can determine whether programs in {\ourlang} are differentially private or not differentially private for several interesting examples. If a program is not differentially private, it provides a counterexample.
    \item {\ourtool} demonstrates scalability, handling input sizes of up to 25 for some {\SVT} variants in the case of a single input pair.
    \item The optimization algorithm significantly reduces the running time, achieving a substantial decrease, and lowers the nested depth of integral expressions.
    \item     Verification involving the Laplace distribution takes longer than the Gaussian distribution. 
    The Laplace distribution does not have a holomorphic extension to the complex numbers as it involves an absolute value term. This makes integration in {\FLINT} more computationally intensive.

\end{enumerate}

\section{Related Work}
\label{sec:related}

Differential privacy was first introduced in~\cite{dmns06}. For a comprehensive introduction to the topic, techniques, and results, consult the recent book~\cite{DR14} and survey~\cite{local-dp-survey}. Industry implementation of differential privacy include U.S. Census Bureau's LEHD OnTheMap tool~\cite{census}, Google's RAPPOR system
~\cite{rappor}, Apple's DP implementation~\cite{apple-white,apple-patent}, and Microsoft's Telemetry collection~\cite{microsoft}.

The subtle nature of the correctness proofs of differential privacy has prompted an interest in developing automated techniques to verify them. The main approaches to verification include the use of type systems~\cite{dfuzz,AGHK18,rp10,AmorimAGH15,ZK17,CheckDP}, probabilistic coupling~\cite{BKOZ13,bgghs16,BFGGHS16,AH18}, using shadow executions~\cite{WDWKZ19}, and simulation-based methods~\cite{TschantzKD11}, and machine-checked proofs~\cite{sato-popl,sato-ahl}. 

Automated methods for verifying privacy include the use of hypothesis testing~\cite{DingWWZK18}, symbolic differentiation~\cite{BichselGDTV18}, and program analysis~\cite{CheckDP,bcjsv20,csvb23}. Notably,~\cite{CheckDP,csvb23} even allow for verification with unbounded inputs, and for any $\epsilon>0.$ However, they do not allow sampling from Gaussians, and verify only pure differential privacy (i.e., $\delta=0$).

Probabilistic model checking-based approaches are used in~\cite{chatzGP14,LiuWZ18,ChistikovKMP20}, where it is assumed that the program is given as a finite Markov Chain, $\epsilon$ is fixed to a concrete value. Sampling from continuous random variables is not allowed in ~\cite{chatzGP14,LiuWZ18,ChistikovKMP20}.
Almost all the automated methods discussed so far are of checking $\epsilon$-differential privacy; none work for $(\epsilon,\delta)$-differential privacy, except for~\cite{bcjsv20}. 

The decision problem of checking $(\epsilon,\delta)$-differential privacy (and therefore also $\epsilon$-differential privacy) was studied in~\cite{bcjsv20} where the problem was shown to be undecidable in general, and a decidable sub-class of programs that sample from Laplacians was identified. Smaller classes of algorithms that use sampling from Laplacians and comparison of sampled values were identified in~\cite{csv21,csvb23} where it was shown that for them, the verification for \emph{unbounded} inputs is decidable. The complexity of deciding differential privacy for randomized Boolean circuits and while programs is shown to be $\mathbf{coNP^{\#P}}$-complete and $\mathbf{PSPACE}$-complete in~\cite{GaboardiNP19} and~\cite{Gaboardi22} respectively when the number of inputs is finite, probabilistic choices are fair coin tosses and $\eulerv{\epsilon}$ is a rational number. 

 There is often a trade-off between privacy and utility (or accuracy) in differential privacy algorithms. The complementary problem of automatically evaluating their accuracy claims has received attention in the literature \cite{Ba16,SmithHA19,bcksv21}.

None of the automated verification approaches discussed above verify differential privacy of programs that sample from  Gaussians. 

\section{Conclusions and Future Work }
\label{sec:conclusion}
We addressed the problem of verifying differential privacy for parameterized programs 
that support sampling from Gaussian and Laplace distributions, and operate over finite input and output domains. 
For a class of loop-free programs called {\ourlang}, we showed that the problem of determining if a given program $\Prog$, with privacy parameter $\epsilon$, is $(\Depsilon, \delta)$-differential privacy for given rational values $\epsilon > 0$, $\Depsilon > 0$, and $\delta \in \unit$ is almost decidable: it is decidable for all values of $\delta$ in $\unit$, except for a finite set of exceptional values determined by $\Prog$, $\epsilon$, $\Depsilon$, and the adjacency relation. We establish this through an algorithm, {\VerifyDP}, which outputs one of three results: $\diffprivate$, $\notdiffprivate$, or $\unresolved$. Our implementation of {\VerifyDP} leverages the {\FLINT} library for evaluating definite integrals and incorporates several performance optimizations, such as reducing the nesting depth of integrals to enhance scalability on practical benchmarks. The algorithm is implemented in our tool {\ourtool} and has been empirically evaluated on a variety of examples drawn from the literature.

For future work, it would be interesting to explore extensions in three directions: (i) allowing unbounded loops as well as non-linear functions of real variables in programs, (ii) generalizing input and output domains to real or rational values, and (iii) enabling the privacy parameter $\epsilon$ to range over an interval, with $\Depsilon$ (or $\delta$) specified as a function of $\epsilon$ (or $\Depsilon$, respectively).

\paragraph*{Acknowledgements.} 
This work was partially supported by the National Science Foundation: Bishnu Bhusal and Rohit Chadha were partially supported by grant CCF 1900924, A. Prasad Sistla was partially supported by grant CCF 1901069, and Mahesh Viswanathan was partially supported by grants CCF 1901069 and CCF 2007428.


\bibliographystyle{ACM-Reference-Format}
\balance
\bibliography{main}
\ifdefined\short
\else
\clearpage
\appendix

\section{Pseudocode of Examples}
 \label{app:psuedocode}
 We present a short description and pseudo-code of the examples from our benchmark suite.

\subsection{{\SVT} variants} We have the following variants: {\BelowThreshold}, {\LaplaceBelowThreshold}, {\MixOneBelowThreshold}, and {\MixTwoBelowThreshold}. These are similar algorithms that differ only in the distributions from which they sample noise. {\BelowThreshold} samples both the threshold and the queries from a Gaussian distribution. {\LaplaceBelowThreshold} samples both from a Laplace distribution. {\MixOneBelowThreshold} samples the threshold from a Gaussian distribution and the queries from a Laplace distribution, while {\MixTwoBelowThreshold} does the opposite. These algorithms output~$\top$ when the noisy query result is less than or equal to the noisy threshold; otherwise, they output~$\bot$. We also have non-private variants of {\BelowThreshold}: {\NonPrivBelowThresholdOne} and {\NonPrivBelowThresholdTwo}. {\NonPrivBelowThresholdOne} compares noisy queries with a non-noisy threshold, whereas {\NonPrivBelowThresholdTwo} compares a noisy threshold with non-noisy queries. Additionally, we consider four non-private variants of {\LaplaceBelowThreshold}, borrowed from~\cite{bcjsv20}: {\NonPrivBelowThresholdLaplace-3}, {\NonPrivBelowThresholdLaplace-4}, {\NonPrivBelowThresholdLaplace-5}, and {\NonPrivBelowThresholdLaplace-6}.

Another set of examples of {\SVT} variants includes {\AboveThreshold}, {\LaplaceAboveThreshold}, {\MixOneAboveThreshold}, and {\MixTwoAboveThreshold}. These algorithms are also distinguished by the distributions from which they sample noise. {\AboveThreshold} samples both the threshold and the queries from a Gaussian distribution. {\LaplaceAboveThreshold} samples both from a Laplace distribution. {\MixOneAboveThreshold} samples the threshold from a Gaussian distribution and the queries from a Laplace distribution, while {\MixTwoAboveThreshold} does the opposite. These algorithms output~$\top$ when the noisy query result is greater than or equal to the noisy threshold; otherwise, they output~$\bot$. We also have non-private versions of {\AboveThreshold}: {\NonPrivAboveThresholdOne} and {\NonPrivAboveThresholdTwo}. {\NonPrivAboveThresholdOne} compares noisy queries with a non-noisy threshold, whereas {\NonPrivAboveThresholdTwo} compares a noisy threshold with non-noisy queries.

\begin{algorithm}
\DontPrintSemicolon
\SetAlgoLined

\KwIn{$q[1:N]$}
\KwOut{$out[1:N]$}
\;
$\rv_T \gets \gauss{T,  \frac {2 \Delta}{\epsilon}}$\;

  \For{$i\gets 1$ \KwTo $N$}
  {
    $\rv \gets \gauss{q[i], \frac{4\Delta}{\epsilon}}$\;
    \uIf{$ \rv \geq \rv_T $}{
      $out[i] \gets \top$\;
      exit\;
    }\Else{
      $out[i] \gets \bot$}
  }

\caption{\BelowThreshold}
\label{algo:BelowThreshold}
\end{algorithm}

\begin{algorithm}
\DontPrintSemicolon
\SetAlgoLined

\KwIn{$q[1:N]$}
\KwOut{$out[1:N]$}
\;
$\rv_T \gets \lap{T,  \frac {2 \Delta}{\epsilon}}$\;

  \For{$i\gets 1$ \KwTo $N$}
  {
    $\rv \gets \lap{q[i], \frac{4\Delta}{\epsilon}}$\;
    \uIf{$ \rv \geq \rv_T $}{
      $out[i] \gets \top$\;
      exit\;
    }\Else{
      $out[i] \gets \bot$}
  }

\caption{\LaplaceBelowThreshold}
\label{algo:LaplaceBelowThreshold}
\end{algorithm}

\begin{algorithm}
\DontPrintSemicolon
\SetAlgoLined

\KwIn{$q[1:N]$}
\KwOut{$out[1:N]$}
\;
$\rv_T \gets \lap{T,  \frac {2 \Delta}{\epsilon}}$\;
 $count \gets 0$\;
  \For{$i\gets 1$ \KwTo $N$}
  {
    $\rv \gets \gauss{q[i], \frac{4\Delta}{\epsilon}}$\;

    \uIf{$ \rv \geq \rv_T $}{
      $out[i] \gets \top$\;
      $count \gets count + 1$\;
       \If{$count\geq c$} {exit} 
      }
    \Else{
      $out[i] \gets \bot$}
  }

\caption{\MixOneBelowThreshold}
\label{fig:svtmixone}
\end{algorithm}

\begin{algorithm}
\DontPrintSemicolon
\SetAlgoLined

\KwIn{$q[1:N]$}
\KwOut{$out[1:N]$}
\;
$\rv_T \gets \gauss{T,  \frac {2 \Delta}{\epsilon}}$\;
 $count \gets 0$\;
  \For{$i\gets 1$ \KwTo $N$}
  {
    $\rv \gets \lap{q[i], \frac{4\Delta}{\epsilon}}$\;

    \uIf{$ \rv \geq \rv_T $}{
      $out[i] \gets \top$\;
      $count \gets count + 1$\;
       \If{$count\geq c$} {exit} 
      }
    \Else{
      $out[i] \gets \bot$}
  }

\caption{\MixTwoBelowThreshold}
\label{fig:svtmixtwo}
\end{algorithm}

\begin{algorithm}
\DontPrintSemicolon
\SetAlgoLined
\KwIn{$q[1:N]$}
\KwOut{$out[1:N]$}
\;
$\rv_T \gets T$\;

  \For{$i\gets 1$ \KwTo $N$}
  {
    $\rv \gets \gauss{q[i],  \frac {2 \Delta}{\epsilon}}$\;
    \uIf{$ \rv \geq \rv_T $}{
      $out[i] \gets \top$\;
      exit\;
    }\Else{
      $out[i] \gets \bot$}
  }
\caption{\NonPrivBelowThresholdOne}
\label{alg:NonPrivBelowThresholdOne}
\end{algorithm}

\begin{algorithm}
\DontPrintSemicolon
\SetAlgoLined
\KwIn{$q[1:N]$}
\KwOut{$out[1:N]$}
\;
$\rv_T \gets \gauss{T,  \frac {2 \Delta}{\epsilon}}$\;

  \For{$i\gets 1$ \KwTo $N$}
  {
    $\rv \gets q[i]$\;
    \uIf{$ \rv \geq \rv_T $}{
      $out[i] \gets \top$\;
      exit\;
    }\Else{
      $out[i] \gets \bot$}
  }
\caption{\NonPrivBelowThresholdTwo}
\label{alg:NonPrivBelowThresholdTwo}
\end{algorithm}


  \begin{algorithm}
  \DontPrintSemicolon
  \KwIn{$q[1:N]$}
  \KwOut{$out[1:N]$}
  $\rv_T \gets \lap{ T, \frac  {4 \Delta}{\epsilon} }$\;
  $count \gets 0$\;
  \For{$i\gets 1$ \KwTo $N$}
  {
    $\rv\gets \lap{q[i], \frac  {4\Delta}{3\epsilon}}$\;
    $\bv\gets \rv \geq \rv_T$\;
    \uIf{$b$}{
      $out[i] \gets \top$\;
      $count \gets count + 1$\;
      \If{$count\geq c$} {\exit} 
      }
    \Else{
      $out[i] \gets \bot$}
  }
    \caption{\NonPrivBelowThresholdLaplace-$4$}\label{algo:svt4}
  \end{algorithm}

  \begin{algorithm}
  \DontPrintSemicolon
  \KwIn{$q[1:N]$}
  \KwOut{$out[1:N]$}

  $\rv_T \gets \lap{T, \frac  {2 \Delta}{\epsilon}}$\;
  \For{$i\gets 1$ \KwTo $N$}
  {
    $\rv\gets q[i]$\;
    $\bv\gets \rv \geq \rv_T$\;
    \uIf{$b$}{
      $out[i] \gets \top$\;
      }
    \Else{
      $out[i] \gets \bot$}
  }
   \caption{\NonPrivBelowThresholdLaplace-$5$}\label{algo:svt5}
  \end{algorithm}

  \begin{algorithm}
  \DontPrintSemicolon
  \KwIn{$q[1:N]$}
  \KwOut{$out[1:N]$}

  $\rv_T \gets \lap{  T,\frac  {2 \Delta}{\epsilon} }$\;
  \For{$i\gets 1$ \KwTo $N$}
  {
    $\rv\gets \lap{q[i], \frac{2\Delta}{\epsilon}}$\;
    $\bv\gets \rv \geq \rv_T$\;
    \uIf{$b$}{
      $out[i] \gets \top$\;
      }
    \Else{
      $out[i] \gets \bot$}
  }
   \caption{\NonPrivBelowThresholdLaplace-$6$}\label{algo:svt6}
  \end{algorithm}

\begin{algorithm}
\DontPrintSemicolon
\SetAlgoLined
\KwIn{$q[1:N]$}
\KwOut{$out[1:N]$}
\;
$\rv_T \gets \gauss{T,  \frac {2 \Delta}{\epsilon}}$\;

  \For{$i\gets 1$ \KwTo $N$}
  {
    $\rv \gets \gauss{q[i], \frac{4\Delta}{\epsilon}}$\;
    \uIf{$ \rv \leq \rv_T $}{
      $out[i] \gets \top$\;
      exit\;
    }\Else{
      $out[i] \gets \bot$}
  }

\caption{\AboveThreshold}
\label{algo:AboveThreshold}
\end{algorithm}

\begin{algorithm}
\DontPrintSemicolon
\SetAlgoLined

\KwIn{$q[1:N]$}
\KwOut{$out[1:N]$}
\;
$\rv_T \gets \lap{T,  \frac {2 \Delta}{\epsilon}}$\;

  \For{$i\gets 1$ \KwTo $N$}
  {
    $\rv \gets \lap{q[i], \frac{4\Delta}{\epsilon}}$\;
    \uIf{$ \rv \leq \rv_T $}{
      $out[i] \gets \top$\;
      exit\;
    }\Else{
      $out[i] \gets \bot$}
  }

\caption{\LaplaceAboveThreshold}
\label{algo:LaplaceAboveThreshold}
\end{algorithm}

\begin{algorithm}
\DontPrintSemicolon
\SetAlgoLined

\KwIn{$q[1:N]$}
\KwOut{$out[1:N]$}
\;
$\rv_T \gets \lap{T,  \frac {2 \Delta}{\epsilon}}$\;
 $count \gets 0$\;
  \For{$i\gets 1$ \KwTo $N$}
  {
    $\rv \gets \gauss{q[i], \frac{4\Delta}{\epsilon}}$\;

    \uIf{$ \rv \leq \rv_T $}{
      $out[i] \gets \top$\;
      $count \gets count + 1$\;
       \If{$count\geq c$} {exit} 
      }
    \Else{
      $out[i] \gets \bot$}
  }

\caption{\MixOneAboveThreshold}
\label{fig:MixOneAboveThreshold}
\end{algorithm}

\begin{algorithm}
\DontPrintSemicolon
\SetAlgoLined

\KwIn{$q[1:N]$}
\KwOut{$out[1:N]$}
\;
$\rv_T \gets \gauss{T,  \frac {2 \Delta}{\epsilon}}$\;
 $count \gets 0$\;
  \For{$i\gets 1$ \KwTo $N$}
  {
    $\rv \gets \lap{q[i], \frac{4\Delta}{\epsilon}}$\;

    \uIf{$ \rv \leq \rv_T $}{
      $out[i] \gets \top$\;
      $count \gets count + 1$\;
       \If{$count\geq c$} {exit} 
      }
    \Else{
      $out[i] \gets \bot$}
  }

\caption{\MixTwoAboveThreshold}
\label{fig:MixTwoAboveThreshold}
\end{algorithm}

\begin{algorithm}
\DontPrintSemicolon
\SetAlgoLined
\KwIn{$q[1:N]$}
\KwOut{$out[1:N]$}
\;
$\rv_T \gets T$\;

  \For{$i\gets 1$ \KwTo $N$}
  {
    $\rv \gets \gauss{q[i],  \frac {2 \Delta}{\epsilon}}$\;
    \uIf{$ \rv \leq \rv_T $}{
      $out[i] \gets \top$\;
      exit\;
    }\Else{
      $out[i] \gets \bot$}
  }
\caption{\NonPrivAboveThresholdOne}
\label{alg:NonPrivAboveThresholdOne}
\end{algorithm}

\begin{algorithm}
\DontPrintSemicolon
\SetAlgoLined

\KwIn{$q[1:N]$}
\KwOut{$out[1:N]$}
\;
$\rv_T \gets \gauss{T,  \frac {2 \Delta}{\epsilon}}$\;

  \For{$i\gets 1$ \KwTo $N$}
  {
    $\rv \gets q[i]$\;
    \uIf{$ \rv \leq \rv_T $}{
      $out[i] \gets \top$\;
      exit\;
    }\Else{
      $out[i] \gets \bot$}
  }

\caption{\NonPrivAboveThresholdTwo}
\label{algo:NonPrivAboveThreshold}
\end{algorithm}

\subsection{{\NoisyMin} and {\NoisyMax}}
\label{app:nm-code}

We have four variants of {\NoisyMin} and {\NoisyMax}: {\NoisyMinGauss}, {\LaplaceNoisyMin}, {\NoisyMaxGauss}, and {\LaplaceNoisyMax}. {\NoisyMinGauss} and {\LaplaceNoisyMin} are similar algorithms that differ only in the noise distribution: {\NoisyMinGauss} uses Gaussian noise, whereas {\LaplaceNoisyMin} uses Laplace noise. These algorithms add noise to each query and perform an \emph{argmin} operation, returning the index of the noisy minimum value.

Similarly, {\NoisyMaxGauss} and {\LaplaceNoisyMax} are also similar algorithms that differ only in the noise distribution: {\NoisyMaxGauss} uses Gaussian noise, whereas {\LaplaceNoisyMax} uses Laplace noise. These algorithms add noise to each query and perform an \emph{argmax} operation, returning the index of the noisy maximum value.

\begin{algorithm}
\DontPrintSemicolon
\KwIn{$q[1:N]$}
\KwOut{$out$}
\;
NoisyVector $\gets []$\;
\For{$i\gets 1$ \KwTo $N$}{
NoisyVector[i] $\gets$ $\gauss{q[i], \frac{4\Delta}{\epsilon}}$
}
out $\gets$ argmax(NoisyVector)\;
\caption{\NoisyMaxGauss}\label{algo:nmax}
\end{algorithm}

\begin{algorithm}
\DontPrintSemicolon
\KwIn{$q[1:N]$}
\KwOut{$out$}
\;
NoisyVector $\gets []$\;
\For{$i\gets 1$ \KwTo $N$}{
NoisyVector[i] $\gets$ $\gauss{q[i], \frac{4\Delta}{\epsilon}}$
}
out $\gets$ argmin(NoisyVector)\;
\caption{\NoisyMinGauss}\label{algo:noisy-min}
\end{algorithm}

  \begin{algorithm}
  \DontPrintSemicolon
  \KwIn{$q[1:N]$}
  \KwOut{$out$}
  \;
  NoisyVector $\gets []$\;
  \For{$i\gets 1$ \KwTo $N$}{
    NoisyVector[i] $\gets$ $\lap{q[i], \frac{2}{\epsilon}}$
  }
  out $\gets$ argmax(NoisyVector)\;

    \caption{\LaplaceNoisyMax}\label{algo-nmax7}
  \end{algorithm}

  \begin{algorithm}
  \DontPrintSemicolon
  \KwIn{$q[1:N]$}
  \KwOut{$out$}
  \;
  NoisyVector $\gets []$\;
  \For{$i\gets 1$ \KwTo $N$}{
    NoisyVector[i] $\gets$ $\lap{q[i], \frac{2}{\epsilon}}$
  }
  out $\gets$ argmin(NoisyVector)\;

    \caption{\LaplaceNoisyMin}\label{algo:nmin-laplace}
  \end{algorithm}

\subsection{$k$-{\minmax} and $m$-{\Range}} 
The $k$-{\minmax} algorithm (for $k \geq 2$) perturbs the first $k$ queries with Laplace noise, computes the noisy maximum and minimum, and then checks whether each subsequent noisy query falls within this range; if not, the algorithm exits. The $m$-{\Range} algorithm perturbs $2m$ thresholds that define a rectangle of $m$ dimensions and checks whether noisy queries lie within these noisy limits.

\begin{algorithm}
\DontPrintSemicolon
\KwIn{$q[1:m]$}
\KwOut{$out[1:Nm]$}
\;

\For{$j\gets 1$ \KwTo $m$}{
    $\s{low[j]} \gets \lap{T_1[j], \frac{4m}{\epsilon}}$\;
    $\s{high[j]} \gets \lap{T_2[j], \frac{4m}{\epsilon}}$\;
    $out[j] \gets \s{cont}$\;
}

\For{$i\gets 1$ \KwTo $N$}
{
    \For{$j\gets 1$ \KwTo $m$}{
        $\rv\gets \lap{q[m(i-1) + j],\frac{4}{\epsilon}}$\;
        \uIf{$(\rv \geq \s{low[j]}) \wedge (\rv < \s{high[j]})$}{
        $out[m(i-1) + j] \gets \s{cont}$\;  }
        \ElseIf{$((\rv \geq \s{low[j]}) \wedge (\rv > \s{high[j]}))$}{
          $out[m(i-1) + j] \gets \top$\;
          exit
        } 
        \ElseIf{$((\rv < \s{low[j]}) \wedge (\rv < \s{high[j]}))$}{
          $out[m(i-1) + j] \gets \bot $\;
          exit
        }
    }
}
\caption{\mRange}
\label{algo:mRange}
\end{algorithm}

\begin{algorithm}
\DontPrintSemicolon

\KwIn{$q[1:N]$}
\KwOut{$out[1:N]$}
\;

$\s{min}, \s{max} \gets \gauss{q[1], \frac{4k}{\epsilon})}$\;

\For{$i\gets 2$ \KwTo $k$}
{
    
    $\rv\gets \gauss{q[i], \frac{4k}{\epsilon}}$\;
    
    \uIf{$(\rv > \s{max}) \wedge (\rv > \s{min})$}{
        $\s{max} \gets \rv$\;
    }\ElseIf{$(\rv < \s{min}) \wedge (\rv < \s{max})$ } {
        $\s{min} \gets \rv$\;
        
    }
    $out[i] \gets \mathsf{read}$
}

\For{$i\gets k+1$ \KwTo $N$}
{
    $\rv\gets \gauss{q[i], \frac{4}{\epsilon}}$\;
    \uIf{$(\rv \geq \s{min}) \wedge (\rv < \s{max})$}{
      $out[i] \gets \bot$}
    \uElseIf{$(\rv \geq \s{min}) \wedge (\rv \geq \s{max})$}{
      $out[i] \gets \top$\;
      exit
    } \ElseIf{$(\rv < \s{min}) \wedge (\rv < \s{max})$}{
      $out[i] \gets \bot$\;
      exit
    }

}
\caption{\kminmax}
\label{fig:k-min-max}
\end{algorithm}

 \section{Full Experimental Results}
 \label{app:fullexperimentalresults}
 Here, we present the complete experimental results. Table~\ref{table:full-examples} shows the performance results of our benchmark suite. Table~\ref{table:full-optimization} illustrates the impact of optimization on {\BelowThreshold}, {\LaplaceBelowThreshold}, {\MixOneBelowThreshold}, and {\NoisyMaxGauss}. Table~\ref{table:full-dipc-comparison} provides a comparison with the {\dipc} tool. Table~\ref{table:impact-of-delta-eps} demonstrates the effect of varying $\epsilon$ and $\delta$. Section~\ref{sec:comparison-checkdp} discusses the comparison with {\CheckDP}~\cite{CheckDP}.

\begin{table}[htp]
\centering
 \scriptsize
 \renewcommand{\arraystretch}{1.5}
    \setlength\tabcolsep{4pt}
\begin{tabular}{c|cccccccc}
\toprule
\diagbox{$\epsilon$}{$\delta$} & 0.01 & 0.05 & 0.10 & 0.20 & 0.30 & 0.40 & 0.50 & 0.60 \\
\midrule
0.05 & 7.13 & 6.37 & 6.59 & 6.24 & 6.38 & 8.35 & 7.95 & 6.68 \\
0.08 & 6.71 & 7.62 & 8.23 & 6.82 & 7.35 & 7.59 & 7.49 & 7.58 \\
0.09 & 7.35 & 7.05 & 6.51 & 6.45 & 7.00 & 9.25 & 9.76 & 9.49 \\
0.10 & 6.80 & 6.40 & 6.45 & 6.50 & 6.24 & 6.09 & 6.12 & 6.01 \\
0.20 & 6.85 & 6.17 & 5.88 & 5.96 & 6.05 & 6.35 & 5.96 & 5.83 \\
0.30 & 5.89 & 5.84 & 5.81 & 6.27 & 6.05 & 5.82 & 5.93 & 5.78 \\
0.40 & 5.80 & 6.35 & 6.11 & 5.99 & 6.01 & 6.08 & 5.99 & 5.96 \\
0.50 & 5.62 & 5.57 & 5.68 & 5.58 & 5.55 & 5.76 & 5.66 & 5.54 \\
0.60 & 6.06 & 7.01 & 6.53 & 6.07 & 6.06 & 5.98 & 6.14 & 6.05 \\
0.70 & 8.57 & 6.95 & 6.72 & 6.38 & 5.99 & 6.47 & 6.90 & 6.07 \\
0.80 & 6.04 & 5.98 & 5.90 & 6.01 & 6.06 & 6.00 & 5.93 & 5.97 \\
0.90 & 5.89 & 5.97 & 5.96 & 5.93 & 5.95 & 6.02 & 5.98 & 6.14 \\
1.00 & 5.64 & 5.68 & 5.70 & 5.65 & 5.67 & 5.66 & 5.63 & 5.72 \\
\bottomrule
\end{tabular}
\captionsetup{font=footnotesize,labelfont=footnotesize}\caption{\footnotesize Summary of the impact of varying $\epsilon$ and $\delta$ on the {\AboveThreshold} example with an input size of $N = 5$. In all cases, we used $\Depsilon = \epsilon$.}
\label{table:impact-of-delta-eps}
\end{table}

\begin{table}[htp]
        \scriptsize
    \centering
    \renewcommand{\arraystretch}{1.5}
    \setlength\tabcolsep{4pt}
    \begin{tabular}{|lcc|cc|lc|}
    \hline
        \multirow{2}{*}{Example} &  \multirow{2}{*}{$N$} & \multirow{2}{*}{$\epsilon$} & \multicolumn{2}{c|}{Time} & \multirow{2}{*}{Speedup} & \multirow{2}{*}{DP?} \\ \cline{4-5}    
        & & & {\dipc~\cite{bcjsv20}} & {\ourtool} & &\\
        \hline
        \multirow{8}{*}{\LaplaceBelowThreshold}
         & 1 & 1 & 52 & 1 & 52.0 & $\checkmark$ \\ 
         & 1 & 0.5 & 25 & 1 & 25.0 & $\checkmark$ \\ 
        & 2 & 1 & 104 & 26 & 4.0 & $\checkmark$ \\ 
        & 2 & 0.5 & 106 & 32 & 3.31 & $\checkmark$ \\ 
        & 3 & 1 & 558 & 250 & 2.23 & $\checkmark$ \\ 
        & 3 & 0.5 & 578 & 279 & 2.07 & $\checkmark$ \\ 
        & 4 & 1 & 2814 & 1481 & 1.9 & $\checkmark$ \\ 
        & 4 & 0.5 & 2850 & 1638 & 1.74 & $\checkmark$ \\ \hline
         \multirow{8}{*}{\LaplaceAboveThreshold}
         & 1 & 1 & 29 & 1 & 29.0 & $\checkmark$ \\ 
        & 1 & 0.5 & 23 & 1 & 23.0 & $\checkmark$ \\ 
        & 2 & 1 & 145 & 25 & 5.8 & $\checkmark$ \\ 
        & 2 & 0.5 & 163 & 22 & 7.41 & $\checkmark$ \\ 
        & 3 & 1 & 906 & 227 & 3.99 & $\checkmark$ \\ 
        & 3 & 0.5 & 1134 & 204 & 5.56 & $\checkmark$ \\ 
        & 4 & 1 & 4317 & 1684 & 2.56 & $\checkmark$ \\ 
        & 4 & 0.5 & 4887 & 1285 & 3.8 & $\checkmark$ \\ \hline
        \multirow{2}{*}{\LaplaceNoisyMax }
        & 3 & 1 & 278 & 166 & 1.67 & $\checkmark$ \\ 
        & 3 & 0.5 & 311 & 152 & 2.05 & $\checkmark$ \\ \hline
        \multirow{2}{*}{\LaplaceNoisyMin} & 3 & 1 & 180 & 165 & 1.09 & $\checkmark$ \\ 
         & 3 & 0.5 & 286 & 154 & 1.86 & $\checkmark$ \\ \hline
         \NonPrivBelowThresholdLaplace-$4$ & 2 & 1 & 80 & 167 & 0.48 & $\times$ \\ 
        \NonPrivBelowThresholdLaplace-$5$ & 2 & 0.5 & 7 & 1 & 7.0 & $\times$ \\ 
        \NonPrivBelowThresholdLaplace-$6$ & 3 & 1 & 526 & 1075 & 0.49 & $\times$ \\  \hline 
    \end{tabular}
    \captionsetup{font=footnotesize,labelfont=footnotesize}\caption{\footnotesize Summary of comparison with {\dipc}. The table reports performance for both tools. The columns are as follows: $N$ denotes the input size of the program. Time indicates the average time (in seconds) to verify differential privacy over three runs. DP? indicates whether the program is differentially private. Speedup represents the ratio of the time taken by {\dipc} to that of {\ourtool}, indicating the relative performance gain. Differential privacy checks were performed with $\delta = 0$. For all examples in the table, $\Depsilon=\epsilon$.}
     \label{table:full-dipc-comparison}
\end{table}

\begin{table}[htp]
    \centering
    \scriptsize
\setlength\tabcolsep{4pt}
\renewcommand{\arraystretch}{1.5}
    \begin{tabular}{|lc|ccc|ccc|}
    \hline
        \multirow{3}{*}{Example} & \multirow{3}{*}{$N$}  & \multicolumn{3}{c|}{Unoptimized} & \multicolumn{3}{c|}{Optimized} \\
        \cline{3-8}
        &  & Max & Avg. & \multirow{2}{*}{Time} & Max & Avg. & \multirow{2}{*}{Time}\\

        & & Depth & Depth & & Depth & Depth & \\
        \hline
        \multirow{5}{*}{\BelowThreshold}
        & 1 & 2 & 2.0 & 1.03 & 2 & 2.0 & 1.0 \\ 
        & 2 & 3 & 2.67 & 2.27 & 3 & 2.3 & 1.6 \\ 
        & 3 & 4 & 3.25 & $\timeout$ & 3 & 2.5 & 2.77 \\ 
        & 4 & 5 & 3.8 & $\timeout$ & 3 & 2.6 & 3.48 \\ 
        & 5 & 6 & 4.33 & $\timeout$ & 3 & 2.7 & 7.9 \\ \hline
        \multirow{5}{*}{\LaplaceBelowThreshold}
         & 1 & 2 & 2.0 & 1.04 & 2 & 2.0 & 1.0 \\ 
         & 2 & 3 & 2.67 & 14.21 & 3 & 2.3 & 5.1 \\ 
         & 3 & 4 & 3.25 & $\timeout$ & 3 & 2.5 & 12.79 \\ 
         & 4 & 5 & 3.8 & $\timeout$ & 3 & 2.6 & 26.2 \\ 
        & 5 & 6 & 4.33 & $\timeout$ & 3 & 2.7 & 47.6 \\ \hline
        \multirow{5}{*}{\MixOneBelowThreshold}
         & 1 & 2 & 2.0 & 1.01 & 2 & 2.0 & 0.94 \\ 
        & 2 & 3 & 2.67 & 7.62 & 3 & 2.3 & 2.4 \\ 
         & 3 & 4 & 3.25 & $\timeout$ & 3 & 2.5 & 7.88 \\ 
        & 4 & 5 & 3.8 & $\timeout$ & 3 & 2.6 & 11.8 \\ 
        & 5 & 6 & 4.33 & $\timeout$ & 3 & 2.7 & 19.5 \\ \hline
        \multirow{4}{*}{\NoisyMaxGauss}
         & 2 & 2 & 2.0 & 0.98 & 2 & 2.0 & 1.0 \\ 
         & 3 & 3 & 3.0 & 3.53 & 3 & 2.5 & 1.6 \\ 
        & 4 & 4 & 4.0 & $\timeout$ & 4 & 3.0 & 37.8 \\ 
        & 5 & 5 & 5.0 & $\timeout$ & 5 & 3.5 & $\timeout$ \\ 
        \hline
    \end{tabular}
    \captionsetup{font=footnotesize,labelfont=footnotesize}\caption{\footnotesize Summary of optimization results for {\ourtool}. The columns in the table are as follows: $N$ represents the input size for the program. Time refers to the time taken to check differential privacy for a single pair, measured in seconds and averaged over three executions. $\timeout$ indicates a timeout (exceeding 10 minutes). Avg. Depth refers to the average nested depth of integrals across all executions. Max Depth refers to the maximum nested depth among all executions. The optimized algorithm corresponds to Algorithm~\ref{alg:optimization}. Differential privacy checks were performed with $\epsilon = 0.5$ and $\delta = 0.01$.}
    \label{table:full-optimization}
\end{table}

\begin{table}[htp]
    \scriptsize
    \centering
    \renewcommand{\arraystretch}{1.5}
    \setlength\tabcolsep{4pt}
    \begin{tabular}{|lc|ccc|cc|cc|}
    \hline
        \multirow{2}{*}{Example} & \multirow{2}{*}{$N$} & Final & \multirow{2}{*}{$\overline{|G|}$} & Avg. &
        \multicolumn{2}{c|}{Single Pair} & \multicolumn{2}{c|}{All Pairs} \\\cline{6-9}
         
        & & States &  & Depth & DP? & Time & DP? & Time \\
        \hline
         \multirow{3}{*}{\BelowThreshold}  & 2 & 3 & 1.7 & 2.3 & $\checkmark$ & 1.6 & $\checkmark$ & 2.4 \\ 
         & 5 & 6 & 3.3 & 2.7 & $\checkmark$ & 7.9 & $\checkmark$ & 76.7 \\ 
         & 25 & 26 & 13.5 & 2.9 & $\checkmark$ & 441.3 & $\na$ & $\experimenterror$ \\ \hline

         \multirow{2}{*}{\NonPrivBelowThresholdOne}
         & 5 & 6 & 3.3 & 1.7 & $\times$ & 1.0 & $\times$ & 1.4 \\ 
         & 6 & 7 & 3.9 & 1.7 & $\times$ & 1.0 & $\times$ & 2.1 \\ \hline
         \multirow{2}{*}{\NonPrivBelowThresholdTwo}
         & 3 & 4 & 2.2 & 1.5 & $\times$ & 1.0 & $\times$ & 1.0 \\ 
         & 6 & 7 & 3.9 & 1.7 & $\times$ & 1.0 & $\times$ & 1.1 \\ \hline
         \multirow{3}{*}{\AboveThreshold}
         & 2 & 3 & 1.7 & 2.3 & $\checkmark$ & 1.3 & $\checkmark$ & 1.5 \\ 
         & 5 & 6 & 3.3 & 2.7 & $\checkmark$ & 5.7 & $\checkmark$ & 72.3 \\ 
         & 25 & 26 & 13.5 & 2.9 & $\checkmark$ & 501.6 & $\na$ & $\experimenterror$ \\ \hline
         \multirow{2}{*}{\NonPrivAboveThresholdOne}
         & 5 & 6 & 3.3 & 1.7 & $\times$ & 1.0 & $\times$ & 1.2 \\ 
         & 6 & 7 & 3.9 & 1.7 & $\times$ & 1.0 & $\times$ & 1.7 \\ \hline
         \multirow{2}{*}{\NonPrivAboveThresholdTwo}
         & 3 & 4 & 2.2 & 1.5 & $\times$ & 1.0 & $\times$ & 0.9 \\ 
         & 6 & 7 & 3.9 & 1.7 & $\times$ & 1.0 & $\times$ & 1.0 \\ \hline
         \multirow{3}{*}{\LaplaceBelowThreshold} & 2 & 3 & 1.7 & 2.3 & $\checkmark$ & 5.1 & $\checkmark$ & 8.8 \\ 
         & 5 & 6 & 3.3 & 2.7 & $\checkmark$ & 47.6 & $\na$ & $\timeout$ \\ 
         & 11 & 12 & 6.4 & 2.8 & $\checkmark$ & 500.2 & $\na$ & $\timeout$ \\ \hline
         \multirow{3}{*}{\LaplaceAboveThreshold}
         & 2 & 3 & 1.7 & 2.3 & $\checkmark$ & 4.1 & $\checkmark$ & 8.8 \\ 
         & 5 & 6 & 3.3 & 2.7 & $\checkmark$ & 46.0 & $\na$ & $\timeout$ \\ 
         & 11 & 12 & 6.4 & 2.8 & $\checkmark$ & 497.8 & $\na$ & $\timeout$ \\ \hline
         \multirow{3}{*}{\MixOneBelowThreshold}
         & 2 & 3 & 1.7 & 2.3 & $\checkmark$ & 2.4 & $\checkmark$ & 4.3 \\ 
         & 5 & 6 & 3.3 & 2.7 & $\checkmark$ & 19.5 & $\checkmark$ & 285.4 \\ 
         & 17 & 18 & 9.4 & 2.9 & $\checkmark$ & 365.7 & $\na$ & $\experimenterror$ \\ \hline
         \multirow{3}{*}{\MixOneAboveThreshold}
         & 2 & 3 & 1.7 & 2.3 & $\checkmark$ & 2.3 & $\checkmark$ & 4.1 \\ 
         & 5 & 6 & 3.3 & 2.7 & $\checkmark$ & 16.1 & $\checkmark$ & 261.0 \\ 
        & 17 & 18 & 9.4 & 2.9 & $\checkmark$ & 343.6 & $\na$ & $\experimenterror$ \\ \hline
        \multirow{3}{*}{\MixTwoBelowThreshold} & 2 & 3 & 1.7 & 2.3 & $\checkmark$ & 7.2 & $\checkmark$ & 14.4 \\ 
         & 5 & 6 & 3.3 & 2.7 & $\checkmark$ & 72.2 & $\na$ & $\timeout$ \\ 
        & 10 & 11 & 5.9 & 2.8 & $\checkmark$ & 524.6 & $\na$ & $\experimenterror$ \\ \hline
        \multirow{3}{*}{\MixTwoAboveThreshold}
         & 2 & 3 & 1.7 & 2.3 & $\checkmark$ & 7.1 & $\checkmark$ & 13.4 \\ 
         & 5 & 6 & 3.3 & 2.7 & $\checkmark$ & 67.7 & $\na$ & $\timeout$ \\ 
         & 10 & 11 & 5.9 & 2.8 & $\checkmark$ & 506.5 & $\na$ & $\experimenterror$ \\ \hline
         \multirow{3}{*}{\NoisyMaxGauss}
         & 2 & 2 & 1.0 & 2.0 & $\checkmark$ & 1.0 & $\checkmark$ & 1.3 \\ 
        & 3 & 4 & 2.0 & 2.5 & $\checkmark$ & 1.6 & $\checkmark$ & 3.1 \\ 
        & 4 & 8 & 3.0 & 3.0 & $\checkmark$ & 37.8 & $\checkmark$ & 303.0 \\ \hline 
        \multirow{3}{*}{\NoisyMinGauss} & 2 & 2 & 1.0 & 2.0 & $\checkmark$ & 1.0 & $\checkmark$ & 1.3 \\ 
        & 3 & 4 & 2.0 & 2.5 & $\checkmark$ & 1.6 & $\checkmark$ & 3.1 \\ 
        & 4 & 8 & 3.0 & 3.0 & $\checkmark$ & 36.8 & $\checkmark$ & 303.6 \\ \hline
        \multirow{2}{*}{\LaplaceNoisyMax}
         & 3 & 4 & 2.0 & 2.5 & $\checkmark$ & 13.1 & $\checkmark$ & 47.2 \\ 
        & 4 & 8 & 3.0 & 3.0 & $\na$ & $\timeout$ & $\na$ & $\timeout$ \\ \hline
        \multirow{2}{*}{\LaplaceNoisyMin}
         & 3 & 4 & 2.0 & 2.5 & $\checkmark$ & 9.8 & $\checkmark$ & 45.6 \\ 
         & 4 & 8 & 3.0 & 3.0 & $\na$ & $\timeout$ & $\na$ & $\timeout$ \\ \hline
         \multirow{3}{*}{\mRange}
         & 1 & 7 & 3.0 & 2.5 & $\checkmark$ & 1.5 & $\checkmark$ & 1.8 \\ 
        & 2 & 13 & 4.2 & 3.2 & $\checkmark$ & 171.2 & $\checkmark$ & 344.4 \\ 
        & 3 & 19 & 5.2 & 3.8 & $\na$ & $\timeout$ & $\na$ & $\timeout$ \\ \hline
        \multirow{2}{*}{\kminmax}
         & 3 & 16 & 4.0 & 3.0 & $\checkmark$ & 2.1 & $\checkmark$ & 5.7 \\ 
         & 4 & 28 & 5.1 & 3.4 & $\checkmark$ & 41.2 & $\checkmark$ & 335.7 \\ 
        \hline
    \end{tabular} \captionsetup{font=footnotesize,labelfont=footnotesize}    \caption{\footnotesize Summary of Experimental Results for \ourtool. The columns in the table are defined as follows: $N$ is the input size of the program. DP? indicates whether the program is differentially private. \textbf{Final States} denotes the number of final states. $\overline{|G|}$ and Avg. Depth, respectively, denote the average number of conditions and the average nesting depth of integral expressions, per final state. Time is the average time (in seconds) to verify differential privacy, measured over three runs. $\timeout$ indicates a timeout (exceeding 10 minutes), and $\experimenterror$ denotes a run out of memory. Differential privacy checks were performed with $\epsilon = 0.5$ and $\delta = 0.01$, except for {\NonPrivBelowThresholdOne}, which uses $\epsilon = 8$. We used $\Depsilon = 0.5$, except for {\BelowThreshold}, {\AboveThreshold}  {\MixOneBelowThreshold}, {\MixOneAboveThreshold}, {\MixTwoBelowThreshold} and {\MixTwoAboveThreshold}, where $\Depsilon = 1.24$.}
    \label{table:full-examples}
\end{table}

\subsection{Comparison with \CheckDP}
\label{sec:comparison-checkdp}
We have compared our tool with {\CheckDP}~\cite{CheckDP}. However, our tool verifies privacy only for fixed values of $\epsilon$, whereas {\CheckDP} verifies for all values of $\epsilon > 0$. 
Additionally, our tool supports checking of $(\epsilon, \delta)$-differential privacy and can handle programs with Gaussian distributions, whereas {\CheckDP} only supports $\epsilon$-differential privacy and programs with Laplace distributions. Table~\ref{table:comparison-checkdp} presents the results of the comparison.

\begin{table}[htp]
        \scriptsize
    \centering
    \renewcommand{\arraystretch}{1.5}
    \setlength\tabcolsep{4pt}
    \begin{tabular}{|lc|cc|}
        \hline
        \multirow{2}{*}{Example}   &  \multirow{2}{*}{$N$}  & \multicolumn{2}{c|}{Time} \\ \cline{3-4}
        & & {\CheckDP} & {\ourtool} \\ \hline
        \LaplaceBelowThreshold & 1 & 29.9 & 1.3 \\ 
        \LaplaceMrange & 1 & $\timeout$ & 19.3 \\ 
        \LaplaceKminmax & 3 & $\timeout$ & 200.3 \\ 
        \hline
    \end{tabular}    \captionsetup{font=footnotesize,labelfont=footnotesize}\caption{\footnotesize Summary of comparison with {\CheckDP}. The table reports performance for both tools. The columns are as follows: $N$ denotes the input size of the program. Time indicates the average time (in seconds) to verify differential privacy over three runs. Differential privacy checks were performed with $ \Depsilon = \epsilon = 0.5$ and $\delta = 0$ for {\ourtool}.}
        \label{table:comparison-checkdp}

\end{table}
\fi
\end{document}
\endinput
